\def\arxivversion{}
\ifdefined\arxivversion
\documentclass[acmsmall,nonacm]{acmart}
\else
\documentclass[acmsmall,anonymous,screen,review]{acmart}
\fi
\AtBeginDocument{%
  }

\setcopyright{acmlicensed}
\copyrightyear{2026}
\acmYear{2026}
\acmDOI{XXXXXXX.XXXXXXX}
\acmConference[FSE '27]{ACM International Conference on the Foundations of Software Engineering}{July 12--16,
  2027}{Shenzhen, China}
\acmISBN{978-1-4503-XXXX-X/2027/07}

\usepackage[T1]{fontenc}
\usepackage{amsmath,amsfonts}
\allowdisplaybreaks
\usepackage{mathtools}

\usepackage{graphicx}
\usepackage{textcomp}
\usepackage{xcolor}
\usepackage{stmaryrd}
\usepackage{listings}
\usepackage{xspace}
\usepackage{algorithm}
\usepackage{wasysym}
\usepackage{algpseudocode}
\usepackage{mathpartir}
\usepackage{tikz}
\usepackage{booktabs}
\usepackage{paralist}
\usepackage{comment}
\usepackage{hyperref}
\usepackage{cleveref}
\usepackage{dot2texi}

\newcommand{\rot}[1]{\rotatebox{90}{\shortstack[l]{#1}}}
\newif\ifextended
\ifdefined\conferenceversion\extendedfalse\else\extendedtrue\fi
\ifdefined\arxivversion
\newcommand{\repourl}{https://github.com/runtime-enforcement/EnfFlash}
\else
\newcommand{\repourl}{https://anonymous.4open.science/r/EnfFlash-6400}
\fi

\usetikzlibrary{arrows.meta,positioning,fit,backgrounds}

\newcommand{\leanicon}{\raisebox{-0.15ex}{\includegraphics[height=1.5ex]{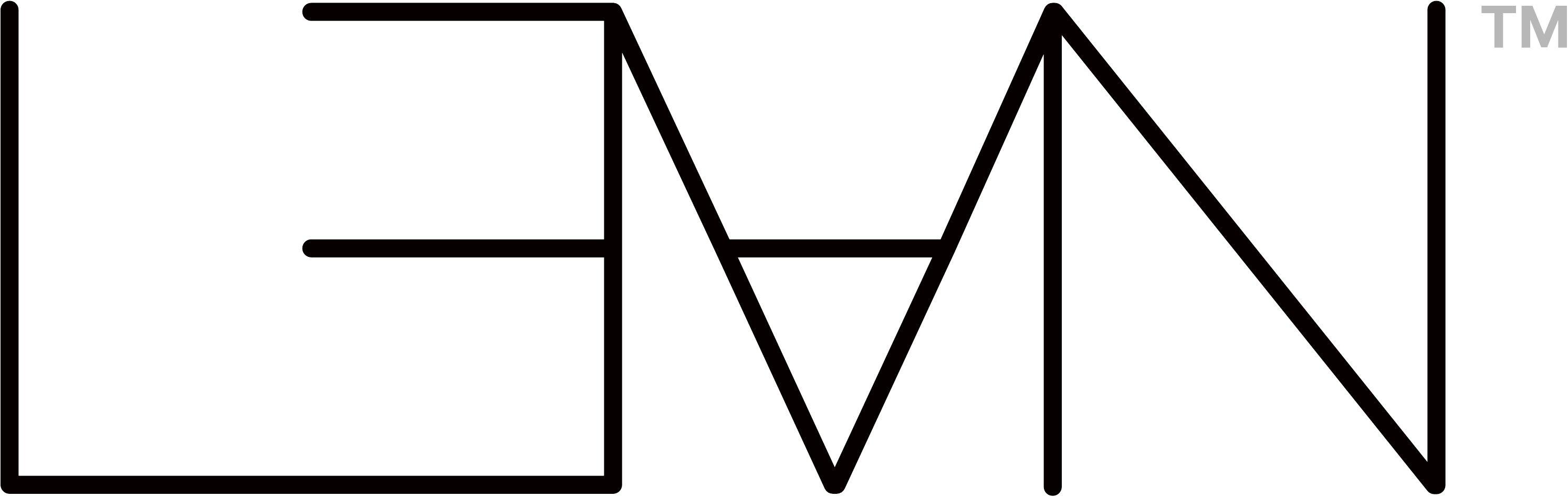}}}
\newcommand{\leanref}[1]{{\upshape\unskip\nobreak\hfill\penalty50\hskip1em\null\nobreak\hfill
  \mbox{[\,\leanicon\;\texttt{\small\detokenize{#1}}\,]}}}

\newcommand{\EnfFlash}{\textsc{EnfFlash}}
\newcommand{\EnfGuard}{\textsc{EnfGuard}}
\newcommand{\WhyEnf}{\textsc{WhyEnf}}
\newcommand{\MonPoly}{\textsc{MonPoly}}
\newcommand{\EnfPoly}{\textsc{EnfPoly}}

\newcommand{\underv}[1]{\llbracket\, #1 \,\rrbracket_{v}}
\newcommand{\Sequence}[2][]{\left\langle{#2}\right\rangle_{#1}}
\newcommand{\Pow}[1]{\mathcal{P}(#1)}

\newcommand{\Databases}{\mathbb{DB}}

\newcommand{\Variables}{\mathbb{V}}
\newcommand{\Traces}{\mathbb{T}}
\newcommand{\Intervals}{\mathbb{I}}

\newcommand{\emptytrace}{\varepsilon}

\newcommand{\sem}[1]{\llbracket #1 \rrbracket}

\newcommand{\use}{\mathsf{use}}
\newcommand{\consent}{\mathsf{consent}}
\newcommand{\delete}{\mathsf{delete}}
\newcommand{\deleteReq}{\mathsf{deletion\_request}}

\newcommand{\revoke}{\mathsf{revoke}}

\begin{document}

\title{Practical Runtime Enforcement of First-Order Temporal Requirements}

\author{François Hublet}
\email{francois.hublet@inf.ethz.ch}
\affiliation{%
  \department{Department of Computer Science}
  \institution{ETH Zürich}
  \city{Zurich}
  \country{Switzerland}}

\author{David Basin}
\email{basin@inf.ethz.ch}
\affiliation{%
  \department{Department of Computer Science}
  \institution{ETH Zürich}
  \city{Zurich}
  \country{Switzerland}}

\author{Srđan Krstić}
\email{srdan.krstic@di.ku.dk}
\affiliation{%
  \department{Department of Computer Science}
  \institution{University of Copenhagen}
  \city{Copenhagen}
  \country{Denmark}}

\renewcommand{\shortauthors}{Hublet, Basin, and Krstić}

\newcommand{\seq}[1]{\langle #1 \rangle}
\definecolor{kwcolor}{rgb}{0.09,0.37,0.65}
\newcommand{\kw}[1]{{\color{kwcolor}\text{\ttfamily\bfseries#1}}}
\newcommand{\kwc}[1]{\text{\ttfamily\bfseries#1}}

\newcommand{\Events}{\mathbb{E}}
\newcommand{\Domain}{\mathbb{D}}
\newcommand{\Functions}{\mathbb{F}}
\newcommand{\mset}[1]{\langle\!\langle #1 \rangle\!\rangle}
\newcommand{\Nat}{\mathbb{N}}
\newcommand{\guardable}{\mathop{\mathrm{Guardable}}}

\newcommand{\tp}{\mathsf{tp}}
\newcommand{\Cau}{\mathbb{C}}
\newcommand{\Sup}{\mathbb{S}}
\newcommand{\Obs}{\mathbb{O}}
\newcommand{\fv}{\mathrm{fv}}
\newcommand{\clause}[3]{\langle #1\,|\,#2\,|\,#3\rangle}

\newcommand*{\Prev}{\mathop{\CIRCLE}\nolimits}
\newcommand*{\Next}{\mathop{\Circle}\nolimits}
\newcommand*{\Since}{\mathbin{\mathsf{S}}}
\newcommand*{\SinceP}{\mathbin{\hat{\mathsf{S}}}}
\newcommand*{\UUntil}{\mathbin{\mathsf{U}}}
\newcommand*{\Always}{\operatorname{\square}}
\newcommand*{\PAlways}{\operatorname{\blacksquare}}
\newcommand*{\Eventually}{\operatorname{\lozenge}}
\newcommand*{\Once}{\operatorname{\blacklozenge}}
\newcommand*{\iin}{\mathsf{in}}
\newcommand*{\LLet}[3]{\operatorname{\llet}\,#1=#2\,\mathbin{\iin}#3}
\newcommand{\llet}{\textsf{let}}
\newcommand*{\Agg}[4]{\operatorname{\mathtt{#1}}(#2; #3)~#4}
\newcommand*{\LetAgg}[5]{#1 \gets \Agg{#2}{#3}{#4}{#5}}

\algnewcommand\algorithmicswitch{\textbf{switch}}
\algnewcommand\algorithmiccase{\textbf{case}}
\algnewcommand\algorithmicdefault{\textbf{default}}
\algdef{SE}[SWITCH]{Switch}{EndSwitch}[1]{\algorithmicswitch\ #1\ \algorithmicdo}{\algorithmicend\ \algorithmicswitch}%
\algdef{SE}[CASE]{Case}{EndCase}[1]{\algorithmiccase\ #1}{\algorithmicend\ \algorithmiccase}%
\algdef{SE}[DEFAULT]{Default}{EndDefault}{\algorithmicdefault}{\algorithmicend\ \algorithmiccase}%
\algtext*{EndSwitch}%
\algtext*{EndCase}%
\algtext*{EndDefault}%

\begin{abstract}

\looseness=-1
Runtime enforcers observe a system's behavior
and exert control over it to ensure that the system 
always adheres to its requirements.
Many natural requirements not only formulate restrictions on the system's present behavior, 
but also impose obligations on its future behavior; for instance, agentic security may require personal data collected during a run to be erased by some deadline.
In an ever-growing compliance landscape, 
real-world systems may be subject to hundreds of such requirements.
However, existing enforcement mechanisms supporting complex policies are often too slow for 
production software; instead, developers may resort to non-temporal access control mechanisms 
and ad-hoc instrumentation, but these scale poorly with requirement complexity.

To address this problem, we introduce an efficient algorithm and tool
for enforcing complex temporal requirements and demonstrate its
performance.
Specifically, we identify a fragment of Metric First-Order Temporal
Logic (MFOTL) that can be enforced with low runtime
complexity while supporting a rich family of practically relevant requirements, including obligations.
We then design an enforcement algorithm for this fragment and
implement it in \EnfFlash{}, a novel tool that enforces
requirements with latency up to $44\times$ lower than \EnfGuard{}, the previous state-of-the-art
enforcer, on standard benchmarks, and up to three orders of magnitude lower on security
policies for LLM agents. This performance is achieved by compiling
requirements into imperative programs that we efficiently interpret.
We evaluate \EnfFlash{} both standalone, on existing benchmarks, and integrated in web applications
as an enforcement backend. We demonstrate that it can enforce a substantial 400-line  
MFOTL formula specifying the GDPR on a social network while adding under 15\,ms to each page view, which is
sufficient for most interactive and real-time applications.
\end{abstract}

\begin{CCSXML}
<ccs2012>
   <concept>
       <concept_id>10002978.10002986.10002990</concept_id>
       <concept_desc>Security and privacy~Logic and verification</concept_desc>
       <concept_significance>500</concept_significance>
   </concept>
   <concept>
       <concept_id>10003752.10003790.10003793</concept_id>
       <concept_desc>Theory of computation~Modal and temporal logics</concept_desc>
       <concept_significance>300</concept_significance>
   </concept>
</ccs2012>
\end{CCSXML}

\ccsdesc[500]{Security and privacy~Logic and verification}
\ccsdesc[300]{Theory of computation~Modal and temporal logics}

\keywords{runtime enforcement, security, privacy, logic}

\maketitle

\section{Introduction}
\label{sec:intro}

Modern systems must comply with increasingly complex requirements and regulations.
Rather than auditing these requirements ex post, in \emph{runtime enforcement}~\cite{Schneider2000,Bauer2002} 
an \emph{enforcer} observes the system's behavior and ensures compliance at runtime by 
suppressing or causing system actions.

Requirements can be broadly classified as \emph{provisions}, whose satisfaction depends on past 
and present system behavior, and \emph{obligations}, whose satisfaction depends 
on its future behavior~\cite{HiltyBP05}. An example provision is enforcing access to confidential data only when 
a prior approval was granted, whereas an obligation would be that any temporary copy made  
of confidential data must be deleted within a given deadline. Both of these requirements are \emph{temporal} and \emph{first-order} (i.e., they relate events that carry data, such as file identifiers, across time). However, they require 
different enforcement capabilities: an enforcer can simply \emph{react} to
an attempted system action to enforce a provision, while it must \emph{proactively} act to enforce an obligation.
Runtime enforcement plays a crucial role in numerous domains such as AI agent security:
agent behavior cannot be statically analyzed, but 
only observed, e.g., as a sequence of tool invocations. The example provision and obligation above readily apply in 
this context with an enforcer mediating tool invocations to ensure policy compliance
and retaining responsibility for pending obligations independently of the agent's subsequent behavior. 

\looseness=-1
Recent industrial developments reinforce the relevance of temporal first-order requirements for 
agentic software and put forward enforcement performance as a key challenge for its practical adoption. 
AWS's Cedar access-control (AC) mechanism~\cite{Cutler2024} is a fast, non-temporal, first-order 
enforcer designed for a very limited set of provisions:
those that depend only on the current system state without considering its past 
behavior. To alleviate this limitation, AWS recently proposed Dogwood~\cite{dogwood}, which combines Cedar with 
Metric First-Order Temporal Logic (MFOTL)~\cite{Chomicki1995} to express provisions.
While MFOTL offers both past and future temporal operators to express both complex provisions and obligations,
Dogwood only supports provisions, with obligations identified as future work. 

Other enforcers based on MFOTL, like \EnfGuard{}~\cite{Hublet2025} and \WhyEnf{}~\cite{Hublet2024} support both
provisions and obligations. Unfortunately, these tools' latency is high and can grow with the size of the
enforcer's state. This leads to latencies of the order of seconds even for relatively simple requirements, 
making these tools unsuitable for interactive and real-time applications. 

\looseness=-1
An alternative to off-the-shelf enforcers is to directly implement ad-hoc, specialized ones to achieve 
good performance. However, this is brittle and scales poorly. Abandoning declarative specifications of provisions and 
obligations, underpinned with formal semantics, introduces a new problem: justifying that the ad-hoc implementation 
behaves as intended requires reasoning about the much more complex and often unavailable formal semantics 
of the implementation language. This, combined with the fact that in practice one must implement hundreds of  
interacting requirements and encode their temporal dependencies by hand, makes this error-prone and hard to evolve~\cite{Nguyen2026}.

\looseness=-1
To improve upon the status quo, software developers need a tool
that combines three properties: 
declarative specifications of data-dependent provisions and obligations, 
sound generation of reactive and proactive enforcement, and 
runtime performance efficient enough for interactive applications,
which require response times on the order of fractions of a second.

In this paper, we introduce \EnfFlash{}, a novel tool
for enforcing first-order temporal provisions and obligations with low, predictable latency.
Unlike previous tools, which \emph{interpret} MFOTL formulae directly~\cite{Hublet2022,Hublet2024,Hublet2025},
\EnfFlash{} \emph{compiles} each formula into a program of a small imperative language, EF,
that describes an efficient enforcer for that formula,
and then executes it.
The generated programs are provably correct, and their execution complexity can be statically bounded.
The MFOTL fragment supported by \EnfFlash{} is large enough to express a large collection of
provisions and obligations that arise in practice, including
requirements involving aggregations and user-defined operators.
Additionally, \EnfFlash{} lifts the key limitation of previous enforcers~\cite{Hublet2022,Hublet2024,Hublet2025} that
could not simultaneously cause and suppress the same event with
different parameters; its compiler optimizes the order of rule evaluation
and fixpoint computations to reduce overhead while preserving soundness.

We evaluate \EnfFlash{}'s expressivity and performance using existing benchmarks~\cite{Cutler2024,Hublet2025}
and deploy it as an enforcement backend for two mid-sized web applications: a social
network platform instrumented for compliance with a 400-line specification
of the General Data Protection Regulation (GDPR), and an event management platform
accompanied by a broad range of access-control rules. We also enforce security
policies on the tool calls of an LLM banking agent under prompt-injection attacks.
Compared with \EnfGuard{}~\cite{Hublet2025}, the previous state-of-the-art MFOTL enforcer,
\EnfFlash{}'s latency is 3--44$\times$ lower on 17 of the 20 benchmark formulae both process
(\EnfGuard{} times out on 19 others) and 20--5,570$\times$ lower on the agent's policies;
in web applications, enforcement adds 1--15\,ms per page view, compatible with production use.

\medskip\noindent
After reviewing enforcement preliminaries (Section \ref{sec:preliminaries}),
we make the following contributions:
\begin{itemize}
\item We design an imperative DSL called EF that captures
  the behavior of proactive first-order enforcers, give it a formal semantics,
  and bound EF's evaluation time complexity (Section~\ref{sec:ef}).
\item We introduce a novel compilation procedure from MFOTL to EF, 
  generating sound enforcers for an expressive fragment of MFOTL (Section~\ref{sec:mfotl-to-ef}).
\item We implement our algorithms in \EnfFlash{},
  and evaluate it both on existing enforcement
  benchmarks and as an enforcement backend for web applications and LLM agents.
  We show that \EnfFlash{} enforces specifications comprising hundreds of requirements with
  modest latency, with $3$--$44\times$ lower latency than \EnfGuard{} on most benchmark formulae
  (Section~\ref{sec:eval}).
\end{itemize}
Overall, \EnfFlash{} is the first enforcer for first-order temporal provisions and obligations
offering practical performance (Section~\ref{sec:rw}).
All theorems marked with~\leanicon{} are formally verified in Lean 4~\cite{Moura2021} for the
core logic presented in this paper; the complexity results in Section~\ref{sec:complexity} are proved on paper.
\ifextended\else
The extended version of this paper~\cite{extended} additionally characterizes the
enforceable fragment by a type system and provides more detailed evaluation results.
\fi

\section{Runtime Enforcement}

\label{sec:preliminaries}






We introduce traces that model executions,
metric-first order temporal logic (MFOTL), and enforcers.


\subsection{Traces} 
\looseness=-1
Let $\Events$ be a finite set of \emph{event names} with arity function $\iota: \Events \rightarrow
\Nat$ and $\Domain$ be a non-empty set of \emph{data values}.
A \emph{signature} $\Sigma = (\Domain, \Events, \iota)$ defines the set of \emph{databases}
$\Databases \coloneqq \Pow{\{(e,\bar{d}) \mid e \in \Events, \bar{d} \in \Domain^{\iota(e)}\}}.$
For $E \subseteq \Events$, let $\Databases_E \coloneqq \{D \in \Databases \mid \forall (e,\bar{d}) \in D.\ e \in E\}$ denote databases with event names in $E$.
\begin{example}\label{ex:gdpr}
  Consider a file management system. Files can be
  accessed (`used') or deleted, and users can request deletion of their data,
  give consent for files of some category to be used, or revoke said consent.
  We model this with the signature $\Sigma_0=(\Nat,\Events_0,\iota_0)$, where
  $\Events_0=\{\use, \consent,\allowbreak \revoke,\allowbreak \delete,\allowbreak \deleteReq\}$,
  $\iota_0(\use)=3$, and $\iota_0(\delete)=\iota_0(\deleteReq)=\iota_0(\consent)=\iota_0(\revoke)=2$.
  The event
  $\use(c,d,u)$ denotes `system uses user $u$'s data $d$ from category $c$',
  $\delete(d,u)$ denotes `user $u$'s data $d$ is deleted',
  $\deleteReq(d,u)$ denotes `user $u$ requests deletion of data $d$',
  $\consent(u,c)$ denotes `user $u$ provides consent for category $c$', and
  $\revoke(u,c)$ denotes `user $u$ revokes consent for category $c$'.
\end{example}
A \emph{trace} $\sigma$ is a finite or infinite
sequence $\Sequence[0\leq i < k]{(\tau_i,D_i)}$ of length $|\sigma| = k
\in \Nat \cup \lbrace \infty\rbrace$ of pairs of timestamps $\tau_i \in \Nat$ and databases $D_i \in\Databases$, where timestamps grow \emph{monotonically}
($\tau_i \leq \tau_{i+1}$ for all $i$ with $i+1 < |\sigma|$) and \emph{progress} (if $|\sigma| = \infty$, then $\forall \tau.\; \exists i.\;\tau < \tau_i$).
An index $0\leq i<|\sigma|$ in a trace $\sigma$ is called a \emph{time-point}.
The empty trace is denoted by $\emptytrace$, the set of all traces by $\Traces$.

\begin{example}\label{ex:traces}
  Consider two traces of the above system:
  \begin{align*}
    \sigma_1&=(10, \{ \consent(1, 1) \}), (50, \{ \use(1, 3, 1), \use(2, 1, 1) \}), ... \\
    \sigma_2&=(10, \{ \deleteReq(1, 1)\}), (50, \{ \use(1, 3, 1) \}), ...
  \end{align*}
  In $\sigma_1$,  user $1$ provides consent for category $1$ at time-point $0$  with timestamp $10$; at time-point $1$ with timestamp $50$, the system uses user $1$'s data $3$ (with category $1$) and user $1$'s data $1$ (with category $2$).
  In $\sigma_2$, user $1$ requests deletion of data $1$, and then the system
  uses data $3$ with category $1$.
\end{example}
%
\subsection{Metric first-order temporal logic (MFOTL)}

MFOTL extends First-Order Logic (FOL) with 
temporal operators $\Next$
(`next'), $\Prev$ (`previous'), $\Eventually$ (`eventually') and $\Since$ (`since')~\cite{Pnueli1977}
equipped with intervals~\cite{Koymans1990} that encode 
constraints on timestamps.
We also include let bindings that introduce intermediate definitions, 
as well as SQL-style and user-defined aggregations. 
We thus follow Hublet et al.~\cite{Hublet2025}, except for the  
$\UUntil$ (`until') operator being replaced by the $\Eventually$ operator as a primitive, 
a deliberate restriction to the fragment we enforce.

Fix a set of variables $\Variables$, a set of \emph{function symbols} $\Functions$
with arity $\iota : \Functions \to \Nat$,
and a set \emph{aggregation operator symbols} $\Omega$
with arity $\iota' : \Omega \to \Nat^2$.
Each $f \in \Functions$ is interpreted by a fixed total function
$\hat f : \Domain^{\iota(f)} \to \Domain$ (e.g., arithmetic or string operations),
and each $\omega \in \Omega$ is interpreted  by a fixed total
function $\hat \omega$ mapping finite multisets over $\Domain^n$ to finite subsets of $\Domain^m$
where $\iota'(\omega) = (n,m)$. The syntax of MFOTL is defined as follows:
\begin{align*}
  t \Coloneqq~& x \mid c \mid f(\bar t) & 
  \varphi \Coloneqq~&\top \mid \mathit{e}(\bar{t}) \mid x = c \mid \neg \varphi \mid \varphi \land \varphi \mid \exists x.\ \varphi \mid \Next_I \varphi \mid \Prev_I \varphi \\
  &&& \mid\Eventually_I \varphi\mid \varphi \Since_I \varphi\mid \LLet{e(\bar x)}{\varphi}{\varphi} \mid \bar x \gets \omega(\bar t;\bar x)\;\varphi,
\end{align*}

\noindent where $e\in \Events$, $x \in \Variables$, $c \in \Domain$, and $I \in \Intervals$
ranges over non-empty intervals in $\Nat$.
We use standard abbreviations
$\bot \coloneqq \neg \top$,
$\varphi \vee \psi \coloneqq \neg (\neg \varphi \land \neg \psi)$,
$\varphi \rightarrow \psi \coloneqq \neg \varphi \vee \psi$,
$\varphi \leftrightarrow \psi \coloneqq (\varphi \rightarrow \psi) \wedge (\psi \rightarrow \varphi)$,
$\forall x.~\varphi \coloneqq \neg (\exists x.~\neg \varphi)$,
$\Once_I \varphi \coloneqq \top \Since_I \varphi$ (`once'), and 
$\Always_I \varphi \coloneqq \neg \Eventually_I \neg \varphi$ (`always'). 
and omit $[0,\infty)$ intervals from the temporal operators' subscript.
Furthermore, we write $\varphi[d/x]$ for the formula resulting from substituting 
the free variable $x$ with the constant $d$ in the formula $\varphi$.

\begin{example}\label{ex:formulae}
  Suppose the time unit is \emph{days}. Consider the following four formulae:

  \vspace{-1em}
  {\small
  \begin{align*}
    \varphi_{\mathsf{law}}  &= \Always (\forall c, d, u.\,\mathsf{use}(c, d, u) \rightarrow
        (\neg\,\mathsf{revoke}(u, c) \Since \mathsf{consent}(u, c)))\\
    \varphi_{\mathsf{del}}  &= \Always (\forall d, u.\,\mathsf{deletion\_request}(d, u) \rightarrow
        \Eventually_{[0,30]}\mathsf{delete}(d, u))\\
        \varphi_{\mathsf{let}}  &=  \text{\normalshape$\LLet{\mathsf{has\_consent}(u,c)}{\neg\mathsf{revoke}(u,c) \Since \mathsf{consent}(u,c)}{}$} \Always (\forall c, d, u.\,\mathsf{use}(c,d,u) \rightarrow \mathsf{has\_consent}(u,c))\\
    \varphi_{\mathsf{agg}}  &= \LetAgg{n}{\mathtt{CNT}}{d}{u}{\Once_{[0,10]} \mathsf{use}(c, d, u)}
  \end{align*}}
  \looseness=-1
  The first formula, $\varphi_{\mathsf{law}}$,
  formalizes the \emph{lawfulness of processing}: `whenever data $d$ with category $c$ belonging to user $u$ is processed, then $u$ must have consented to the use of category $c$ at some point in the past and not since revoked that consent.'
  The second formula, $\varphi_{\mathsf{del}}$, formalizes the GDPR's \emph{right to erasure}: `whenever a user $u$ requests the deletion of data $d$, then $d$ must be deleted within 30 days.'
  The third formula, $\varphi_{\mathsf{let}}$, illustrates \emph{let bindings}: it defines $\mathsf{has\_consent}(u,c)$ as holding when $u$ has consented to category $c$ and not revoked that consent, and uses this definition to restate $\varphi_{\mathsf{law}}$ more concisely.
  The fourth formula, $\varphi_{\mathsf{agg}}$, illustrates \emph{aggregations}: it counts, grouped by $u$, the number of data items of user $u$ used over the last 10 time units.
\end{example}

\looseness=-1
We write $\fv(\varphi)$ and $\fv(t)$ for the set of free variables of a formula $\varphi$ and a term $t$, respectively.
\begin{figure}[t]
  {\footnotesize
\renewcommand{\arraystretch}{0.92}
$
  \begin{array}{@{}l@{\;}c@{\;}l@{\ \ }|@{\ \ }l@{\;}c@{\;}l@{}}
      v,i \vDash e(\bar{t}) &\text{iff}& (e,\underv{\bar{t}}) \in D_i &
      v,i \vDash \lnot \varphi &\text{iff}& v,i \not\vDash \varphi\\
      v,i \vDash \varphi \land \psi &\text{iff}& v,i \vDash \varphi \text{  and } v,i \vDash \psi &
      v,i \vDash \exists x.\ \varphi&\text{iff}& v[x \mapsto d],i \vDash \varphi \text{  for some } d \in \Domain\\
      v,i \vDash \Next_I \varphi &\text{iff}& i + 1 < |\sigma|,\ v,i+1 \vDash \varphi \text{  and } \tau_{i+1} - \tau_i \in I &
      v,i \vDash \Prev_I \varphi&\text{iff}& i > 0,\ v,i-1 \vDash \varphi \text{, and } \tau_{i} - \tau_{i-1} \in I \\
      v,i \vDash \Eventually_I \varphi &\text{iff}& v,j \vDash \varphi \text { for some } j \geq i \text{  with } \tau_j - \tau_i \in I &
     v,i \vDash_\sigma \LLet{e(\bar x)}{\varphi}{\psi}&\text{iff}& v,i \vDash_{\sigma[e \mapsto \varphi]} \psi \\
     \multicolumn{6}{@{}l@{}}{v,i \vDash \varphi \Since_I \psi \;\;\text{iff}\;\; v,j\vDash \psi \text { for some } j \leq i \text{  with } \tau_i - \tau_j \in I\text{ and } v,k \vDash \varphi \text{  for all } j < k \leq i \hspace{5mm} \mid       v,i \vDash x = c\ \ \  \text{iff}\ v(x) = c} \\
     \multicolumn{6}{@{}l@{}}{v,i \vDash \bar y \gets \omega(\bar s;\bar g)\,\varphi \;\;\text{iff}\;\; v(\bar y) \in
       \hat \omega\,\mset{\sem{\bar s}_{v'} \mid v' \in \mathcal G}, \mathcal G = \{ v' \mid \mathrm{dom}(v') = \fv(\varphi),\ v'(\bar g) = v(\bar g),\ v',i \vDash \varphi \}, 0 < |\mathcal G| < \infty}
  \end{array}
  $}
  \caption{MFOTL semantics for a fixed trace $\sigma$. The free
  variables of an aggregation are $\bar g \cup \bar y$. $\mset{\cdots}$ is a multiset, 
  and $\sigma[e \mapsto \varphi]$ extends each
  $D_j$ of $\sigma$ with $\{e(\sem{\bar x}_{v'}) \mid \mathrm{dom}(v') \supseteq \fv(\varphi),\ v',j \vDash_\sigma \varphi\}$.}
  \label{fig:semantics}
  \vspace{-10pt}
\end{figure}
MFOTL's formal semantics is defined in \Cref{fig:semantics}.
Given a valuation $v: \Variables \rightharpoonup \Domain$ whose domain includes $\fv(\varphi)$,
we define the interpretation of terms as $\sem{x}_v=v(x)$ (for variables), $\sem{c}_v=c$
(for constants), and $\sem{f(\bar t)}_v=\hat f(\sem{\bar t}_v)$ (for function applications).
Each judgment $v, i \vDash_\sigma \varphi$ denotes that $\varphi$ is satisfied at
time-point $i$ of trace $\sigma$ under valuation $v$.
We omit $\sigma$ whenever it is clear from the context.


\subsection{Enforcers}
We consider a \emph{proactive} runtime enforcement model~\cite{Basin2024,Hublet2024}
where non-compliant
traces can be corrected in two ways: by modifying existing event databases by
causing or suppressing events (\emph{reactively});
or by inserting a single event database
before any timestamp increment (\emph{proactively}).
The latter feature allows enforcers
to satisfy real-time deadlines by inserting events.
We assume that a fixed subset
of all events can be caused at any time ($\Cau \subseteq \Events$),
while another fixed subset
can be suppressed ($\Sup \subseteq \Events$).
Formally, a \emph{proactive enforcer} is a quadruple
$\mathcal E =(\mathcal S, s_0, \mu, \nu)$ where
$\mathcal S$ is a set of states,
$s_0$ is an initial state,
$\mu : \mathcal S \times \Traces \times \Nat \times \Databases \to  \mathcal S \times \Databases_\Cau \times \Databases_\Sup$ updates the state and causes and suppresses
events when processing a database,
and $\nu : \mathcal S \times \Traces \times \Nat \to \mathcal S \times (\Databases_\Cau \cup \{\bot\})$ updates the state and optionally causes events (or does nothing) when
a timestamp is incremented. As in~\cite{Hublet2024,Hublet2025},
the enforcer $\mathcal E$ modifies a trace $\rho$ as
follows:

\begin{algorithm}
  \caption{Enforcement loop from Hublet et al.~\cite{Hublet2024}}
  \footnotesize
\begin{algorithmic}[1]
\Function{$\mathcal{E}$}{$\rho = (\tau_i, D_i)_i$}
  \State $s \gets s_0$;\ \ $\sigma \gets \varepsilon$
  \For{$i = 0, \ldots, |\rho|-1$}
    \State $(s, C, S) \gets \mu(s, \sigma, \tau_i, D_i)$;\ $\sigma \gets \sigma \cdot (\tau_i,\ (D_i \setminus S) \cup C)$ \Comment{reactive}
    \For{$t \in \{\tau_i, \ldots, \tau_{i+1}-1\}$ (none if $i = |\rho|-1$)}
      \State $(s, C') \gets \nu(s, \sigma, t)$;\ \textbf{if}\ $C' \neq \bot$\ \textbf{then}\ $\sigma \gets \sigma \cdot (t, C')$\ \textbf{end if} \Comment{proactive}
      \EndFor
  \EndFor
  \State \Return $\sigma$
\EndFunction
\end{algorithmic}
\end{algorithm}

The enforcement loop incrementally processes a trace $\rho$ of timestamped events $(\tau_i, D_i)$, maintaining a current state $s$ and an output trace $\sigma$. For each event, it first applies $\mu$ to update the state and transform the event's data (causing and suppressing event in $C$ and $S$), and then calls $\nu$ before any timestamp increment to proactively cause any additional events.

An enforcer $\mathcal E$ is \emph{sound} with respect to a closed formula
$\varphi$ if for any system trace $\sigma \in \Traces$, we have
$\emptyset, 0 \vDash_{\mathcal E(\sigma)} \varphi$, i.e., the modified trace
$\mathcal E(\sigma)$ complies with $\varphi$.

\section{EF: An Operational Enforcement Language}

\label{sec:ef}
\lstdefinelanguage{ef}{
  morekeywords={event,pyinit,fun,tfun,filter,let,table,lagged,agg,tableop,
                rule,trigger,add,remove,if,or,group_by,over,
                delay,next,window,section,fixpoint,once,sync,
                int,float,str,bool,true,false,
                SUM,AVG,MED,CNT,MIN,MAX,STD},
  sensitive=true,
  keywordstyle={\color{kwcolor}\bfseries},
  morecomment=[l]{//},
  morestring=[b]",
  literate={:=}{{$:=\ $}}1 {<-}{{$\leftarrow$}}1,
}
\lstset{basicstyle=\ttfamily\small,
        commentstyle=\itshape\color{gray!70},columns=fullflexible}

We now present EF, a domain-specific language (DSL)
that encodes the behavior of a proactive, first-order enforcer.
EF programs are trace transducers that cause and suppress events when
specific triggers hold, store temporal relations into tables, and perform
fixpoint computations. One key feature of EF programs is that their
per-time-point evaluation complexity can be easily bounded in terms of the size of
some of their internal tables. If these tables remain small, this
provides guarantees that the program's latency will not significantly grow with
trace size, a key feature for deployment in long-running systems.
In Havelund et al.'s terminology~\cite{Havelund2024}, EF is an
\emph{operational specification language} with a concrete executable semantics.
We present EF's syntax (Section~\ref{sec:syntax}) and its semantics (Section~\ref{sec:semantics})
and analyze the complexity of executing EF programs (Section~\ref{sec:complexity}).

\subsection{Syntax}

\label{sec:syntax}

We write $\mathit{id}$ for identifiers, $n$ for integer literals,
and $\seq{X}$
for a comma-separated list of $X$. Figure~\ref{fig:grammar} shows the full
grammar of EF programs. Figure~\ref{fig:example_ef} shows
the EF program enforcing the lawfulness policy $\varphi_\mathsf{law}$ of
Example~\ref{ex:formulae}, which we revisit in Section~\ref{sec:mfotl-to-ef}, 
generating it automatically from $\varphi_\mathsf{law}$.

An EF program is a sequence of \emph{declarations} followed by a sequence of
\emph{items}. The declarations define the enforcement signature and auxiliary code:
$\kw{event}\ \mathit{id}(\seq\tau)$ declares an event with typed
arguments; $\kw{pyinit}$ provides a Python preamble executed at start-up (imports and shared
globals); $\kw{fun}$ declares a Python function usable inside terms; and
$\kw{tfun}$ declares a custom aggregation operator mapping a list of rows to another.
The items define the enforcer's behavior:
\begin{itemize}
\item A \kw{table} stores a persistent finite relation.
       At every iteration, the interpreter removes the rows matched
        by its \kw{remove} clause and inserts those matched by its \kw{add}
        clause. A $\kwc{[}\kw{window}\ a\ b\kwc{]}$ annotation makes the table \emph{metric}:
        rows are active when they were inserted between $a$ and $b$ time units in the
        past ($b=\kwc{*}$ denotes $\infty$).
        A \kw{lagged} table is consumed in the \emph{next} iteration
        instead of the current one and cannot have a \kw{remove} clause.
      \item A \kw{let} defines a non-persistent relation within the current time-point.
        The \kw{filter} keyword signals that the \kw{let} expression
        may not be finite and can only be used as a boolean condition on top of another
        relation.
  \item An \kw{agg}~\kw{let} evaluates its \kw{over} clause, groups the results
        by \kw{group\_by}, and applies the operator
        $\mathit{op}$ ($\kw{SUM},\kw{AVG},\dots$) to each group.
  \item A \kw{rule} matches its \kw{trigger} to current events
        and, for each match, triggers an action. The action
        $\kwc{+}$ \emph{causes} an event, while $\kwc{-}$ \emph{suppresses} it. By
        default, the action is performed at the current time-point.
        With $\kwc{+}$, the $\kwc{[}\kw{delay}\ n\kwc{]}$ annotation triggers the event
        \emph{proactively} in $n$ time-units, while $\kwc{[}\kw{next}\ n\kwc{]}$ schedules it
        in $n$ time-points.
  \item A \kw{section} declaration groups all following rules (until the next \kw{section})
        into one evaluation block, run either \kw{once} or to a local \kw{fixpoint}, i.e., until
        the set of events obtained via suppressions and causations is stable.
        Every rule must belong to a section.
\end{itemize}

A \emph{clause} is a pair of
a disjunction $\pi$ of \emph{guards} $\kappa_1,\dots,\kappa_n$, $n \geq 0$,
with each guard being a conjunction of atoms, and a \emph{filter} $\kw{if}\ \varphi$. 
An atom is either a predicate $e(t_1,\dots,t_k)$ or an equality
$\mathit{id}\,\kwc{==}\,v$.
Without a guard, the clause is reduced to its filter.
If there is at least one guard, each free variable of the clause must be the 
argument of an atom of each guard. Guarded clauses can thus serve as queries over 
a relation: matching variable assignments are enumerated using the guards, and then
filtered.

\begin{figure}
\begin{lstlisting}[language=ef,aboveskip=0pt,belowskip=0pt,basicstyle=\tt\footnotesize]
event use(int, int, int);  // use(c, d, u)
event consent(int, int);   // consent(u, c)
event revoke(int, int);    // revoke(u, c)
event Sup_Exists0(int, int); event Sup_Exists2();
// [... standard library functions elided]
table Since1(c:int, u:int) := add {consent(u, c)} remove {revoke(u, c)};
let Exists2() := {use(c, d, u) if !Since1(c, u)};
section fixpoint;
rule +Sup_Exists2()     := trigger {if Exists2()};
rule +Sup_Exists0(c, u) := trigger {Sup_Exists2() & use(c, d, u) if !Since1(c, u)};
rule -use(c, d, u)      := trigger {Sup_Exists0(c, u) & use(c, d, u)};
\end{lstlisting}
\caption{EF program generated by \EnfFlash{} for $\varphi_\mathsf{law}$
(Example~\ref{ex:formulae}): table \texttt{Since1} implements the $\Since$
subformula, obligation events \texttt{Sup\_Exists}$x$ discharge existentials,
the last rule suppresses $\use(c,d,u)$ when consent fails.
\label{fig:example_ef}}

\smallskip
{\footnotesize
\[
\renewcommand{\arraystretch}{1.15}
\begin{array}{@{}r@{\;}c@{\;}l@{\qquad}r@{\;}c@{\;}l@{}}
\mathit{prog} &::=& \multicolumn{4}{@{}l@{}}{\mathit{evdecl}^*\;[\mathit{pyinit}]\;
                     \mathit{fun}^*\;\mathit{tfun}^*\;\mathit{item}^*} \\
\mathit{evdecl} &::=& \kw{event}\ \mathit{id}\kwc{(}\seq{\tau}\kwc{)}\,\kwc{;} &
\mathit{pyinit} &::=& \kw{pyinit}\ \kwc{\{}\ \textit{python}\ \kwc{\}} \\
\mathit{fun} &::=& \kw{fun}\ \mathit{id}\kwc{(}\seq{c}\kwc{)}\,\kwc{:}\,\tau\
                    \kwc{\{}\ \textit{python}\ \kwc{\}} &
\mathit{tfun} &::=& \kw{tfun}\ \mathit{id}\ \kwc{\{}\ \textit{python}\ \kwc{\}} \\
\mathit{item} &::=& \mathit{table}\mid\mathit{let}\mid\mathit{agg}\mid\mathit{rule}\mid\mathit{sec} &
\mathit{sec} &::=& \kw{section}\ (\kw{fixpoint}\mid\kw{once})\,\kwc{;} \\[4pt]
\mathit{table} &::=& \multicolumn{4}{@{}l@{}}{[\ell]\ [\kw{lagged}]\ \kw{table}\
                      \mathit{id}\kwc{(}\seq{c}\kwc{)}\ [\,\kwc{[}\kw{window}\ n\ b\kwc{]}\,]\
                      \kwc{:=}\ \kw{add}\ \kwc{\{}\,\mathit{cl}\,\kwc{\}}\
                      [\,\kw{remove}\ \kwc{\{}\,\mathit{cl}\,\kwc{\}}\,]\,\kwc{;}} \\
\mathit{let} &::=& \multicolumn{4}{@{}l@{}}{[\ell]\ [\kw{filter}]\ \kw{let}\
                    \mathit{id}\kwc{(}\seq{c}\kwc{)}\ \kwc{:=}\ \kwc{\{}\,\mathit{cl}\,\kwc{\}}\,\kwc{;}} \\
\mathit{agg} &::=& \multicolumn{4}{@{}l@{}}{[\ell]\ \kw{agg}\ \kw{let}\
                    \mathit{id}\kwc{(}\seq{c}\kwc{)}\ \kwc{:=}\ \mathit{id}\kwc{(}\seq{t}\kwc{)}\
                    \kw{group\_by}\ \seq{\mathit{id}}\
                    \kw{over}\ \kwc{\{}\,\mathit{cl}\,\kwc{\}}\,\kwc{;}} \\
\mathit{rule} &::=& \multicolumn{4}{@{}l@{}}{[\ell]\ \kw{rule}\ (\kwc{+}\mid\kwc{-})\
                     \mathit{id}\kwc{(}\seq{t}\kwc{)}\ [\,\kwc{[}\kw{delay}\ n\kwc{]}\,]\
                     [\,\kwc{[}\kw{next}\ n\kwc{]}\,]\
                     \kwc{:=}\ \kw{trigger}\ \kwc{\{}\,\mathit{cl}\,\kwc{\}}\,\kwc{;}} \\[4pt]
\mathit{cl} &::=& \pi\ [\kw{if}\ \varphi] \mid \kw{if}\ \varphi \quad
\pi ::= \kappa\ (\kw{or}\ \kappa)^* & \kappa &::=& \gamma\ (\kwc{\&}\ \gamma)^* \quad
\gamma ::= \mathit{id}\kwc{(}\seq{\mathit{id} \mid v}\kwc{)} \mid \mathit{id}\ \kwc{==}\ v \\
\varphi &::=& \kw{true}\mid\kw{false}\mid
               \mathit{id}\kwc{(}\seq{t}\kwc{)} \mid \varphi\,\kwc{\&}\,\varphi \mid
               \varphi\,\kwc{|}\,\varphi \mid \kwc{!}\varphi &
t &::=& \mathit{id} \mid \mathit{const} \mid \mathit{id}\kwc{(}\seq{t}\kwc{)} \\
\tau &::=& \kw{int}\mid\kw{float}\mid\kw{str}\mid\kw{bool} &
c &::=& \mathit{id}\,\kwc{:}\,\tau \quad \ell ::= \kwc{@}\mathit{id} \quad b ::= n \mid \kwc{*}
\end{array}
\]}
\vspace{-0.5em}
\caption{Grammar of EF programs\label{fig:grammar}}
\end{figure}

\subsection{Semantics}
\label{sec:semantics}

Given an EF program, the EF reference interpreter, described
in Algorithm~\ref{alg:saturate}, defines the behavior
of the $\mu$ and $\nu$ functions defined in Section~\ref{sec:preliminaries}.

The \emph{enforcer state} is a triple $(T, T^{\Next}, \Omega)$
where $T : \Events \rightharpoonup \mathcal{P}(\Nat \times \Domain^*)$ maps 
table names to (finite) timestamped relations, $T^{\Next} : \Events \rightharpoonup \mathcal{P}(\Nat \times \Domain^*)$ buffers the content of lagged tables, and $\Omega \in \mathcal{P}(\Events \times \Domain^* \times (\{ \mathtt{ts}, \mathtt{tp} \} \times \Nat))$ accumulates
proactive obligations created via \kw{delay} or \kw{next}, together with their deadlines.
Initially, all tables are empty and $\Omega = \emptyset$.

Each $\kw{fun}$ declaration associates a function $\hat{f}$ to the function
symbol $f$, and terms are evaluated as in MFOTL.
Under an interpretation $R: \Events \rightharpoonup \mathcal{P}(\Domain^*)$ of the
predicates, guards and clauses evaluate to sets of valuations as follows:
$\sem{p(\bar t)}_R=\{v \mid \mathrm{dom}(v) = \fv(\bar t),\ \sem{\bar t}_v \in R(p)\}$,
$\sem{x\,\kwc{==}\,c}_R=\{\{x\mapsto c\}\}$, and
$\sem{\kappa_1\,\kw{or}\,\dots\,\kw{or}\,\kappa_m\ \kw{if}\ \varphi}_R=\{v\in\textstyle\bigcup_{i}\Join_{\gamma\in\kappa_i}\sem{\gamma}_R\mid v\models_R\varphi\}$,
where $\Join$ denotes a join.

\newcommand{\proj}[2]{\overline{#1}^{#2}}
Algorithm~\ref{alg:saturate} defines the semantics of a program $P$ with
definitions $d_1,\dots,d_n$ of predicates $p_1,\dots,p_n$.
The \emph{working set} $\mathcal{S} = (T, \tau, D, C, S)$ consists of the tables, the
current timestamp, the input events, and the events caused and suppressed so far.
The tables contain \emph{timestamped rows} of the form $\tau \cdot r$ where $\tau$
is a timestamp and $r$ is a sequence of domain elements.
The operator $\proj{s}{I} =  \{ \bar a \mid \tau' \cdot \bar a \in s,\ \tau - \tau' \in I \}$
selects the rows of a timestamped relation whose timestamps are within the interval $I$.
The notation $v{\restriction}_{\bar x}$ denotes the row obtained by projecting $v$
on $\bar x$, i.e., $(v(x_1),\dots,v(x_k))$.

The reference interpreter consists of four functions.
\textsc{Eval} evaluates definitions on the working set.
\textsc{Interp} extends $R_0 = (D \setminus S) \cup C$ with the value of
each of the $p_i$, evaluating definitions in sequence using \textsc{Eval}.
\textsc{Update} applies a single rule on the current working set, updating $\Omega$.
\textsc{Saturate} applies rules section-wise (once for a
\kw{once} section, and until $(\Omega,C,S)$ reaches a fixed point for \kw{fixpoint} sections), updates the persistent tables, and returns a new state.
Throughout, $\mathcal S$ refers to the current values of $T$, $C$ and $S$.
The reactive function $\mu$ simply calls \textsc{Saturate} after retrieving the
\kw{next} obligations due at the current time-point.
The proactive $\nu$ function first computes the events to be caused
at the current timestamp or time-point, identifying obligations that hit their deadline.
Then, if and only there exist any such events, it calls the \textsc{Saturate} loop.

\begin{algorithm}[t]
\caption{EF interpreter for a program $P$ with predicates $p_1, \ldots, p_n$ and definitions $d_1, \ldots, d_n$}\label{alg:saturate}
\footnotesize
\newcommand{\cont}{\Statex\hspace*{\algorithmicindent}\hspace*{\algorithmicindent}}%
\newcommand{\brk}{\\\hspace*{\algorithmicindent}\hspace*{\algorithmicindent}}%
\newcommand{\gap}{\Statex\vspace{-0.6\baselineskip}}%
\algtext*{EndFunction}\algtext*{EndFor}\algtext*{EndIf}%
\begin{minipage}[t]{0.49\textwidth}
\begin{algorithmic}[1]
\Function{$\mu$}{$(T, T^{\Next}, \Omega), \sigma, \tau, D$}
\State $O \gets \{ e \mid (e, (\mathtt{tp}, |\sigma|)) \in \Omega \}$
\State $(T, C, S, \Omega) \gets {}$\Call{Saturate}{}
\cont $\big((T, \tau, D, O, \emptyset), T^{\Next}, \Omega, \sigma\big)$
\State \Return $\big((T, T^{\Next}, \Omega),\, C,\, S\big)$
\EndFunction
\gap
\Function{$\nu$}{$(T, T^{\Next}, \Omega), \sigma, \tau$}
  \State $C \gets \{ e \mid (e, (\mathtt{tp}, |\sigma|)) \in \Omega$
  \cont ${}\lor (e, (\mathtt{ts}, \tau)) \in \Omega \}$
  \State \textbf{if} $C = \emptyset$ \textbf{then} \Return $\big((T, T^{\Next}, \Omega),\, \bot\big)$
  \State $(T, C, S, \Omega) \gets {}$\Call{Saturate}{}
  \cont $\big((T, \tau, \emptyset, C, \emptyset), T^{\Next}, \Omega, \sigma\big)$
  \State \Return $\big((T, T^{\Next}, \Omega),\, C\big)$
\EndFunction
\gap
\Function{Saturate}{$\mathcal{S}=(T,\tau,D,C,S),T^{\Next},\Omega,\sigma$}
\ForAll{section $s$ of $P$ in order}
\Repeat
\ForAll{rule $r\in s$}
  \State $(\Omega,C,S) \gets \Call{Update}{r, \Omega, \mathcal{S}, \sigma}$
\EndFor
\Until{$(\Omega, C,S)$ unchanged \textbf{or} $s$ is \kw{once}}
\EndFor
\ForAll{\kw{table} $p_i(\bar x_i) := \kw{add}\ \{a_i\}\ \kw{remove}\ \{r_i\}$}
\State $T(p_i) \gets \big(T(p_i)\setminus \{\tau' \cdot v{\restriction}_{\bar x_i}\mid {}$
\cont $v\in\sem{r_i}_{\textsc{Interp}(\mathcal{S})}, \tau' \in \Nat\} \big)$
\cont ${}\cup\{\tau \cdot v{\restriction}_{\bar x_i}\mid v\in\sem{a_i}_{\textsc{Interp}(\mathcal{S})}\}$
\EndFor
\ForAll{\kw{lagged}\ \kw{table} $p_i(\bar x_i) := \kw{add}\ \{c_i\}$}
\State $T^{\Next}(p_i) \gets  \{\tau \cdot v{\restriction}_{\bar x_i}\mid v\in\sem{c_i}_{\textsc{Interp}(\mathcal{S})}\}$
\EndFor
\ForAll{\kw{lagged}\ \kw{table} $p_i$}
\State $T(p_i) \gets T^{\Next}(p_i)$
\EndFor
\State \Return $(T,C,S,\Omega)$
\EndFunction
\algstore{efsem}
\end{algorithmic}
\end{minipage}\hfill
\begin{minipage}[t]{0.49\textwidth}
\begin{algorithmic}[1]
\algrestore{efsem}
\Function{Interp}{$\mathcal{S}=(T,\tau,D,C,S)$}
\State $R_0 \gets (D \setminus S) \cup C$
\For{$i = 1, \dots, n$}
  \State $R_i \gets R_{i-1}[p_i \mapsto \Call{Eval}{d_i, R_{i-1}, T, \tau}]$
\EndFor
\State \Return $R_n$
\EndFunction
\gap
\Function{Eval}{$d, R, T, \tau$}
\If{$d = \kw{let}\ p(\bar x) := \{c\}$}
  \State \Return $\{ v{\restriction}_{\bar x} \mid v\in\sem{c}_R \}$
\ElsIf{$d = \kw{table}\ p(\bar x)\ [\kw{window}\ n\ b] := {}$\brk
  $\kw{add}\,\{a\}\,\kw{remove}\,\{r\}$}
  \State \Return $\big(\proj{T(p)}{[n,b]}\setminus \{v{\restriction}_{\bar x}\mid v\in\sem{r}_R\}\big)$
  \cont ${}\cup(n = 0\ ?\ \{v{\restriction}_{\bar x}\mid v\in\sem{a}_R\} : \emptyset)$
\ElsIf{$d = \kw{lagged}\ \kw{table}\ p(\bar x)\ [\kw{window}\ n\ b] := {}$\brk
  $\kw{add}\,\{c\}$}
  \State \Return $\proj{T(p)}{[n,b]}$
\ElsIf{$d = \kw{agg}\ \kw{let}\ p(\bar g,\bar o):=g(\bar s)$\brk
  $\kw{group\_by}\ \bar g\ \kw{over}\,\{c\}$}
  \State \Return $\big\{ \bar v(\bar g) \cdot\bar w \mid \bar v \in \sem{c}_R,$
  \cont $\bar w\in\hat g\,\langle\!\langle\,\sem{\bar s}_v \mid v\in\sem{c}_R, v(\bar g) = \bar v(\bar g)\,\rangle\!\rangle \big\}$
\EndIf
\EndFunction
\gap
\Function{Update}{$r, \Omega, \mathcal{S}=(T,\tau,D,C,S), \sigma$}
\Statex \hspace*{\algorithmicindent}\textit{// $r = \kw{rule}\ \alpha\,p(\bar t)\ \dots := \kw{trigger}\ \{c\}$}
\State $A \gets \{\sem{\bar t}_v \mid v\in\sem{c}_{\textsc{Interp}(\mathcal{S})}\}$
\If{$r$ has $\kwc{[}\kw{delay}\ N\kwc{]}$}
  \State \Return $\big(\Omega \cup \{ (p(\bar a), (\mathtt{ts}, \tau + N))$
  \cont ${}\mid \bar a \in A \},\ C,\ S\big)$
\ElsIf{$r$ has $\kwc{[}\kw{next}\ N\kwc{]}$}
  \State \Return $\big(\Omega \cup \{ (p(\bar a), (\mathtt{tp}, |\sigma| + N))$
  \cont ${}\mid \bar a \in A \},\ C,\ S\big)$
\ElsIf{$\alpha = \kwc{+}$}
  \State \Return $(\Omega,\ C \cup \{ p(\bar a) \mid \bar a \in A\},\ S)$
\Else
  \State \Return $(\Omega,\ C,\ S \cup \{ p(\bar a) \mid \bar a \in A \})$
\EndIf
\EndFunction
\end{algorithmic}
\end{minipage}
\end{algorithm}

\begin{example}\label{ex:execution}
We run the program of Figure~\ref{fig:example_ef} on the trace $\sigma_1$ of
Example~\ref{ex:traces}.
The initial state is $(T,T^{\Next},\Omega)=(\{\mathtt{Since1}\mapsto\emptyset\},\emptyset,\emptyset)$. Since the program has no \kw{delay} or \kw{next} rules, $\Omega$ remains
empty and $\nu$ always returns $\bot$.

At time-point 0, we have $R_0=\{\mathtt{consent}\mapsto\{(1,1)\}\}$, then
$\textsc{Interp}(\mathcal{S})(\mathtt{Since1})=\emptyset\cup\{(1,1)\}$ and
$\textsc{Interp}(\mathcal{S})(\mathtt{Exists2})=\emptyset$ as there is no \texttt{use}.
No trigger matches, hence the fixpoint stops after one pass with $C=S=\emptyset$, and
$\mu$ outputs $D$. The \textsc{Saturate} function commits $T(\mathtt{Since1})=\{10\cdot(1,1)\}$.

At time-point 1, we still have $\textsc{Interp}(\mathcal{S})(\mathtt{Since1})=\{(1,1)\}$.
Since user 1 never consented to category 2,
$\textsc{Interp}(\mathcal{S})(\mathtt{Exists2})=\{()\}$.
The first iteration of the fixpoint proceeds as follows:
\begin{enumerate}
\item The rule $\kwc{+}\mathtt{Sup\_Exists2}()$ is activated, yielding $C=\{\mathtt{Sup\_Exists2}()\}$;
\item The trigger of $\kwc{+}\mathtt{Sup\_Exists0}(c,u)$ joins $\mathtt{Sup\_Exists2}()$
  with \texttt{use}  and filters by $\kwc{!}\mathtt{Since1}(c,u)$,
  obtaining the valuation
  $v=\{c\mapsto 2, d\mapsto 1, u\mapsto 1\}$;
  hence $\mathtt{Sup\_Exists0}(2,1)$ is added to $C$;
\item The rule $\kwc{-}\mathtt{use}(c,d,u)$ matches $\mathtt{Sup\_Exists0}(2,1)$ with
  $\mathtt{use}(2,1,1)$, yielding $S=\{\mathtt{use}(2,1,1)\}$.
\end{enumerate}
In the second iteration of the fixpoint,
$\mathtt{use}(2,1,1)\notin R_0=(D\setminus S)\cup C$ since
$\mathtt{use}(2,1,1) \in S$, and hence
$\textsc{Interp}(\mathcal{S})(\mathtt{Exists2})=\emptyset$.
The set $(\Omega,C,S)$ is unchanged and the fixpoint stops. The output is
$(D\setminus S)\cup C=\{\mathtt{use}(1,3,1),\mathtt{Sup\_Exists2}(),\mathtt{Sup\_Exists0}(2,1)\}$; the unlawful use is suppressed.
\end{example}

\subsection{Complexity}
\label{sec:complexity}

We bound the runtime of executing a program $P$ in terms of three parameters:
the largest number of guards in any single conjunction $\kappa$ of any clause of $P$,
which we denote $w$;
the largest number of tuples in any relation $\mathcal S(p)$ occurring in at least
one guard, which we denote $N$;
and the total number of fixpoint iterations performed, which we denote $K$.
We make the following assumptions that let
us isolate relational work:
\begin{enumerate}
\item[(A1)] the original trace carries $O(1)$ events per time-point;
\item[(A2)] each scalar function $\hat f$, comparison, and aggregation step is
  $O(1)$, membership tests, insertion, and deletion within a relation is amortized
  $O(1)$ with hash maps, and any $\hat \omega$ on $m$ input rows runs in $O(m)$,
  returning $O(m)$ rows.
\end{enumerate}
Under (A1) the input-event relations have size $O(1)$, so $N$ is governed by the
tables.
We prove:

\begin{lemma}\label{lem:filter}
The boolean $\varphi$, $v \models_{\mathcal{S}} \varphi$ can be computed in $O(1)$.
\end{lemma}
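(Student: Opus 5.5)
We argue by structural induction on the filter formula $\varphi$ from the grammar of Figure~\ref{fig:grammar}, treating $\varphi$ as fixed by the program $P$, so its size is a constant. Evaluation of $v \models_{\mathcal S} \varphi$ means evaluation under the interpretation $R=\textsc{Interp}(\mathcal S)$. The measure we induct on is the number of connectives and atoms in $\varphi$. Since every step contributes only a constant amount of work, the total cost is $O(|\varphi|)=O(1)$.

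The base cases are handled as follows.
\begin{itemize}
\item For $\kw{true}$ and $\kw{false}$, the result is returned immediately.
\item For an atom $\mathit{id}(\bar t)$, we first evaluate each term $t_j$ under $v$. A term is a variable lookup, a constant, or a nested application $f(\bar t')$. By induction on term structure and assumption (A2), each $\hat f$ costs $O(1)$, and the term size is fixed by $P$, so computing $\sem{\bar t}_v$ costs $O(1)$. It remains to decide $\sem{\bar t}_v \in R(\mathit{id})$, which is a membership test in a relation and amortized $O(1)$ by (A2).
\end{itemize}

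The inductive cases are $\varphi\,\kwc{\&}\,\psi$, $\varphi\,\kwc{|}\,\psi$ and $\kwc{!}\varphi$. Each combines the results of at most two recursive calls, both $O(1)$ by the induction hypothesis, with one boolean operation.

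The main obstacle is the atom case. We must make sure the membership test does not secretly require materialising $R(\mathit{id})$, i.e., recomputing a \kw{let} or table definition, which could cost $O(N)$ or more. We will address this by making explicit that the lemma assumes the relations of $\textsc{Interp}(\mathcal S)$ are already available as hash sets, since they are computed once per \textsc{Interp} call and charged there, not to the filter.

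The \kw{filter} \kw{let} relations need separate treatment, because they may be infinite and cannot be materialised. Membership in them is decided by unfolding the definition's own filter on the tuple. This amounts to substituting the definition body into $\varphi$. Because definitions are acyclic (each $d_i$ refers only to $p_1,\dots,p_{i-1}$), the unfolded formula still has size bounded by a constant depending only on $P$, and the same induction applies to it.
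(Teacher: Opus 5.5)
Your structure matches the paper's: the filter has constant size, so it suffices that each atom costs $O(1)$. The gap is the one case that carries the lemma's content, table atoms, which you assume away rather than prove. For a table $p$ with window $[a,b]$, $\textsc{Interp}(\mathcal S)(p)$ is not a stored relation. It is the projection $\overline{T(p)}^{[a,b]}$ of the timestamped rows, up to the current add/remove deltas. So $\bar d\in R(p)$ asks whether \emph{some} row $\tau'\cdot\bar d$ satisfies $\tau-\tau'\in[a,b]$, whereas (A2) only gives $O(1)$ membership in a stored hash set.

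Your remedy is to materialise every relation of $\textsc{Interp}(\mathcal S)$ as a hash set once per call and charge it to \textsc{Interp}. For tables this costs $\Theta(|T(p)|)$ per call, and $|T(p)|$ is bounded neither by a constant nor by $N$ when $p$ occurs only in filters. An example is $\mathtt{Since1}$ in Figure~\ref{fig:example_ef}, which grows with the number of consents. That charge would break Lemma~\ref{lem:clause}, Theorem~\ref{thm:tp}, and the remark that $N_{\max}=O(1)$ when no table occurs in a guard. The lemma's content has been moved elsewhere, not proved.

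What is missing is a persistent index maintained across time-points. For a windowless table, a tuple-keyed hash map kept across calls suffices. For a metric table, however, the window contents change whenever $\tau$ advances, even with no insertions or deletions. The paper handles this by keeping two hash-indexed subtables: one holds the rows with timestamps in $\tau-[a,b]$, the other those in $\tau-[0,a)$. As $\tau$ grows, rows migrate from the second to the first, and rows older than $b$ are discarded. Membership in the window is then a single hash lookup. Since each row moves $O(1)$ times, the maintenance is absorbed by the window-update cost in Theorem~\ref{thm:tp}.

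With this index, plus memoised, guard-enumerated add/remove results for the deltas, your induction goes through. One minor omission: built-in comparisons $t_1\bowtie t_2$ are not relations of $R$, so they are not membership tests; they are evaluated directly in $O(1)$ by (A2).
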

\begin{proof}
Since $\varphi$ has constant size, so it suffices that each atom is $O(1)$. A comparison
$t_1\bowtie t_2$ evaluates its constant-size terms under $v$ in $O(1)$ by (A2). A
predicate a let-bound atom $p(\bar t)$ involves the membership test $\sem{\bar t}_v\in\mathcal S(p)$,
which costs $O(1)$ using hashsets. For a table,
membership in $\mathcal W(T(p))$ can be checked in
$O(1)$ by maintaining two separate subtables (timestamps
in $\tau-[a,b]$ and in $\tau-[0,a)$) and moving elements
between subtables on update.
\end{proof}

\begin{lemma}\label{lem:clause}
Given $R_{i-1}$, computing $\sem{c}_{R_{i-1}}$ for any clause $c$ with all identifiers in the domain of $R_{i-1}$ takes $O(N^{w})$ time.
\end{lemma}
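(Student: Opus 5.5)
The plan follows the structure of the clause semantics
$\sem{\kappa_1\,\kw{or}\,\dots\,\kw{or}\,\kappa_m\ \kw{if}\ \varphi}_R=\{v\in\bigcup_i\Join_{\gamma\in\kappa_i}\sem{\gamma}_R\mid v\models_R\varphi\}$
and bounds each of its three ingredients in turn: enumerating a guard conjunction, taking the union over disjuncts, and filtering. Throughout, the program $P$ is fixed, so the number of disjuncts $m$, the arities, and the sizes of terms and filters are constants.

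\textbf{Step 1: atoms.} For a single atom $\gamma$, $\sem{\gamma}_{R_{i-1}}$ has at most $N$ elements and can be enumerated in $O(N)$ time.
\begin{itemize}
\item For $x\,\kwc{==}\,c$ this is a singleton, computed in $O(1)$.
\item For $p(\bar t)$ with $p$ in a guard, $R_{i-1}(p)$ has at most $N$ tuples by definition of $N$. The valuations are obtained by scanning these tuples and matching the constant positions.
\item If $p$ is a windowed table, the active rows are enumerated from the subtable holding timestamps in $\tau-[a,b]$. This is the representation used in the proof of Lemma~\ref{lem:filter}, so no extra cost is incurred.
\end{itemize}

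\textbf{Step 2: joining a guard.} Fix a guard $\kappa=\gamma_1\,\kwc{\&}\cdots\kwc{\&}\,\gamma_k$ with $k\le w$. Compute the join by nested enumeration. Iterate over the tuples of $\sem{\gamma_1}$; for each partial valuation, iterate over the tuples of $\sem{\gamma_2}$ and keep only those consistent on the shared variables; continue in this way. Each consistency check and each extension of a valuation costs $O(1)$ by (A2), since arities are constant. The recursion tree therefore has at most $N^k\le N^w$ leaves and $O(N^w)$ nodes in total, which gives $O(N^w)$ time and at most $N^w$ output valuations. A hash join could improve on this, but the naive bound is all the statement requires.

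\textbf{Step 3: union and filter.} Since $m$ is constant, the union over the disjuncts has size $O(N^w)$. It is built in $O(N^w)$ time with a hash set, deduplicating in amortized $O(1)$ per valuation by (A2). Every valuation in the union is total on the clause's free variables, because of the syntactic requirement that each free variable occur in an atom of each guard. Lemma~\ref{lem:filter} therefore applies to each valuation and evaluates the filter $\varphi$ in $O(1)$. Summing the three steps gives $O(N^w)$.

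\textbf{Unguarded clauses.} When there are no guards, the clause is just its filter $\kw{if}\ \varphi$, and its free variables must be bound by the context. Its evaluation is then a single $O(1)$ test by Lemma~\ref{lem:filter}.

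\textbf{Main obstacle.} The delicate point is making sure that Lemma~\ref{lem:filter} and the $N$ bound genuinely apply to every atom involved. Filters may mention non-guard relations, such as \kw{filter} lets, which may be infinite; for these, membership must still be $O(1)$. The plan is to argue that each such membership test unfolds into a constant-size boolean test. That unfolding is exactly what Lemma~\ref{lem:filter} asserts, and I would take it as given.
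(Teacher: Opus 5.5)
Your proposal is correct and follows the paper's argument: the paper's proof bounds the join of at most $w$ guard atoms, each over a relation with at most $N$ tuples, by $N^{w}$ enumerated valuations, and charges an $O(1)$ filter per valuation via Lemma~\ref{lem:filter}. Your version adds detail the paper leaves implicit, namely the nested-loop cost accounting, the union over constantly many disjuncts, and the unguarded-clause case.
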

\begin{proof}
  Matching a conjunction joins up to guards, each ranging over a relation with at most $N$
  tuples. The join enumerates at most $N^{w}$ valuations, each requiring an
$O(1)$ filter by Lemma~\ref{lem:filter}.
\end{proof}

\begin{lemma}\label{lem:rule}
Evaluating one rule of $P$ takes $O(N^{w})$ time.
\end{lemma}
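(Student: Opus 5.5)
The plan is to follow the structure of \textsc{Update} in Algorithm~\ref{alg:saturate}. Evaluating a rule $r = \kw{rule}\ \alpha\,p(\bar t)\ \dots := \kw{trigger}\ \{c\}$ consists of three phases: obtaining the interpretation $\textsc{Interp}(\mathcal S)$, computing the match set $A = \{\sem{\bar t}_v \mid v \in \sem{c}_{\textsc{Interp}(\mathcal S)}\}$, and inserting each element of $A$ into one of $C$, $S$ or $\Omega$. I will bound each phase by $O(N^w)$.

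\emph{Interpretation.} I treat $\textsc{Interp}(\mathcal S)$ as maintained incrementally rather than recomputed from scratch for each rule; the evaluation cost of the definitions $d_1,\dots,d_n$ is then charged to the enclosing fixpoint iteration, not to the rule. If the lemma is meant to include that cost, the argument still goes through, because the number of definitions is a constant depending only on $P$. Each \kw{let} and each \kw{table} \kw{add}/\kw{remove} clause costs $O(N^w)$ by Lemma~\ref{lem:clause}. Window selection $\proj{T(p)}{[n,b]}$ costs $O(1)$ per query using the subtable representation from the proof of Lemma~\ref{lem:filter}. An \kw{agg} \kw{let} costs $O(N^w)$ to evaluate its \kw{over} clause, plus $O(N^w)$ to group by hashing and to apply $\hat g$ to every group. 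The last step is linear in the total number of rows by (A2), and its output is also of size $O(N^w)$. In every case the cost is a constant number of $O(N^w)$ terms, hence $O(N^w)$.

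\emph{Matching and insertion.} By Lemma~\ref{lem:clause}, $\sem{c}_{\textsc{Interp}(\mathcal S)}$ is computed in $O(N^w)$ and has at most $O(N^w)$ valuations, since a disjunction of constantly many guards contributes at most a constant factor. For each valuation $v$, evaluating the constant-size term tuple $\bar t$ costs $O(1)$ by (A2). For each resulting tuple $\bar a$, forming $p(\bar a)$ and its deadline $\tau+N$ or $|\sigma|+N$ and inserting it into the appropriate hash set costs amortized $O(1)$ by (A2). The total for this phase is $O(N^w)$.

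\emph{Expected obstacle.} The subtle step is that $N$ is defined over relations occurring in guards, and these include let-bound and aggregation-defined predicates, not only tables. So I must make sure that the sizes of intermediate relations, in particular aggregation outputs, are already accounted for by $N$, which holds by definition. I must also make sure that the filter $\varphi$ may reference non-guard relations, including \kw{filter} lets that may be infinite, which is handled by the $O(1)$ membership test of Lemma~\ref{lem:filter}. With these points settled, combining the three phases gives the $O(N^w)$ bound.
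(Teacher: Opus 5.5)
Your proposal is correct and follows essentially the paper's argument. The paper memoizes the constantly many relations the trigger depends on, computing each in $O(N^{w})$ by Lemma~\ref{lem:clause}, and then matches the trigger and instantiates its at most $N^{w}$ heads in $O(N^{w})$; your proposal does the same. Your additional treatment of aggregation outputs, lazily queried windows, and non-materialized \kw{filter} lets just spells out details the paper's two-line proof leaves implicit.
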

\begin{proof}
With memoization, each relation the trigger depends on is
computed at most once, in $O(N^{w})$ by Lemma~\ref{lem:clause}. Matching the
trigger and instantiating its up to $N^{w}$ heads are $O(N^{w})$.
\end{proof}

\begin{theorem}\label{thm:tp}
One call to $\mu$ or $\nu$ takes $O\!\left(K\cdot N^{w}\right)$ time.
\end{theorem}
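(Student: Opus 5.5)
The plan is to decompose a call to $\mu$ or $\nu$ into its constituent phases following Algorithm~\ref{alg:saturate}, bound each phase separately, and observe that $K$ multiplies only the rule-application part. Since the program $P$ is fixed, the number of sections, rules, definitions, and tables is $O(1)$, and I treat these as constants throughout.

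\textbf{Preliminary phase.} Before \textsc{Saturate}, $\mu$ computes $O$ and $\nu$ computes $C$, both by retrieving the obligations in $\Omega$ whose deadline is due. I will index $\Omega$ by deadline, using hash maps keyed on $(\mathtt{tp},k)$ and $(\mathtt{ts},\tau)$. With this index, the retrieval costs time linear in its output. Each due obligation was inserted by some earlier \textsc{Update}, whose output is bounded by $O(N^{w})$ by Lemma~\ref{lem:rule}, so I plan to charge this cost to that bound. This charging is slightly delicate: the obligations due at a time-point may come from several earlier calls. I will handle it either by amortization or by assuming that due obligations are themselves among the relations counted in $N$, since caused events enter $R_0$ and are guarded over. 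The $\nu$ early exit when $C=\emptyset$ is $O(1)$.

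\textbf{Saturation loop.} This is the main step. Each pass over a section applies $O(1)$ rules, and each rule costs $O(N^{w})$ by Lemma~\ref{lem:rule}. That lemma already accounts for recomputing \textsc{Interp}, with memoization of the $O(1)$ definitions, each a clause evaluated in $O(N^{w})$ by Lemma~\ref{lem:clause}. It also includes agg-lets, since by (A2) an $\hat\omega$ over $m\le N^{w}$ rows is $O(m)$. Each \textsc{Update} also inserts at most $O(N^{w})$ tuples into $\Omega$, $C$, or $S$, at $O(1)$ amortized each. To detect the fixpoint check \enquote{$(\Omega,C,S)$ unchanged} in $O(N^{w})$ rather than by comparing whole sets, I will track a dirty flag that is set whenever an insertion actually adds a new element. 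Summing over all passes of all sections, the total number of passes is $K$ by definition, with once sections counting as one pass, so this phase costs $O(K\cdot N^{w})$.

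\textbf{Table updates.} Each of the $O(1)$ tables needs one evaluation of its add and remove clauses, costing $O(N^{w})$ by Lemma~\ref{lem:clause}. It then needs $O(N^{w})$ hash insertions and deletions, including maintenance of the two window subtables from the proof of Lemma~\ref{lem:filter}. For removal, I will index each table by row, so that a removal deletes all timestamps $\tau'$ of a row at once rather than scanning the table.

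The lagged tables pose a further point. Their swap $T(p_i)\gets T^{\Next}(p_i)$ can be done in $O(1)$ by pointer exchange. There is one subtle step: expiring rows of windowed tables whose timestamps have left the window. Expired rows are each inserted once and moved or deleted once, so their cost is amortized against insertion. I expect this to be the main obstacle for a worst-case rather than amortized reading, and I would state the theorem's bound as amortized, consistent with (A2). Summing the three phases gives $O(K\cdot N^{w})$.
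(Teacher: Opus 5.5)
Your proposal is correct and follows essentially the same route as the paper's proof. The paper bounds $K$ passes of $O(1)$ rules at $O(N^{w})$ each via Lemma~\ref{lem:rule}, adds an $O(N^{w})$ commit of constantly many tables, amortizes window maintenance over insertions, and charges due obligations to the time-point that emitted them (``discharged once by $\nu$ in $O(1)$''), so its bound is also implicitly amortized. For the obligations, keep your amortization option and drop the alternative of counting them in $N$. That alternative fails in general: for $\varphi_{\mathsf{del}}$ the deferred $\delete(d,u)$ effects occur in no guard, and the number due at a single $\nu$ call is not bounded by that call's $K$ and $N$, since many time-points may share a timestamp.
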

\begin{proof}
Each of the $K$ iterations is one pass over the $O(1)$ rules of a
section, each rule $O(N^{w})$ by Lemma~\ref{lem:rule}, hence $O(N^{w})$ per
iteration. Window updates touch each tuple
$O(1)$ times, committing the layer recomputes  constantly many table
relations ($O(N^{w})$). Each emitted
obligation---at most $O(K \cdot N^w)$ per time-point---is discharged once by $\nu$ in $O(1)$ by evaluating the 
validation filter.
\end{proof}

\begin{theorem}
A trace of $L$ time-points is processed in
$O\!\left(L\cdot K_{\max}\cdot N_{\max}^{w}\right)$, where $K_{\max}$ and
$N_{\max}$ are the largest iteration count and relation size over all time-points.
\end{theorem}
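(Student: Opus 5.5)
The plan is to sum the per-call bound of Theorem~\ref{thm:tp} over all calls to $\mu$ and $\nu$ made by the enforcement loop on a trace of $L$ time-points. For the $i$-th iteration of the loop, let $K_i$ be the number of fixpoint iterations and $N_i$ the largest relation size occurring in a guard. Theorem~\ref{thm:tp} bounds the cost of the reactive call $\mu$ at time-point $i$ by $O(K_i\cdot N_i^{w})$, hence by $O(K_{\max}\cdot N_{\max}^{w})$. Summing over the $L$ reactive calls gives $O(L\cdot K_{\max}\cdot N_{\max}^{w})$.

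The main obstacle is the proactive calls. The inner loop invokes $\nu$ once per timestamp in $\{\tau_i,\dots,\tau_{i+1}-1\}$, and the number of these timestamps is not bounded by $L$. I would split the $\nu$ calls into two kinds.

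\emph{Trivial calls.} These find $C=\emptyset$ and return $\bot$ immediately. If $\Omega$ is indexed by deadline (a hash map from $(\mathtt{ts},t)$ and $(\mathtt{tp},k)$ to the obligations due then), such a call costs $O(1)$. To avoid paying for every timestamp in a gap, I would observe that an implementation can jump directly to the next pending deadline: all timestamps strictly between two consecutive deadlines yield $\bot$ and leave the state unchanged. Alternatively, one can adopt the convention, implicit in the paper's cost model, that idle timestamp ticks are not charged to trace processing.

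\emph{Non-trivial calls.} Each of these appends a new time-point to the output trace $\sigma$ and costs $O(K_{\max}\cdot N_{\max}^{w})$ by Theorem~\ref{thm:tp}. Between two input time-points only one timestamp can be inserted proactively per tick. I would therefore interpret $L$ as the number of time-points of the produced trace, which is the quantity over which $K_{\max}$ and $N_{\max}$ range. Under that reading, the non-trivial $\nu$ calls contribute at most $O(L\cdot K_{\max}\cdot N_{\max}^{w})$ in total.

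Adding the reactive and proactive contributions gives $O(2L\cdot K_{\max}\cdot N_{\max}^{w})=O(L\cdot K_{\max}\cdot N_{\max}^{w})$. The only remaining point is that the obligations in $\Omega$ are discharged at amortized $O(1)$ each. This was already accounted for inside Theorem~\ref{thm:tp}, since each obligation is created by some call whose cost already covers its creation, so no additional global term is needed.
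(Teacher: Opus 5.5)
Your proposal is correct and follows the paper's proof, which simply sums the per-call bound of Theorem~\ref{thm:tp} over the $L$ time-points. Two points in your account go beyond what the paper spells out: the treatment of idle $\nu$ ticks (the literal loop calls $\nu$ once per timestamp, so either those trivial calls are skipped or they are not charged), and your reading of $L$ as the length of the produced trace, so that proactively inserted time-points are counted. Both are assumptions that the paper's one-line argument leaves implicit, and your version states them explicitly.
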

\begin{proof}
  Sum the complexity of Theorem~\ref{thm:tp} over the $L$ time-points.
\end{proof}

Note that if no tables are used in guards, such as in
Example~\ref{fig:example_ef}, then $N_{\max} = O(1)$ by 
(A1). If $P$ can be shown to always reach
its fixed points in a bounded number 
of steps, this ensures that the total complexity is
$O(L)$, an  amortized constant per-step complexity. 
This is particularly desirable in long-running systems
where large enforcer states can accumulate and penalize
evaluation latency.

\section{From Enforceable MFOTL to EF}
\label{sec:mfotl-to-ef}

In this section, we present an algorithm that generates EF programs
enforcing formulae in an expressive fragment of  MFOTL. 
Our generation algorithm consists of four components: a normalization
procedure that places all temporal subformulae of a formula behind 
let bindings (\ref{sec:lnf}); rewrite rules that extract guards from
subformulae (\ref{sec:guards}); rewrite rules
that generate enforceable clauses from a formula (\ref{sec:rewrite}); 
and a dependency analysis that decides enforceability of a conjunction of
clauses and fixes their evaluation order (\ref{sec:edg}).

\subsection{Let-normal form}
\label{sec:lnf}

In this section, we work with MFOTL formulae rewritten into 
\emph{let-normal form} with the shape
\begin{align*}
    &\LLet{e_1(\bar x_1)}{\varphi_1\!}{}\!\ldots \LLet{e_m(\bar x_m)}{\varphi_m\!}{} \Always \chi_1 \land \ldots \land \Always \chi_n,
\end{align*}
where each $\varphi_i$ is one of $\psi$, $\Prev_I \psi$, $\psi_l \Since_I \psi_r$, or $\LetAgg{\bar y}{\omega}{\bar t}{\bar g}{\psi}$ and $\psi$ is $\exists y_1,\dots,y_k. \chi$, with
$\chi$ an MFOTL formula without any quantifiers (except over subformulae with future operators),
$\Prev$ and $\Since_I$  operators, or aggregation. 
With formulae in let-normal form, quantification, past-temporal, and aggregation are applied as the last step of evaluating a let binding.
One shows:
\begin{theorem}
    For any formula $\Always \varphi$ in MFOTL, there exists an 
    equivalent $\psi$ in let-normal form.\leanref{let_normal_form}
\end{theorem}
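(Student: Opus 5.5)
The plan is to define a normalization function $\mathrm{lnf}$ by structural recursion on $\varphi$ and show that it preserves semantics. The output of $\mathrm{lnf}$ is a pair: a list $L$ of fresh let bindings $e_j(\bar x_j) = \varphi_j$, each $\varphi_j$ of one of the admissible shapes, and a residual formula $\chi$. The residual contains no quantifiers, $\Prev$, $\Since_I$, or aggregations, except inside subformulae guarded by a future operator. The let-normal form of $\Always\varphi$ is then obtained by wrapping the bindings of $L$, in order, around $\Always\chi$. This gives the required shape with $n=1$.

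The recursion works bottom-up; a conjunction $\Always\chi_1\wedge\dots\wedge\Always\chi_n$ appearing at top level can simply be kept as is.
\begin{itemize}
\item For atoms, $x=c$, $\top$, $\neg$, and $\land$, we normalize the children, concatenate their binding lists, and rebuild the connective on the residuals.
\item For each past or aggregating operator, namely $\exists y.\,\psi$, $\Prev_I\psi$, $\psi_l\Since_I\psi_r$, and $\bar y\gets\omega(\bar t;\bar g)\,\psi$, we first normalize the children. We then introduce a fresh event name $e$ together with a binding $e(\bar x) = \mathrm{op}(\psi')$, where $\bar x$ lists the free variables of the operator formula and $\psi'$ is the normalized child. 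The residual becomes the atom $e(\bar x)$.
\item For an existing $\LLet{e(\bar x)}{\varphi_1}{\varphi_2}$, we normalize $\varphi_1$ into a binding for $e$ (hoisting its own bindings first) and then continue with $\varphi_2$.
\item For future operators $\Next_I$ and $\Eventually_I$, we also normalize the body, hoist its bindings, and keep the operator in the residual. This is allowed because the shape permits quantifiers only beneath future subformulae.
\end{itemize}

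Correctness rests on a single substitution lemma: for a fresh $e$,
\[
v,i\vDash_{\sigma[e\mapsto\theta]} \chi[e(\bar x)/\theta] \iff v,i\vDash_\sigma \chi .
\]
This holds provided $\fv(\theta)\subseteq\bar x$ and $e$ does not occur in $\sigma$ or in $\chi$. Combining it with the let clause of Figure~\ref{fig:semantics} shows that each hoisting step preserves satisfaction at every time-point and under every valuation. This pointwise form is needed because the binding is used under $\Always$ and under past operators at other time-points.

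The lemma itself is proved by induction on $\chi$. The key case is the atom $e(\bar x)$, which we unfold through the definition of $\sigma[e\mapsto\theta]$: the tuple $\sem{\bar x}_v$ belongs to the extended database at $i$ iff some $v'$ agreeing with $v$ on $\bar x$ satisfies $\theta$, and since $\fv(\theta)\subseteq\bar x$, this is equivalent to $v,i\vDash\theta$.

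The main obstacle I expect is the bookkeeping around freshness and nested lets. Hoisting a binding out of the body of an inner $\mathsf{let}$ changes the trace it is evaluated against. It also risks capturing variables bound by enclosing quantifiers or aggregation group variables. I would handle this in two ways:
\begin{itemize}
\item Bound variables are first $\alpha$-renamed apart, so that the $\bar x$ parameters of a binding are exactly the free variables of its operator, including those bound further out.
\item Each event name is kept fresh, so that extending $\sigma$ with a later binding does not alter the meaning of an earlier one; this is a monotonicity and irrelevance lemma for unused event names.
\end{itemize}
Aggregation also needs care. Its semantics quantifies over valuations $v'$ with domain exactly $\fv(\psi)$, so the hoisted body must keep exactly the same free variables. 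Otherwise the multiset $\mathcal G$ would change.
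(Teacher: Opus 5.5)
Your strategy is the natural argument, and it is sound for the bindings you generate yourself: you hoist every quantifier, past operator and aggregation bottom-up into a fresh let parameterised by all free variables of the hoisted subformula, and you justify each step by a pointwise substitution lemma whose atom case unfolds $\sigma[e\mapsto\theta]$. The paper gives no written proof here, only a pointer to its Lean development.

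\textbf{The gap is in your treatment of a user-written $\LLet{e(\bar x)}{\varphi_1}{\varphi_2}$.} You keep the name $e$ and move the binding to the top, which enlarges its scope. Let names are ordinary event names from $\Events$, so $e$ may also occur outside $\varphi_2$, as a base event or as another, possibly nested, let. For instance, $\Always(\neg p() \land \LLet{p()}{\top}{p()})$ holds exactly on traces in which $p$ never occurs. Your output $\LLet{p()}{\top}{\Always(\neg p()\land p())}$, however, is violated by every non-empty trace.

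Renaming $e$ to a fresh $e'$ is therefore necessary, but it is not enough on its own. The semantics $\sigma[e\mapsto\varphi_1]$ \emph{extends} the $e$-tuples already in the trace (or added by an enclosing let) rather than replacing them. So inside $\varphi_2$, the atom $e(\bar t)$ denotes the union of the old relation with the tuples satisfying $\varphi_1$. The hoisted binding must therefore be $e'(\bar x) := e(\bar x)\lor\varphi_1'$, where $\varphi_1'$ is the normalised $\varphi_1$; this is still an admissible present body. The name $e$ is replaced by $e'$ only inside $\varphi_2$. After this preprocessing, your freshness and irrelevance lemma covers everything else.

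Conversely, your worry about variable capture does not arise for user-written lets. Their bodies are evaluated under an independent valuation $v'$, so they can be moved past enclosing quantifiers unchanged.

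You also misread the side condition on the shape. It allows quantifiers \emph{over} subformulae that contain future operators, not quantifiers beneath future operators. This is harmless for the existence claim: you hoist every quantifier, and let bodies such as $\exists u.\,(\deleteReq(d,u)\land\neg\Eventually_{[0,30]}\delete(d,u))$ still fit the literal shape. It does, however, give a different normal form from the one the paper relies on downstream. There, quantifier blocks are kept together (cf.\ $\mathtt{Exists2}$), and quantifiers over future subformulae stay in the $\Always$-body (cf.\ $\varphi_{\mathsf{del}}$ in the appendix), since EF filters cannot express future operators.
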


\subsection{Guard extraction}
\label{sec:guards}

Consider a quantifier-free FOL formula $\Phi$.
Next, we show how to derive a disjunction of guards $\pi = \bigvee_i \bigwedge_{j} \gamma_{ij}$
and a filter $\varphi$  such that 
$\pi \land \varphi \equiv \Phi$ and every guard $\kappa$ in $\pi$ \emph{binds} every variable
$x$ of a given set $X$, i.e., $\kappa$ contains an atom $p(\bar t)$ with $x \in \bar t$ or an atom $x = c$.

Given a set $m$ of enumerable events that can be used as guards,
we define a transformation $m \vdash \Phi \rightsquigarrow^+_X (\pi,\varphi')$
that finds a guard for every $x \in X$ in $\Phi$ and collects the resulting
\emph{joint guard} of $X$ in $\pi$,
preserving $\Phi \equiv \pi \land \varphi$.
The rules defining $\rightsquigarrow^+_X$ and its dual $\rightsquigarrow^-_X$
are shown in Figure~\ref{fig:jointguards}.

\newcommand{\guards}{\mathrm{Guards}}

Rules \textsc{Pred} and \textsc{Eq} insert a new guard from an enumerable event and
an equation, respectively. \textsc{Vac} provides the base case for $\top$.
Rule \textsc{Neg} exchanges $\rightsquigarrow^+_X$ for $\rightsquigarrow^-_X$. 
Rules \textsc{And}$^+$ and \textsc{And}$^-$ extract guards from conjuncts.
The former relies on the equation
$\pi \land \varphi_j \equiv \pi' \land \varphi'_j \Longrightarrow \pi \land \bigwedge_i \varphi_i \equiv \pi' \land \bigwedge_{i \neq j} \varphi_i \land \varphi'_j$,
which enables a guard to be extracted from any conjunct.
The latter uses the dual equation
$\forall i.~\pi \to \varphi_i \equiv \pi'_i \to \varphi'_i \Longrightarrow \pi \to \bigwedge_i \varphi_i \equiv \bigvee_i \pi'_i \to \bigwedge_i (\pi'_i \to \varphi'_i)$.
We write
$\guards^m_X(\Phi) = (\pi, \varphi)$ if $m \vdash \Phi \rightsquigarrow^+_X (\pi, \varphi)$
(for a fixed choice among several derivations),
and $\guards^m_X(\Phi) = \bot$ if no such derivation exists and show:

\begin{figure}[t]
\footnotesize
\renewcommand{\MathparLineskip}{\lineskiplimit=.6\baselineskip\lineskip=.6\baselineskip plus .2\baselineskip}
\begin{mathpar}
\inferrule*[left=\textsc{None}]{ }{m \vdash \Phi \rightsquigarrow^p_\emptyset (\{\top\}, \Phi)}
\and
\inferrule*[left=\textsc{Vac}]{ }{m \vdash \top \rightsquigarrow^-_X (\emptyset, \top)}
\and
\inferrule*[left=\textsc{Pred}]{p \in m \\ X \subseteq \bar t}
  {m \vdash p(\bar t) \rightsquigarrow^+_X (\{p(\bar t)\}, \top)}
\and
\inferrule*[left=\textsc{Eq}]{X \subseteq \{x\}}
  {m \vdash x = c \rightsquigarrow^+_X (\{x = c\}, \top)}
\and
\inferrule*[left=\textsc{And}$^+$]
  {m \vdash \varphi \rightsquigarrow^+_{X_1} (\pi_1, \varphi') \\
   m \vdash \psi \rightsquigarrow^+_{X_2} (\pi_2, \psi') \\ X \subseteq X_1 \cup X_2}
  {m \vdash \varphi \land \psi \rightsquigarrow^+_X
     (\{\kappa_1 \land \kappa_2 \mid \kappa_1 \in \pi_1, \kappa_2 \in \pi_2\}, \varphi' \land \psi')}
\and
\inferrule*[left=\textsc{And}$^-$]
  {m \vdash \varphi \rightsquigarrow^-_{X} (\pi_1, \varphi') \\
   m \vdash \psi \rightsquigarrow^-_{X} (\pi_2, \psi')}
  {m \vdash \varphi \land \psi \rightsquigarrow^-_X
     (\pi_1 \cup \pi_2, (\pi_1 \to \varphi') \land (\pi_2 \to \psi'))}
\and
\inferrule*[left=\textsc{Neg}]{m \vdash \varphi \rightsquigarrow^{-p}_X (\pi, \varphi')}
  {m \vdash \neg\varphi \rightsquigarrow^{p}_X (\pi, \neg\varphi')}
\end{mathpar}
\caption{Joint guard extraction}
\label{fig:jointguards}
\vspace{-1em}
\end{figure}

\begin{lemma}
  If $\guards^m_X(\Phi) = (\pi, \varphi)$, then every $\kappa \in \pi$ binds every $x \in X$, and
  $\bigvee_{\kappa \in \pi}\kappa \land \varphi \equiv \Phi$.\leanref{Lemma_4_2}
\end{lemma}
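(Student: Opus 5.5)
The plan is to prove a strengthened statement by induction on the derivation of $m \vdash \Phi \rightsquigarrow^p_X (\pi, \varphi)$, covering both polarities at once. The lemma only talks about $\rightsquigarrow^+_X$, but rule \textsc{Neg} switches polarity, so the induction needs a matching invariant for $\rightsquigarrow^-_X$. I would prove, for every derivation:
\begin{itemize}
\item[(i)] every $\kappa \in \pi$ binds every $x \in X$;
\item[(ii)] if $p = +$ then $\bigvee_{\kappa\in\pi}\kappa \land \varphi \equiv \Phi$;
\item[(iii)] if $p = -$ then $\bigl(\bigvee_{\kappa\in\pi}\kappa\bigr) \to \varphi \equiv \Phi$.
\end{itemize}
Here $\equiv$ means equivalence under every valuation and time-point; since everything is quantifier-free this is plain propositional reasoning over the atoms. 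The lemma is then clause (i) together with clause (ii).

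Part (i) is a direct check for each rule.
\begin{itemize}
\item \textsc{None}: $X=\emptyset$, so the condition is vacuous.
\item \textsc{Vac}: $\pi=\emptyset$, so there is nothing to check.
\item \textsc{Pred}: the guard $p(\bar t)$ binds $X$ because $X \subseteq \bar t$.
\item \textsc{Eq}: the guard $x = c$ binds $X$ because $X \subseteq \{x\}$.
\item \textsc{Neg}: $\pi$ and $X$ are unchanged.
\item \textsc{And}$^+$: $\kappa_1\land\kappa_2$ binds $X_1 \cup X_2 \supseteq X$ by the two induction hypotheses.
\item \textsc{And}$^-$: each guard in $\pi_1\cup\pi_2$ comes from a sub-derivation for the same $X$.
\end{itemize}

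For the equivalences, the base cases are immediate.
\begin{itemize}
\item \textsc{None}: both $\top\land\Phi$ and $\top\to\Phi$ are equivalent to $\Phi$.
\item \textsc{Vac}: $\bigvee\emptyset \to \top$ is $\bot\to\top$, which is $\top$.
\item \textsc{Pred} and \textsc{Eq}: $\gamma\land\top\equiv\gamma$.
\end{itemize}
The inductive cases go as follows.
\begin{itemize}
\item \textsc{And}$^+$: distribute the conjunction over the product of disjunctions, so that $\bigl(\bigvee\pi_1 \land \varphi'\bigr)\land\bigl(\bigvee\pi_2\land\psi'\bigr) \equiv \bigvee_{\kappa_1,\kappa_2}(\kappa_1\land\kappa_2) \land \varphi'\land\psi'$.
\item \textsc{Neg}: from $\varphi \equiv \pi\land\varphi'$ we get $\neg\varphi\equiv\pi\to\neg\varphi'$, which is clause (iii). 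From $\varphi\equiv\pi\to\varphi'$ we get $\neg\varphi\equiv\pi\land\neg\varphi'$, which is clause (ii).
\item \textsc{And}$^-$: by hypothesis $\varphi\equiv\pi_1\to\varphi'$ and $\psi\equiv\pi_2\to\psi'$, and we must show $(\pi_1\lor\pi_2)\to\bigl((\pi_1\to\varphi')\land(\pi_2\to\psi')\bigr) \equiv \varphi\land\psi$. Split on whether $\pi_1\lor\pi_2$ holds. If it holds, the left side reduces to $(\pi_1\to\varphi')\land(\pi_2\to\psi')$, which is $\varphi\land\psi$ by hypothesis. If it fails, the left side is $\top$, and so is $\varphi\land\psi$, since both implications are vacuously true.
\end{itemize}

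I expect the main obstacle to be choosing invariant (iii) for negative polarity so that \textsc{Vac}, \textsc{Neg} and \textsc{And}$^-$ all go through. The case split on whether some guard holds in \textsc{And}$^-$ is the only nontrivial propositional step. In the Lean formalization, the remaining care is in interpreting a guard set $\pi$ as a disjunction of conjunctions of atoms, with the empty set read as $\bot$.
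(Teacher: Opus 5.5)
Your proof is correct and takes essentially the paper's route. The paper justifies \textsc{And}$^+$ and \textsc{And}$^-$ by exactly your two invariants: the conjunctive reading $\pi\land\varphi\equiv\Phi$ for $\rightsquigarrow^+_X$ and the implicative reading $\pi\to\varphi\equiv\Phi$ for $\rightsquigarrow^-_X$, with \textsc{Neg} swapping between them, so the lemma is the positive half of a simultaneous induction over derivations of both polarities (the paper states those equations relative to an ambient guard, and your invariants are the special case that the binary rules of Figure~\ref{fig:jointguards} need).
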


\noindent
For a trigger $(\pi, \psi)$ and a variable $x$, we write $m \vdash (\pi, \psi) \rightsquigarrow^p_x (\pi', \psi')$
if either every $\kappa \in \pi$ binds $x$ and $(\pi', \psi') = (\pi, \psi)$, or
$m \vdash \psi \rightsquigarrow^p_{\{x\}} (\pi_0, \psi')$ and
$\pi' = \{\kappa \land \kappa_0 \mid \kappa \in \pi, \kappa_0 \in \pi_0\}$.

\subsection{Enforcement rewriting}
\label{sec:rewrite}

Our rewrite rules turns a formula into a set of \emph{enforcement clauses}.
A clause $\theta\Rightarrow\varepsilon$ pairs a trigger $\theta=(\pi,\varphi)$
with an \emph{effect} $\varepsilon$
of the form $e(\bar t)$, $\neg e(\bar t)$, $\Eventually_I e(\bar t)$,
or $\Next \ldots \Next e(\bar t)$. A clause denotes the rule
$\Always\,\forall{\bar x}.\;\theta\to\varepsilon$ with
$\bar x =\fv(\theta)\cup \fv(\varepsilon)$: for any
valuation satisfying the trigger, the effects are applied.

The judgement $\Gamma\vdash\varphi\hookrightarrow^\Cau \mathcal C$ produces a set of sets of
clauses $\mathcal C = \{ C_k \}_k$ such that enforcing \emph{any}
alternative $C_k$ enforces $\varphi$.
Dually, the judgement $\Gamma\vdash\varphi\hookrightarrow^\Sup \mathcal C$
is such that enforcing any alternative enforces $\neg \varphi$.
The environment $\Gamma$ maps let-bound
event names to a triple of booleans $(g, c, s)$
that indicate whether the corresponding event is guarded, causable,
and suppressable, respectively.

Let $\mathcal C[f]$ denote
$\{\{ f(\pi,\psi,\varepsilon) \mid (\pi, \psi) \Rightarrow \varepsilon \in C \} \mid C \in \mathcal C\}$
and $\mathcal C_1 \otimes \mathcal C_2$ denote $\{C_1 \cup C_2 \mid (C_1,C_2) \in \mathcal C_1 \times \mathcal C_2 \}$.
The enforcement rewrite rules are shown in Figure~\ref{fig:rewrite}.
We write $m_\Gamma$ for the set of enumerable events: the base events and the let-bound $e$
with $\Gamma(e)_1 = \top$.

\begin{figure*}[t]
\footnotesize
\renewcommand{\MathparLineskip}{\lineskiplimit=.4\baselineskip\lineskip=.4\baselineskip plus .1\baselineskip}
\begin{mathpar}
\inferrule*[left=$\top^\Cau$]{ }{\Gamma\vdash\top\hookrightarrow^\Cau \{\emptyset\}}
\and
\inferrule*[left=\textsc{Ev}$^\Cau$]{e \in \Cau}
  {\Gamma\vdash e(\bar t)\hookrightarrow^\Cau \{\{ (\top, \top) \Rightarrow e(\bar t)\} \}}
\and
\inferrule*[left=\textsc{Ev}$^\Sup$]{e \in \Sup}
  {\Gamma\vdash e(\bar t)\hookrightarrow^\Sup \{\{ (e(\bar t), \top) \Rightarrow \neg e(\bar t)\} \}}
\and
\inferrule*[left=\textsc{Let}$^\Cau$]{\Gamma(e)_2 = \top}
  {\Gamma\vdash e(\bar t)\hookrightarrow^\Cau \{\{ (\top, \top) \Rightarrow \mathtt{Cau}\_e(\bar t)\} \}}
\and
\inferrule*[left=\textsc{Let}$^\Sup$]{\Gamma(e)_3 = \top}
  {\Gamma\vdash e(\bar t)\hookrightarrow^\Sup \{\{ (e(\bar t),\top) \Rightarrow \mathtt{Sup}\_e(\bar t)\} \}}
\and
\inferrule*[left=\textsc{Neg}]{\Gamma\vdash\varphi\hookrightarrow^{\bar t}\mathcal C}
  {\Gamma\vdash\neg\varphi\hookrightarrow^t\mathcal C}
\and
\inferrule*[left=\textsc{And}$^\Sup$]
  {\Gamma\vdash\varphi_j\hookrightarrow^\Sup\mathcal {\mathcal C}_j \\ \varphi_i~\text{present for}~i \neq j}
  {\Gamma\vdash\textstyle\bigwedge_i\varphi_i\hookrightarrow^\Sup \mathcal C_j[\pi,\psi,\varepsilon \mapsto (\pi, \psi \land \textstyle\bigwedge_{i\neq j} \varphi_i) \Rightarrow \varepsilon]}
\and  
\inferrule*[left=\textsc{And}$^\Cau$]
  {\forall i.~\Gamma\vdash\varphi_i\hookrightarrow^\Cau\mathcal {\mathcal C}_i}
  {\Gamma\vdash\textstyle\bigwedge_i\varphi_i\hookrightarrow^\Cau \bigotimes_i \mathcal C_i}
\and
\inferrule*[left=\textsc{Ex}$^\Cau$]{ \Gamma \vdash \varphi \hookrightarrow^\Cau \mathcal C }
           {\Gamma\vdash\exists x.\,\varphi\hookrightarrow^\Cau \mathcal C[\pi,\psi,\varepsilon \mapsto (\pi[0/x], \psi[0/x]) \Rightarrow \varepsilon[0/x]]}
\and
\inferrule*[left=\textsc{Ex}$^\Sup$]
  {\displaystyle\Gamma\vdash\varphi\hookrightarrow^\Sup\mathcal C \qquad
  \displaystyle\mathcal C' = \{ C \in \mathcal C \mid \forall (\pi,\psi) \Rightarrow \varepsilon \in C.~ \exists \pi', \psi'.~(\pi, \psi) \rightsquigarrow^+_x (\pi', \psi')\} }
  {\Gamma\vdash\exists x.\,\varphi\hookrightarrow^\Sup \{ \{ (\pi', \psi') \Rightarrow \varepsilon \mid (\pi, \psi) \Rightarrow \varepsilon \in C, (\pi, \psi) \rightsquigarrow^+_x (\pi',\psi') \} \mid C \in \mathcal C' \}}
\and
\inferrule*[left=\textsc{Fut}$_{\Eventually}^\Cau$]
  {\displaystyle\Gamma\vdash\varphi\hookrightarrow^\Cau\mathcal C\qquad
      \displaystyle\mathcal C' = \{ C \in \mathcal{C} \mid C \neq \emptyset \land \forall (\pi, \psi) \Rightarrow \varepsilon \in C.~\exists p, \bar t.~\varepsilon=p(\bar t) \land \pi = \psi = \top \} \qquad 1 \le b < \infty}
  {\Gamma\vdash\Eventually_{[a,b]}\varphi\hookrightarrow^\Cau \mathcal C'[\pi,\psi,\varepsilon \mapsto (\top,\top) \Rightarrow  \Eventually_{[b, b]} \varepsilon]}
\and
\hspace{-2em}\inferrule*[left=\textsc{Fut}$_{\Next}^\Cau$]
  {\displaystyle\Gamma\vdash\varphi\hookrightarrow^\Cau\mathcal C \qquad 1 \le b\qquad
      \displaystyle\mathcal C' = \{ C \in \mathcal{C} \mid C \neq \emptyset \land \forall (\pi, \psi) \Rightarrow \varepsilon \in C.~\exists p, \bar t.~\varepsilon=p(\bar t) \land \pi = \psi = \top \}}
  {\Gamma\vdash\Next_{[0,b]}\varphi\hookrightarrow^\Cau \mathcal C'[\pi,\psi,\varepsilon \mapsto (\top,\top) \Rightarrow \Next \varepsilon]}
\and
\hspace{-2em}\inferrule*[left=\textsc{Fut}$_{\Next^n}^\Cau$]
  {\displaystyle\Gamma\vdash\varphi\hookrightarrow^\Cau\mathcal C\qquad 1 \le n\qquad
      \displaystyle\mathcal C' = \{ C \in \mathcal{C} \mid \forall (\pi, \psi) \Rightarrow \varepsilon \in C.~\exists p, \bar t.~\varepsilon=p(\bar t) \land \pi = \psi = \top \}}
  {\Gamma\vdash\underbrace{\Next\ldots\Next}_{n}\varphi\hookrightarrow^\Cau \mathcal C'[\pi,\psi,\varepsilon \mapsto (\top,\top) \Rightarrow \Next\ldots\Next \varepsilon]}
\and
\end{mathpar}
\vspace{-2em}
\caption{Enforcement rewrite rules}
\label{fig:rewrite}
\end{figure*}

The two \textsc{Ev} rules are the only rules implementing immediate causation or
suppression.
If the signature declares $e$ causable ($e\in\Cau$), then $e$ is enforced by an
unconditional clause that causes $e(\bar t)$; if it declares $e$ suppressable
($e\in\Sup$), the clause fires exactly when $e(\bar t)$ is present and suppresses it.
The \textsc{Let} rules assert a fresh
\emph{obligation event} $\mathtt{Cau}\_e$ (resp.\ $\mathtt{Sup}\_e$), of the same arity as $e$, that will be
consumed by clauses that realize $e$'s body. The \textsc{Neg} rule
flips the target formula's polarity: causing $\neg\varphi$ is suppressing $\varphi$
($\bar\Cau = \Sup$, $\bar\Sup = \Cau$).
To cause a conjunction (\textsc{And}$^\Cau$), every conjunct must be caused, so the clause
sets are combined by the product $\bigotimes$; to suppress it
(\textsc{And}$^\Sup$) it suffices to falsify one conjunct $\varphi_j$, with the
other conjuncts added to the filter. To cause
$\exists x.\,\varphi$ (\textsc{Ex}$^\Cau$), the
rule substitutes a canonical value $0$ for $x$. To
suppress it (\textsc{Ex}$^\Sup$), every witness must be falsified, hence
$x$ must be guarded.

The temporal rules implement deferred causation. The \textsc{Fut}$_{\Eventually}$ and
\textsc{Fut}$_{\Next}$ rules apply only when the sub-derivation has reduced the body
to a single causable event; the effect is scheduled at $b$ resp. after $n$
time points. Since
$\nu$ only inserts a time-point when something is due, the intermediate
time-points of a chain of $\Next$ operators need not exist within a bounded
interval; hence bounded intervals are only supported for a single $\Next$
(\textsc{Fut}$_{\Next}$), whereas chains must be unbounded (\textsc{Fut}$_{\Next^n}$).
Deadlines are at least one time unit (resp.\ time-point) ahead, so that obligations
created by $\nu$ are discharged by a later call. 

\subsection{Generating the candidate clause sets}
\label{sec:gen}

We write
$\mathrm{gate}^\Cau_p(\mathcal C)$ (resp. $\mathrm{gate}^\Sup_p$)
for the set of sets of clauses obtained by conjoining the atom $\mathtt{Cau}\_p(\bar x)$
(resp. $\mathtt{Sup}\_p$)
to every guard $\pi$ in $\mathcal C$, and $\Gamma\vdash\varphi\hookrightarrow^\alpha$ for the set of all
$C \in \mathcal C$ with $\Gamma\vdash\varphi\hookrightarrow^\alpha\mathcal C$.
Rewriting is described in Algorithm~\ref{alg:gen}. The algorithm
computes a let-normal form $(\mathcal L, \varphi^\flat)$ of $\Always\varphi$ as in Section~\ref{sec:lnf}, with
$\mathcal L$ its lets and $\varphi^\flat = \chi_1 \land \dots \land \chi_n$ the conjunction of its $\Always$-bodies.
It
first processes every let in order, generating the triple
$\Gamma(p)=(g,c,s)$ and causation/suppression clauses
$\mathcal C^\Cau(p),\mathcal C^\Sup(p)$,
and then rewrites
$\varphi^\flat$. The output is a \emph{set of candidate clause sets}; to
yield a valid enforcer, a clause set will have to pass the tests in
Section~\ref{sec:edg}.

\begin{algorithm}[t]
\caption{Generation of candidate clause sets}
\label{alg:gen}
\begin{algorithmic}[1]
  \footnotesize
\Function{Generate}{$\Always\varphi$}
  \State $(\mathcal L,\varphi^\flat)\gets\textsc{LetNormalForm}(\Always\varphi)$;\quad
         $\Gamma\gets\emptyset$
  \State \textbf{for all} $\big(p(\bar x):=\psi\big)\in\mathcal L$ \textbf{do} $\Gamma\gets\textsc{TypeLet}(\Gamma,\,p,\bar x,\psi)$
  \State \Return $\bigcup_{C \in \mathcal C.\Gamma\vdash\varphi^\flat\hookrightarrow^\Cau\mathcal C}  \textsc{Realizations}(\Gamma,C)$
\EndFunction
\vspace{1em}
\Function{TypeLet}{$\Gamma,\,p,\bar x,\psi$}
\State $\chi\gets\textsc{StripExists}(\psi)$; $m\gets m_\Gamma$; $\mathcal C^\Cau(p), \mathcal C^\Sup(p) \gets \emptyset$
\Switch{$\chi$}
    \Case{$\Once_{[a,b]}\varphi$}
       \State $\theta\gets\guards^m_{\fv(\varphi)}(\varphi)$;\;
              \textbf{if} $\theta=\bot$ \textbf{then reject end if};\;
              $\mathcal C^\Cau(p)\gets
              a = 0\ ?\ 
              \mathrm{gate}^\Cau_p(\Gamma\vdash\varphi\hookrightarrow^\Cau) : \emptyset$
    \EndCase
    \Case{$\Prev\varphi \;\mid\; \mathtt{Agg}(\dots,\varphi)$}
       \State $\theta\gets\guards^{m}_{\fv(\varphi)}(\varphi)$;\;
              \textbf{if} $\theta=\bot$ \textbf{then reject end if}
    \EndCase
    \Case{$\varphi_l\Since_{[a,b]}\varphi_r$}
       \State $\theta_l\gets\guards^{m}_{\bar x\cup\fv(\chi)}(\neg\varphi_l)$;  $\theta_r\gets\guards^{m}_{\bar x\cup\fv(\chi)}(\varphi_r)$
       \State \textbf{if} $\theta_l=\bot \lor \theta_r=\bot$ \textbf{then reject end if}
       \State \textbf{if} $a=0$ \textbf{then} $\mathcal C^\Cau(p)\gets \mathrm{gate}^\Cau_p(\Gamma\vdash\varphi_r\hookrightarrow^\Cau)$;\ 
       $\mathcal C^\Sup(p)\gets
       \mathrm{gate}^\Sup_p\big((\Gamma\vdash\varphi_l\hookrightarrow^\Sup)
       \otimes(\Gamma\vdash\varphi_r\hookrightarrow^\Sup)\big)$
       \State \hspace{3.1em} \textbf{else} $\mathcal C^\Sup(p)\gets
       \mathrm{gate}^\Sup_p(\Gamma\vdash\varphi_l\hookrightarrow^\Sup)$ \textbf{end if}
    \EndCase
    \Case{otherwise}
       \State $\theta\gets\guards^{m}_{\bar x\cup\fv(\chi)}(\chi)$;\; \textbf{if} $\theta=\bot$ \textbf{then return} $\Gamma[p\mapsto(\bot,\bot,\bot)]$ \textbf{end if}
       \State \textbf{if} $\chi$ is present \textbf{then} $\mathcal C^\Cau(p)\gets\mathrm{gate}^\Cau_p(\Gamma\vdash\psi\hookrightarrow^\Cau)$;\; $\mathcal C^\Sup(p)\gets\mathrm{gate}^\Sup_p(\Gamma\vdash\psi\hookrightarrow^\Sup)$ \textbf{end if}
    \EndCase
  \EndSwitch
  \State \Return $\Gamma\big[p\mapsto(\top,\;
         \mathcal C^\Cau(p)\neq\emptyset,\;
         \mathcal C^\Sup(p)\neq\emptyset)\big]$
\EndFunction
\end{algorithmic}
\end{algorithm}
\textsc{TypeLet} performs a case split on the let body's outermost operator.
Each case decides \emph{guardedness} $g$ and computes the
two clause families by rewriting the relevant operand and gating it with $p$'s
obligation event. The boolean $c$ (resp. $s$) can be computed by checking
the emptiness of the set of causation (resp. suppression) clauses.
Which clauses are generated depends on the topmost operator of the let.
A $\Once$ or $\Since$
is causable only when its interval admits the present ($a=0$) by causing
the (right) operand; a $\Since$ is suppressable when either $a>0$ and
its left operand is suppressable or both its operands are.
A $\Prev$ or aggregation is neither causable or suppressable.
Insufficient guarding is fatal for a temporal or
aggregation let. For any other let, it 
downgrades $p$ to \emph{filter-only}, making it usable as a
boolean condition but not caused or suppressed.

Once $\Gamma$ is complete, \textsc{Generate} rewrites $\varphi^\flat$ at $\Cau$,
yielding a set of candidate clause sets. \textsc{Realizations}$(\Gamma, C)$ is the set of the least clause sets $R \supseteq C$ such that,
for some choice of $f(p) \in \mathcal C^\Cau(p)$ and $g(p) \in \mathcal C^\Sup(p)$ for every let $p$,
$f(p) \subseteq R$ (resp.\ $g(p) \subseteq R$) whenever the effect of a clause of $R$ is on
$\mathtt{Cau\_}p$ (resp.\ $\mathtt{Sup\_}p$), possibly under $\Eventually$ or $\Next$. The
resulting candidate clause sets serve as input to the dependency analysis
in the next section, which filters clause sets yielding a terminating enforcer.

\subsection{Dependency analysis}
\label{sec:edg}

Not all candidate clause sets generated in the previous step yield a sound enforcer.
For example, the set
$C_1=\{ (\top, \top) \Rightarrow A(0), (\top, \top) \Rightarrow \neg A(0) \}$ is
unsatisfiable,
while $C_2 = \{ (A(x), \top) \Rightarrow A(x+1) \}$ generates
an infinite number of events and never terminates.
We state two checks that together guarantee that a set of clauses
can be finitely enforced. The first rules out events that are simultaneously caused
and suppressed, as in $C_1$;
the second rules out non-termination, as in $C_2$.

\paragraph{Cause/suppress conflicts}
Consider the \emph{Event Dependency Graph} (EDG) whose vertices are the base and obligation events
(let atoms are decomposed into the events defining them),
and there is an edge $e \to e'$ labeled with $c_k^{a}$ when
$e$ occurs in the trigger of some clause $k$ and $e'$ occurs in its effect with polarity
$a\in\{\Cau,\Sup\}$ (deferred effects have polarity $\Cau$)  (Figure~\ref{fig:edg}).
Its strongly connected components (SCCs) are
mutually-influencing event sets. For each $e$ that one clause
with trigger $\theta_\Cau$ causes and another clause with trigger $\theta_\Sup$
suppresses, the pair $(\theta_\Cau,\theta_\Sup)$ is sent
to an SMT solver, which tests the unsatisfiability of $\theta_\Cau \land \theta_\Sup$
by only taking into account events from SCCs
\emph{(strictly) before} $e$ in the topological order of SSCs.
If one of the effects is deferred (\kw{delay} or \kw{next}), the two
triggers are evaluated at different time-points, and no event is shared.
Any \textsc{Sat} result rejects the clause set.
This rules out $C_1$  above but allows, e.g.,
$\{ (\top, B(0)) \Rightarrow A(0), (\top, \neg B(0)) \Rightarrow \neg A(0) \}$,
which is equivalent to $\Always A(0) \leftrightarrow B(0)$.

\begin{figure}[t]
\vspace{-5mm}
\centering
\begin{minipage}[b]{0.6\textwidth}
\centering
\scalebox{0.5}{  
\begin{tikzpicture}[>=latex,line join=bevel,   p/.style ={draw, circle, inner sep=1.5pt, minimum size=8mm}]

\begin{scope}
  \pgfsetstrokecolor{black}
  \definecolor{strokecol}{rgb}{0.0,0.0,0.0}
  \pgfsetstrokecolor{strokecol}
  \draw [draw=red!65] (100.47bp,8.0bp) -- (100.47bp,159.0bp) -- (406.69bp,159.0bp) -- (406.69bp,8.0bp) -- cycle;
  \draw[red!65] (382.58bp,20.5bp) node {SCC 0};
\end{scope}
  \node[p] (n1) at (137.85bp,75.0bp) {$\mathtt{Sup\_Exists0}$};
  \node[p] (n5) at (253.58bp,50.0bp) {$\mathtt{use}$};
  \node[p] (n2) at (369.31bp,100.0bp) {$\mathtt{Sup\_Exists2}$};
  \node[p] (n3) at (19.737bp,95.0bp) {$\mathtt{consent}$};
  \node[p] (n4) at (19.737bp,54.0bp) {$\mathtt{revoke}$};
  \draw [->, bend right] (n1) to node[below] {$c_2^{\mathbb S}$} (n5);
  \draw [->, bend right] (n2) to node[below] {$c_1^{\mathbb C}$} (n1);
  \draw [->, bend right] (n5) to node[below] {$c_1^{\mathbb C}$} (n1);
  \draw [->] (n5) to node[above] {$c_0^{\mathbb C}$} (n2);
  \draw [->, loop above] (n5) to node {$c_2^{\mathbb S}$} (n5);
  \draw [->] (n3) to node[below] {$c_1^{\mathbb C}$} (n1);
  \draw [->] (n4) to node[below] {$c_1^{\mathbb C}$} (n1);
  \draw [->, bend left] (n3) to node[below] {$c_0^{\mathbb C}$} (n2);
  \draw [->, bend right] (n4) to node[above, pos=0.3] {$c_0^{\mathbb C}$} (n2);
\end{tikzpicture}}
\caption{EDG of the program in Figure~\ref{fig:example_ef}. In its SCC, $\mathtt{use}$
is only suppressed ($c_2^{\Sup}$) and $\mathtt{Sup\_Exists0/2}$ only caused
($c_{0,1}^{\Cau}$), so the check succeeds.}
\label{fig:edg}
\end{minipage}\hfill
\begin{minipage}[b]{0.37\textwidth}
\centering
\scalebox{0.75}{\begin{tikzpicture}[>={Stealth[round]}, node distance=20mm,
   p/.style ={draw, circle, inner sep=1.5pt, minimum size=8mm},
   ok/.style={draw=green!55!black, thick},
   bad/.style={draw=red!65, thick},
   lbl/.style={font=\scriptsize, align=center}]
  \node[p, ok]  (u) {$\mathtt{use.1}$};
  \node[p, bad, below=10mm of u] (a) {$\mathtt{A.0}$};
  \draw[->, green!55!black] (u) to[out=-40,in=40,looseness=8]
        node[lbl, right]{stable\\ $d\!\mapsto\!d$} (u);
  \draw[->, red!65, dashed] (a) to[out=-40,in=40,looseness=8]
        node[lbl, right]{non-stable\\ $x\!\mapsto\!x{+}1$} (a);
\end{tikzpicture}}
\caption{Data-flow graph: the self-loop on \texttt{use.1} (running example)
is stable; the one on \texttt{A.0} ($A(x)\Rightarrow A(x{+}1)$) is not.}
\label{fig:flow}
\end{minipage}
\end{figure}

\paragraph{Termination}
Termination is decided on a distinct \emph{Data-Flow Graph} (DFG) whose nodes are
event arguments $e.i$ and that contains an edge $e.i \to e'.j$ for any clause
whose trigger mentions $e(\dots,t_i=x,\dots)$ and
whose effect is $e'(\dots,t'_j,\dots)$ with $x$ occurring in $t'_j$, and for any let
$e'(\bar x) := \varphi$ such that $\varphi$ mentions $e(\dots,t_i=x,\dots)$ and $x = x_j$ or,
if $\varphi$ is an aggregation, $x_j$ is one of its results.
An edge is \emph{stable} iff none of its endpoints is the result of an aggregation and
all function symbols in $t'_j$ (if any) are stable under a fixed preorder $\preceq$ on $\Domain$, i.e., $\hat f(\bar a) \preceq a_k$ for some $k$
and all $\bar a$ and $f$. Applying stable functions
repeatedly to finitely many values yields finitely many values~\cite{Hublet2025}.
Enforcement terminates if no non-stable edge lies on a cycle.
As a result, causing a fresh value on every iteration
(e.g.\ $(A(x),\top)\Rightarrow A(x+1)$) is rejected, while, e.g.,
$(\mathtt{use}(d),\top)\Rightarrow\neg\mathtt{use}(d)$ is accepted
because no new domain values are generated by that clause (Figure~\ref{fig:flow}).

\subsection{Compilation}
\label{sec:algo}

Consider a closed formula $\Always\varphi$ with let-normal form $(\mathcal L, \varphi^\flat)$, the
environment $\Gamma$ and the guards $\theta$ computed for each let by \textsc{TypeLet}
(Section~\ref{sec:gen}), and a candidate clause set $R$ that fulfills
both checks in Section~\ref{sec:edg}. Fix a topological order $\prec$ on the
condensed SCCs of its EDG (sources first), which also induces an order on clauses.
The compilation algorithm in Algorithm~\ref{alg:emit} turns $(\mathcal L,\Gamma,R)$
into an EF program that enforces $\Always \varphi$.

\begin{algorithm}[t]
  \caption{Compilation algorithm}
  \footnotesize
\label{alg:emit}
\begin{algorithmic}[1]
\Function{Compile}{$\mathcal L,\,\Gamma,\,R,\,\prec$}
  \State $P\gets$ declaration of all base/obligation events in $R$
  \ForAll{$p\in\Gamma$}
     \State append to $P$ the item realizing $p$'s body: \kw{table}
     (past operator), \kw{agg}, or \kw{filter\ let}
     (non-guarded present) or \kw{let}
  \EndFor
  \ForAll{$(\pi,\psi)\Rightarrow\varepsilon\in R$, by SCC in order $\prec$}
     \State open a \kw{section} \kw{fixpoint} for the SCC if not already open
     \State append the rule $\;[\,\pm\,]\,\hat\varepsilon\;:=\;
            \kw{trigger}\ \{\,\pi\ \kw{if}\ \psi\,\}$, where $\hat\varepsilon$
            carries a $[\kw{delay}\ n]$ (resp. $[\kw{next}\ n]$) annotation if
            $\varepsilon$ is $\Eventually_{[n,n]} p(...)$ (resp. $\Next \cdot \Next p(...)$ with $n$ $\Next$)
  \EndFor
  \State \Return $P$
\EndFunction
\end{algorithmic}
\end{algorithm}

Each clause becomes an EF rule; a caused effect
yields a $\kwc{+}$ rule, a suppressed effect a $\kwc{-}$ rule, and a deferred
effect a proactive $\kw{delay}$/$\kw{next}$ causation. Each let $p(\bar x) := \psi$ of $\mathcal L$ becomes an item built from the guards
$\theta$ of \textsc{TypeLet}: $\Once_{[a,b]}\varphi$ becomes
$\kw{table}\ p(\bar x)\ \kwc{[}\kw{window}\ a\ b\kwc{]} := \kw{add}\ \{\theta\}$, $\Prev_{[a,b]}\varphi$ the
same \kw{lagged} table, $\varphi_l \Since_{[a,b]} \varphi_r$ becomes
$\kw{table}\ p(\bar x)\ \kwc{[}\kw{window}\ a\ b\kwc{]} := \kw{add}\ \{\theta_r\}\ \kw{remove}\ \{\theta_l\}$,
an aggregation an \kw{agg} \kw{let} over $\{\theta\}$, and a present body
$\kw{let}\ p(\bar x) := \{\theta\}$, or $\kw{filter}\ \kw{let}\ p(\bar x) := \{\kw{if}\ \chi\}$ if $p$ is
filter-only. A trigger $(\pi, \psi)$ becomes $\pi\ \kw{if}\ \psi$, or $\kw{if}\ \psi$ if $\pi = \{\top\}$;
compilation fails if a trigger or filter has no EF counterpart. The SCC order $\prec$ provides the EF sections.

\begin{example}\label{ex:compile}
Running \EnfFlash{} on $\varphi_\mathsf{law}$ (Example~\ref{ex:formulae}) yields the
program of Figure~\ref{fig:example_ef}. Let-normalization binds the since
subformula $\neg\revoke(u,c)\Since\consent(u,c)$ to a let, which \textsc{TypeLet}
compiles to the windowless table $\mathtt{Since1}$: its right operand $\consent$
populates the \emph{\kw{add}} clause and its left operand $\revoke$ the \emph{\kw{remove}}
clause. The whole quantifier block is kept together, so the negated body
$\exists c,d,u.\,\use(c,d,u)\wedge\neg\mathtt{Since1}(c,u)$ is bound to the 
Boolean let $\mathtt{Exists2}$. Suppressing it yields three clauses in one
\emph{\kw{fixpoint}} section: $\mathtt{+Sup\_Exists0/2}$ raises the obligation event
whenever a violation is present, and $\mathtt{-use}$ then suppresses each offending
$\use(c,d,u)$ for which consent is missing or has been revoked. The resulting EDG
(Figure~\ref{fig:edg}) has a three-node SCC whose only suppressed event is
$\use$, so the conflict check passes; the only data-flow cycle is the stable
self-loop $\use(c,d,u)\Rightarrow\neg\use(c,d,u)$ (Figure~\ref{fig:flow}, left),
so termination holds.
\end{example}

\begin{theorem}[Compilation correctness]\label{thm:compile}
Let $R \in \textsc{Generate}(\Always\varphi)$ be a candidate clause set passing the two checks
of Section~\ref{sec:edg} and $P=\textsc{Compile}(\mathcal L,\Gamma,R,\prec)$.
Then $P$ is a sound enforcer for $\Always\varphi$.\leanref{Theorem_4_3}
\end{theorem}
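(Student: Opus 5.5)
We prove soundness by an invariant over the enforcement loop: for the output trace $\sigma=\mathcal E(\rho)$ and every time-point $i$, every clause of $R$ holds at $i$, i.e.\ $\emptyset,i\vDash_\sigma \forall\bar x.\,\theta\to\varepsilon$. From this we derive $\emptyset,0\vDash_\sigma\Always\varphi$. The argument has three layers.

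\emph{(i) Semantic soundness of the rewriting.} By induction on the derivations of Figure~\ref{fig:rewrite}, we show that if $\Gamma\vdash\psi\hookrightarrow^\Cau\mathcal C$ and some $C\in\mathcal C$ holds globally on a trace $\sigma$, then $\psi$ holds at every time-point of $\sigma$. Dually, under $\hookrightarrow^\Sup$ the formula $\neg\psi$ holds. The same holds for let-bound events, provided the obligation events are \emph{faithful}: whenever $\mathtt{Cau}\_p(\bar a)$ (resp.\ $\mathtt{Sup}\_p(\bar a)$) occurs, the body of $p$ holds (resp.\ fails) at $\bar a$. The individual cases are as follows.
\begin{itemize}
\item The \textsc{Ev} and \textsc{Let} rules are immediate.
\item \textsc{And}$^\Cau$ uses the product $\otimes$.
\item \textsc{And}$^\Sup$ adds the remaining conjuncts to the filter.
\item \textsc{Ex}$^\Cau$ picks the witness $0$.
\item \textsc{Ex}$^\Sup$ uses the Lemma on guard extraction, which gives $\bigvee\kappa\land\psi'\equiv\psi$, so the guarded trigger fires for every witness $x$.
\item The future rules use the $\nu$ loop: $\nu$ inserts a time-point exactly at the deadline $\tau+b$ (resp.\ after $n$ points), and the restriction $b\ge1$ ensures that this point lies strictly later.
\end{itemize}
The realization condition in \textsc{Realizations}, together with the gating $\mathrm{gate}^\alpha_p$, yields faithfulness. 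This goes by induction over the let order in \textsc{TypeLet}: the clauses in $f(p)$ (resp.\ $g(p)$) enforce the body of $p$ exactly when the obligation event is present. For $\Since$ and $\Once$ we additionally use the case split on $a=0$. The let-normal form theorem then transfers the result from $\varphi^\flat$ with lets back to $\Always\varphi$.

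\emph{(ii) The EF program computes the right relations.} We show that \textsc{Interp} on the final working set yields, for each let $p$, exactly the MFOTL denotation of its body over the output trace $\sigma$. We proceed by induction on $i$ and the let order.
\begin{itemize}
\item For \kw{table} items encoding $\Since_{[a,b]}$ and $\Once$, we maintain the invariant that $\proj{T(p)}{[a,b]}$ is the set of tuples satisfying the operator at the previous point. The add/remove step of \textsc{Saturate} matches the recursive unfolding of $\Since$; the guards $\theta_l,\theta_r$ make these relations finite and complete.
\item \kw{lagged} tables give $\Prev$.
\item \kw{agg} matches the aggregation semantics, using finiteness from the guards.
\end{itemize}
One subtlety is that tables are committed only after saturation. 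Hence the values are those of the final $(D\setminus S)\cup C$, which is exactly the database appended to $\sigma$.

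\emph{(iii) The EF run satisfies every clause.} This is the main obstacle. We must show that \textsc{Saturate} terminates, and that at its fixpoint every rule's trigger is matched by its effect in the final database, with no later suppression undoing an earlier causation.
\begin{itemize}
\item \textbf{Termination.} We use the DFG check. All cycles carry only stable edges, so the set of data values reachable within a section is finite, by the cited stability argument. Since $(\Omega,C,S)$ grows monotonically, the fixpoint is reached.
\item \textbf{Trigger satisfaction.} Here we argue section by section in the topological order $\prec$. When a section finishes, its triggers are evaluated on a stable state, so each triggered effect is present. Later sections never modify events of earlier SCCs, because edges only go forward.
\item \textbf{Non-interference within an SCC.} The first attempt is to use the conflict check. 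Events from strictly earlier SCCs are already fixed, so the SMT unsatisfiability of $\theta_\Cau\land\theta_\Sup$ over them implies that no event is both in $C$ and in $S$. A subtle point is triggers that became true mid-iteration but are later falsified, say by a suppression. Since the effect was monotonically added, it is merely superfluous, and it is harmless once we show that the final clause implication still holds.
\end{itemize}
Finally, combining the per-time-point clause invariant with (i) and (ii) yields $\emptyset,0\vDash_{\mathcal E(\rho)}\Always\varphi$.
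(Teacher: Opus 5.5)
Your decomposition into soundness of the rewriting, correctness of the compiled relations, and a fixpoint/conflict/termination argument is the natural one; the paper gives no written proof, only its Lean development. However, the bridge ``every clause of $R$ holds at every time-point, hence $\Always\varphi$'' fails as you formulate it.

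\textbf{The induction in (i) must be pointwise, and deferred effects need a stronger reading.} With the global hypothesis of (i), most inductive cases cannot invoke it:
in \textsc{And}$^\Sup$ the filtered clauses yield the clauses for $\varphi_j$ only where $\bigwedge_{i\neq j}\varphi_i$ holds;
in \textsc{Ex}$^\Cau$ the clauses $C[0/x]$ say nothing about $x\neq 0$;
gated let clauses hold only where $\mathtt{Cau}\_p(\bar x)$ or $\mathtt{Sup}\_p(\bar x)$ is present, so the faithfulness you rely on cannot be obtained;
and in the \textsc{Fut} rules the sub-clauses hold only at the deadline point.
You need a relativized statement: for every time-point $i$ and valuation $v$ of $\fv(\psi)$, if every clause of $C$ holds at $i$ under every extension of $v$, then $v,i\vDash\psi$ (resp.\ $v,i\vDash\neg\psi$).

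Even then, giving deferred effects their plain MFOTL meaning is too weak. \textsc{Fut}$_{\Eventually}$ can defer several effects at once, e.g.\ for $\Eventually_{[a,b]}(e_1\wedge e_2)$ after \textsc{And}$^\Cau$. The separate clauses $\Eventually_{[b,b]}e_1$ and $\Eventually_{[b,b]}e_2$ may then be witnessed by two distinct points with timestamp $\tau_i+b$, which does not yield $\Eventually_{[a,b]}(e_1\wedge e_2)$. Likewise, the effect $\Next e(\bar t)$ produced by \textsc{Fut}$_{\Next}$ no longer carries the bound $b$. Your invariant must therefore record two things:
all effects deferred to timestamp $\tau_i+b$ (resp.\ index $i+n$) occur at one designated point, namely the last point with that timestamp (resp.\ point $i+n$);
and the point following one at which a \kw{next} obligation is created comes at most one time unit later.

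\textbf{Deferred effects on the operational side, and finite traces.} Establishing this is exactly what (iii) leaves out: it only treats immediate effects at a section's fixpoint. Your appeal in (i) to ``$\nu$ inserts a time-point exactly at the deadline'' is false when the input trace ends first, because the loop issues no $\nu$ call after the last input time-point. Concretely, take $\varphi_\mathsf{del}$ with $\delete\in\Cau$. The candidate $\{(\deleteReq(d,u),\top)\Rightarrow\Eventually_{[30,30]}\delete(d,u)\}$ passes both checks, yet on the one-point trace $(0,\{\deleteReq(1,1)\})$ the output violates $\varphi_\mathsf{del}$. Read literally over all of $\Traces$, the theorem fails.

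So you must restrict to infinite input traces and use progress:
every $t\ge\tau_0$ then receives a call $\nu(t)$ after all input points with timestamp at most $t$;
because deadlines are at least one unit ahead, all obligations due at $t$ are already in $\Omega$ and are caused together in the inserted point, which is the last one with timestamp $t$;
a \kw{next} obligation created at a point with timestamp $t$ is discharged by the very next call, at timestamp $t$ or $t+1$, which is where $1\le b$ is used for \textsc{Fut}$_{\Next}$.

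These preloaded obligations must also be covered by your non-interference argument, through the deferred branch of the conflict check. Since the output is $(D\setminus S)\cup C$, any causation overrides a suppression. It is therefore the absence of events in $C\cap S$, not the final clause implications, that makes stale causations harmless.

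\textbf{A smaller point on (ii).} No invariant stated on tuples ``at the previous point'' can work for metric windows. For $\Once_{[2,5]}A(x)$ on points with timestamps $0,1,2$ and $A(1)$ only at the first, the tuple $(1)$ satisfies the operator at timestamp $2$ but not at timestamp $1$. State the invariant on timestamped witnesses instead: after committing point $i$, $\tau'\cdot\bar a\in T(p)$ iff some $j\le i$ with $\tau_j=\tau'$ has $\varphi_r(\bar a)$ at $j$ and $\varphi_l(\bar a)$ at every $k\in(j,i]$.
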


\section{Evaluation}
\label{sec:eval}

We implement the compilation of Section~\ref{sec:mfotl-to-ef} and the EF
interpreter of Section~\ref{sec:ef} in a tool called \EnfFlash{}. The compiler
(7k LOC of OCaml) parses an MFOTL specification, runs let-normalization, guard
extraction, enforcement rewriting, and the dependency analysis, discharges the
cause/suppress checks with Z3, and emits an EF program. The interpreter
(4k LOC of Rust) loads the EF program, computes a bound on its complexity as in
Section~\ref{sec:ef}, 
and exposes the reactive and proactive
entry points $\mu$ and $\nu$. We use Rust's \texttt{pyo3} Python bindings to evaluate
user-defined functions. \EnfFlash{} is open-source and available \href{\repourl}{online}.
Our evaluation answers four research questions:
\begin{description}
   \item[RQ1] (Expressiveness) Does \EnfFlash{} support the specifications
     of existing benchmarks and real applications?
  \item[RQ2] (Latency) How does \EnfFlash{}'s latency
        compare to state-of-the-art first-order enforcers?
  \item[RQ3] (Overhead in web applications) Is the end-to-end enforcement overhead
        small enough for user-facing operations in web applications?
  \item[RQ4] (Comparison with permission engines) How does \EnfFlash{} compare to
        a state-of-the-art non-temporal permission engine (Cedar) and its
        temporal extension (Dogwood) on access control,
        accounting for the cost of maintaining temporal state?
\end{description}

\subsection{Experimental setup}
\label{sec:eval-setup}

All experiments run under Ubuntu 24.04.3
on an Intel Core Ultra 7 165U with 12 cores and 32GB of RAM.
We run two types of experiments:
tool-only benchmarks, where a trace from an existing benchmark is replayed
to enforcement tools; and application benchmarks,
where those tools serve as Policy Decision Points (PDP) for web applications~\cite{Hublet2025b}.

In tool-only benchmarks,
latency is measured as wall-clock time for the processing of single time points.
We report per-time-point average and maximum latency for each run.

In application benchmarks, we measure full round-trip page latency
using Python's \texttt{requests} library
against a local server with the \href{https://wiki.archlinux.org/title/CPU\_frequency\_scaling}{`performance' CPU governor} activated.

\begin{figure}[!t]
\centering
\includegraphics[width=0.455\textwidth]{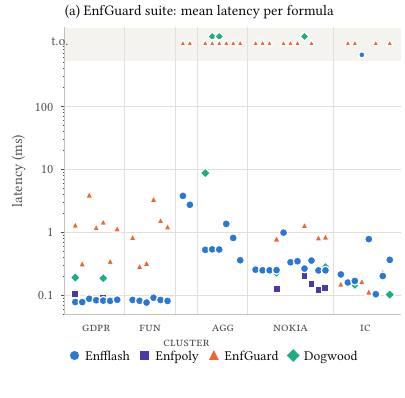}\hfill
\includegraphics[width=0.455\textwidth]{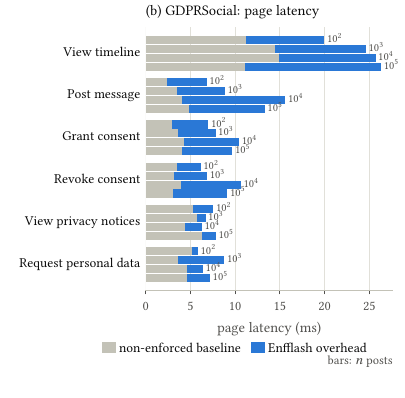}\\
\includegraphics[width=0.455\textwidth]{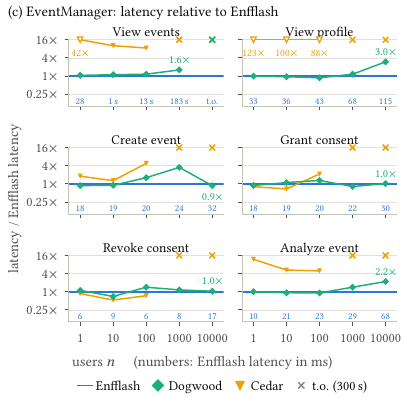}\hfill
\includegraphics[width=0.455\textwidth]{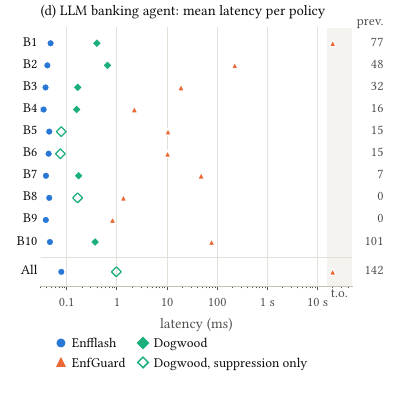}
\caption{Evaluation results\ifextended{} (Tables~\ref{tab:micro}--\ref{tab:app3})\fi;
  in (d), prev.\ is the number of attacks prevented out of 143.\ifextended\else{}
  Details in~\cite{extended}.\fi}
\label{fig:eval}
\vspace{-1em}
\end{figure}

\subsection{Tool-only: the \EnfGuard{} suite (RQ1,2)}
\label{sec:eval-micro}

We first evaluate \EnfFlash{} on \EnfGuard{}'s six enforcement benchmarks,
spanning data protection (\textsc{gdpr}, \textsc{fun}),
aggregation-heavy policies (\textsc{agg}),
fault clustering (\textsc{cluster}), infrastructure monitoring (\textsc{ic}),
and database updates in a large system (\textsc{nokia}).
We use all formulae in the benchmark with the largest available log.
For each, we compile the specification and replay~\cite{Krstic2020}
the provided traces in real time, measuring latency against \EnfFlash{}, \EnfGuard{}~\cite{Hublet2025},
\EnfPoly{}~\cite{Hublet2022}, \ifextended \MonPoly{}~\cite{Basin2011}, \else\fi and Dogwood~\cite{dogwood}
on identical inputs, with 3 repetitions.
We selected \EnfGuard{} and \EnfPoly{}
as the two fastest available MFOTL enforcers. \ifextended \MonPoly{} is an optimized
MFOTL \emph{monitor}, i.e.~a tool that identifies but does not correct policy
violations. Although \MonPoly{} is therefore not a direct competitor to enforcement tools, it
provides a challenging state-of-the-art latency baseline for a slightly simpler problem. \else\fi
Dogwood is a recent extension of Cedar with past-time temporal conditions over
an event history. It supports suppression but no causation, and its temporal conditions
are past-only, with bounded windows, no disjunction, and aggregations limited to
counts and sums of integers.

\paragraph*{Results (Figure~\ref{fig:eval}a)} 
Dogwood supports only 11 of the 40 formulae: other formulae require causing events 
(the \textsf{deletion} formulae and \textsc{cluster}), future operators (\textsc{nokia}'s
\textsf{del} and \textsf{ins} formulae), user-defined functions (\textsc{fun}), or
aggregations other than sums and counts (\textsc{agg}'s p4--p6).
On these, Dogwood is within 12\% of \EnfFlash{} on 4, 2.4--17$\times$ slower on 3,
3.6$\times$ faster on 1, and times out on 3.
\EnfFlash{} supports all policies in the benchmarks, with only the \textsc{nokia}
formulae require rewriting to avoid nesting past and future operators.
It processes 39 of the 40 within the timeout while EnfGuard processes 21.
On 17 of the 20 processed by both, \EnfFlash{}'s average latency is 3--44$\times$
lower than \EnfGuard{}'s, which is faster on 2 \textsc{ic} formulae (1.4$\times$, 7$\times$).
\EnfPoly{}, which supports only 7 formulae, is 1.3--2.3$\times$ faster than \EnfFlash{} on 5 \textsc{nokia} formulae.
\ifextended \EnfFlash{} is within 4$\times$ of \MonPoly{} on the 27 formulae processed by both
(1.1--1.8$\times$ faster on \textsc{gdpr}, up to 3.8$\times$ slower on \textsc{agg} and
\textsc{nokia}); \MonPoly{} times out on 4 \textsc{ic} formulae.\fi
Average (resp. max) latency is 0.08--3.8\,ms (resp. 1--98\,ms), below 1\,ms on 36 formulae.

\subsection{Application: GDPR-compliant social network (RQ1,3)}
\label{sec:eval-app1}

Our first deployment instruments \textsc{GDPRSocial},
a microblogging app implemented
in ca. 2.5k LOC of Python/Django that runs against an extensive GDPR specification
encoded as a 400-line MFOTL formula. The GDPR specification is intrinsically
\emph{temporal}, with both statefulness (consent and revocation thereof must be
persisted) and proactivity (personal data must be accessed, rectified, or deleted in
a timely manner when users require it).
\EnfFlash{} acts as an enforcement backend invoked through the Instrlib~\cite{Hublet2025b} library.
We measure the latency of a set of representative application views: accessing a user's
timeline, posting a message, giving or revoking consent for personalized advertisement,
viewing privacy notices, and requesting access to one's data. 
Measurements are performed for $n \in \{10^2, 10^3, 10^4, 10^5\}$ posts by
$n/100$ users over 10 repetitions after 5 warm-up calls.
As a baseline, we  perform measurements on a copy of the app
stripped of all enforcement.
Note that due to the temporal nature of the policy and the much higher latency
of \EnfGuard{}, no systematic experiments could be performed with competing tools.
Dogwood cannot express the specification, as it requires causing events
(e.g., deleting or providing personal data upon request).

\paragraph*{Results (Figure~\ref{fig:eval}b)}
\EnfFlash{} supports the large input formula.
All views load in 6--27\,ms, with enforcement
inducing an overhead of 1--15\,ms with respect to the non-enforced baseline.
This loading time would typically be perceived as instantaneous by
users~\cite[p. 135]{Nielsen1993}.
In particular, runtime latency does not grow with $n$, and hence enforcement
scales well with database size.

\subsection{Application: event-management platform (RQ1,3,4)}
\label{sec:eval-app2}

\looseness=-1
Our second deployment is \textsc{EventManager}, an event-management platform
whose access-control policy mixes role-based rules (e.g., only admins may add moderators) with stateful constraints (e.g., only event owners may promote an attendant of an event
to manager of the event). This application was proposed 
by Basin et al.~\cite{BasinGKNO23} and implemented in both Cedar~\cite{Cutler2024} and $\nu$ActionGUI~\cite{KrsticNB24} AC mechanisms.
Because most of this policy is expressible without temporal operators when the roles are fixed,
it admits a direct encoding in these two platforms, giving us two non-temporal baselines.
However, the fact that these roles can change over time means that an extensive dynamic context
must be passed to the non-temporal evaluator at every request, potentially increasing the latency.
We straightforwardly translated the platform's access control policy into ca. 150 lines of MFOTL
by replacing all the predicates of the form $\mathsf{hasRole}(\bar{x})$ by the
$\neg \mathsf{leftRole}(\bar{x}) \Since \mathsf{becomeRole}(\bar{x})$ formula.
As a temporal baseline, we translated the MFOTL formula into Dogwood,
which, unlike Cedar, also expresses the policy's consent rules: each
$\neg \mathsf{leftRole}(\bar{x}) \Since \mathsf{becomeRole}(\bar{x})$ becomes a
\texttt{since} condition, and each disjunction a separate \texttt{permit}.
The Dogwood version of the application is otherwise identical to the \EnfFlash{} one,
with the same enforcement points and the same result cache.
Measurements are performed for $n \in \{1, 10, \dots, 10^4\}$ users and $10n$ events;
requests time out after 300\,s (Cedar beyond $n = 100$, `view events' beyond $n = 1000$).

\paragraph*{Results (Figure~\ref{fig:eval}c)} Our \EnfFlash{} implementation is faster
than the Cedar baseline on 4 of 6 views  and within 7\,ms on the 2 consent views (where Cedar
is up to 1.9$\times$ faster), with 8--42$\times$ and 88--123$\times$ speedups on the
`view events' and `view profile' pages, which query permissions about many entities.
Dogwood is on par with \EnfFlash{} up to $n = 100$, but its latency grows faster with the
history: for $n \geq 1000$, it is 1.6--3.5$\times$ slower on 4 views
(e.g., 3.0$\times$ on `view profile' at $n = 10^4$).

\subsection{Tool-only: LLM banking agent (RQ1,2)}
\label{sec:eval-app3}

Our third deployment enforces security policies on the tool calls of an LLM agent.
We use the `banking' suite of AgentDojo~\cite{Debenedetti2024}, an established
benchmark for agentic security. In this suite, an agent operates a bank account
using 11 tools (e.g., sending money, changing a password, reading files).
The suite contains 16 benign and 144 attacked ground-truth runs, which we replay
after translating all tool calls, tool outputs, and user prompts into log events.
Each call is submitted to \EnfFlash{} for approval before it executes,
and suppressed calls are not executed. 

We specify ten policies B1--B10 in MFOTL. For instance, B1 restricts payees to
accounts mentioned by the user, paid before, or occurring in a file named by the user;
B2 forbids payment subjects containing account data not supplied by the user; and
B3 caps the amount sent to a payee within 24 hours at \$10,000.
All policies are first-order; they also  use unbounded past operators (B1,2,7,10),
aggregations over metric windows (B3,4), and causation (B5,6,8,9).
In contrast, existing runtime guardrails for LLM agents check either single tool
calls~\cite{Wang2026} or propositional temporal properties~\cite{Xiao2026}.
Dogwood cannot cause events: we translate B1--B8 and B10 into Dogwood (for B5, B6,
and B8, only their suppression part), emulating unbounded past operators by
100-year windows as in Section~\ref{sec:eval-micro}. We record the event log of
each policy and replay it through \EnfFlash{}, \EnfGuard{}, and Dogwood one time point
at a time, measuring latency as in Section~\ref{sec:eval-micro} (timeout 600\,s);
on every log, the three tools suppress the same tool calls.\looseness=-1

\paragraph*{Results (Figure~\ref{fig:eval}d)}
\EnfFlash{} supports all ten policies and their conjunction. None of the policies
suppresses a call in a benign run. Without enforcement, 143 of the 144 attacks
succeed; the conjunction prevents 142 of them, with B1 alone preventing 77
attacks. Average latency is 0.03--0.05\,ms per time point for individual policies,
and 0.08\,ms for the conjunction of B1--B10. Dogwood can fully enforce 6 policies
and partially enforce the suppression part of 3 additional ones (hollow markers in
Figure~\ref{fig:eval}d).
An additional experiment on a large
log of 23,320 shows that latency grows sublinearly with history side
(52 to 64\,$\mu$s from the first to the last fifth of the run).
\EnfGuard{} is 20--5,570$\times$ slower, while Dogwood is 1.8--16$\times$ slower and
12$\times$ slower on the conjunction.

\subsection{Summary}
\label{sec:eval-threats}

\EnfFlash{} supports all policies in the EnfGuard suite and the three application benchmarks,
showing that the supported fragment of MFOTL is expressive enough to support a
broad range of use cases (RQ1).
Our tool achieves average per-event latency of 0.08--3.8\,ms, 3--44$\times$
lower than \EnfGuard{} on 17 of the 20 formulae on which \EnfGuard{} does not time out\ifextended,
and within 4$\times$ of \MonPoly{}, a monitor solving a substantially simpler problem (RQ2)\fi. On our LLM banking agent benchmark, \EnfFlash{} enforces ten security policies
with 0.03--0.08\,ms average latency, 20--5,570$\times$ faster than \EnfGuard{} and
1.8--16$\times$ faster than Dogwood,
and prevents 142 of 143 successful prompt-injection attacks without suppressing
any benign tool call (RQ1,2). In a real agentic deployment, \EnfFlash{}'s sub-millisecond
latency would be dominated by the costs of inference.
In the \textsc{GDPRSocial} application, enforcement adds 1--15\,ms
of overhead despite the policy's complexity (400 LOC), and this overhead does not significantly
change with database size, keeping all views near-instantaneous (RQ3).
\EnfFlash{} matches or significantly improves on Cedar's performance in our
\textsc{EventManager} application, up to 3.5$\times$ faster than Dogwood on long histories
and up to 123$\times$ faster than Cedar's full reload-and-evaluate pipeline in dynamic
contexts (RQ4).

\section{Related work}

\label{sec:rw}

\begin{table}[tb]
\setlength{\tabcolsep}{3pt}
\centering
\caption{Temporal enforcers (\checkmark: supported, $\sim$: partially, --: not supported)}
\label{tab:rw}
\scriptsize
\setlength{\tabcolsep}{1.8pt}
\renewcommand{\arraystretch}{0.9}%
\begin{tabular}[t]{lcccccccc}
\toprule
\emph{Non-agentic} & \rot{First-order\\data} & \rot{Metric\\time} & \rot{Aggregations} & \rot{Suppression}
 & \rot{Causation} & \rot{Proactive\\deadlines} & \rot{Sub-ms\\latency} & \rot{Mechanized\\proofs} \\
\midrule
Timed enforcers~\cite{Pinisetty2015,Renard2017,FalconeP19,Falcone2021} & -- & \checkmark & -- & \checkmark & $\sim$ & $\sim$ & -- & -- \\
Hall\'e and Villemaire~\cite{HalleV12}                                 & \checkmark & -- & -- & \checkmark & -- & -- & -- & -- \\
\EnfPoly{}~\cite{Hublet2022}                                              & \checkmark & \checkmark & -- & \checkmark & \checkmark & -- & \checkmark & -- \\
Aceto et al.~\cite{Aceto2023}                                          & \checkmark & -- & -- & \checkmark & \checkmark & -- & -- & -- \\
Basin et al.~\cite{Basin2024}                                          & -- & \checkmark & -- & \checkmark & \checkmark & \checkmark & -- & -- \\
\WhyEnf{}~\cite{Hublet2024}                                               & \checkmark & \checkmark & -- & \checkmark & \checkmark & \checkmark & -- & -- \\
Cedar~\cite{Cutler2024}                                                & \checkmark & -- & -- & \checkmark & -- & -- & \checkmark & \checkmark \\
\EnfGuard{}~\cite{Hublet2025}                                             & \checkmark & \checkmark & \checkmark & \checkmark & \checkmark & \checkmark & -- & -- \\
\bottomrule
\end{tabular}\hfill
\begin{tabular}[t]{lcccccccc}
\toprule
\emph{Agentic} & \rot{First-order\\data} & \rot{Metric\\time} & \rot{Aggregations} & \rot{Suppression}
 & \rot{Causation} & \rot{Proactive\\deadlines} & \rot{Sub-ms\\latency} & \rot{Mechanized\\proofs} \\
\midrule
AgentSpec~\cite{Wang2026}        & $\sim$ & -- & -- & \checkmark & $\sim$ & -- & \checkmark & -- \\
Agent-C~\cite{Kamath2025}        & \checkmark & -- & -- & \checkmark & -- & -- & -- & -- \\
Dogwood~\cite{dogwood,agentcore} & \checkmark & \checkmark & $\sim$ & \checkmark & -- & -- & $\sim$ & -- \\
FORGE~\cite{Palumbo2026}         & \checkmark & -- & -- & \checkmark & -- & -- & -- & -- \\
C-Trace~\cite{Kahani2026}        & \checkmark & -- & -- & \checkmark & -- & -- & -- & -- \\
VIGIL~\cite{Li2026}              & \checkmark & -- & -- & \checkmark & -- & -- & -- & -- \\
ContrAgent~\cite{Xiao2026}       & -- & -- & $\sim$ & \checkmark & -- & -- & \checkmark & -- \\
\midrule
\EnfFlash{} & \checkmark & \checkmark & \checkmark & \checkmark & \checkmark & \checkmark & \checkmark & \checkmark \\
\bottomrule
\end{tabular}
\end{table}

Runtime enforcement was introduced by Schneider in his seminal work on security
automata~\cite{Schneider2000}. Enforcement initially focused on execution cutting, i.e., terminating a system to prevent violations. Suppression~\cite{Bauer2002,FredriksonJJRPSY12} and edit automata~\cite{Ligatti2005} were later introduced to
allow for suppression and causation of events, respectively. However, these enforcers were still allowed to (unrealistically) buffer events before making decisions. This capability was
later dropped~\cite{NgoMMP15}, and other capabilities, such as delaying
events~\cite{Pinisetty2014,Falcone2016} and code inspection~\cite{PinisettyPTJFM17}, were
considered. In parallel, fine-grained enforcer capabilities
were introduced, distinguishing between causable or suppressable and
only-observable events~\cite{Basin2013}. 
Time-aware enforcers rely on timed automata as both a policy language and
mechanism~\cite{Falcone2021,FalconeP19,Pinisetty2015,Renard2017} or as the translation target for logics 
such as MITL~\cite{Maler2006}.
The first proactive enforcement algorithm was developed for DCR graphs~\cite{Basin2024},
a propositional formalism~\cite{Hildebrandt2013}.

\emph{First-order temporal} enforcers exist for LTL-FO$^+$,
a first-order variant of future-only LTL~\cite{HalleV12},
first-order Hennessy-Milner Logic with recursion (non-metric, suppression-only)~\cite{Aceto2023,Aceto2023b},
as well as for several fragments of MFOTL, both past-only~\cite{Hublet2022} and with past and
future~\cite{Hublet2024,Hublet2025}.

Runtime enforcement as presented here is generic and thus provides theoretical underpinning 
for tools in many specialized domains, like access and usage control~\cite{KrsticNB24,Cutler2024}, 
Android permission management~\cite{SadeghiJGBM18}, IoT~\cite{CelikTM19}, and LLM agent security~\cite{Wang2026},
modulo appropriate instrumentation~\cite{Hublet2025b}.

Recent years have seen intense research activity around runtime guardrails for LLM agents.
Most enforcers in this space inspect each call in isolation:
AgentSpec~\cite{Wang2026} evaluates trigger--predicate--action rules;
Progent~\cite{Shi2025} and Conseca~\cite{Tsai2025}
restrict tool names and arguments to those needed by the user's task; and
CaMeL~\cite{Debenedetti2026} and FIDES~\cite{Costa2025} track the provenance of
values through capabilities and information-flow labels. LLM-based
guards~\cite{Xiang2024,Chen2025} give no deterministic guarantees.
Few systems support temporal policies:
Agent-C~\cite{Kamath2025} and VIGIL~\cite{Li2026} check
ordering constraints on the trace with an SMT solver, FORGE~\cite{Palumbo2026}
evaluates Datalog policies, ContrAgent~\cite{Xiao2026} enforces assume-guarantee
contracts in propositional LTL$_f$ with counters, and C-Trace~\cite{Kahani2026}
checks GDPR requirements expressed in MFOTL on agent traces.
Amazon's Bedrock AgentCore supports history-dependent policies written
in Dogwood~\cite{agentcore,dogwood}. None of these systems enforces real-time
constraints over data-carrying events by both suppressing and causing events, e.g.,
limiting the amount sent to each payee within 24 hours while notifying the
account owner of every sensitive change, as in our banking case study
(Section~\ref{sec:eval-app3}). Table~\ref{tab:rw} summarizes these differences.

\section{Conclusion}

We presented \EnfFlash{}, a novel first-order enforcer that is one to two orders of
magnitude faster than the previous state-of-the-art, with comparable expressivity.
\EnfFlash{} is also up to an order of magnitude faster than the recently proposed
Dogwood tool, developed for agentic security tasks; moreover, it is substantially more expressive and scales better with large logs.
Even when deployed with very large policies, such as  a 400-line MFOTL formula in a web application, \EnfFlash{} induces
only 1--15\,ms of overhead. These results support the thesis that  first-order runtime enforcement
tools have become a mature alternative to ad-hoc frameworks for software security,
including in the rapidly evolving field of agent security.


\bibliographystyle{ACM-Reference-Format}
\bibliography{enfflash}

@article{Schneider2000,
  title={Enforceable security policies},
  author={Schneider, Fred B.},
  journal={{ACM} Trans. Inf. Syst. Secur. {(TISSEC)}},
  volume={3},
  number={1},
  pages={30--50},
  year={2000},
  publisher={ACM New York, NY, USA}
}

@inproceedings{Bauer2002,
  title={More enforceable security policies},
  author={Bauer, Lujo and Ligatti, Jarred and Walker, David},
  booktitle={Workshop on Foundations of Computer Security ({FCS})},
  year={2002}
}

@article{Chomicki1995,
  title={Efficient checking of temporal integrity constraints using bounded history encoding},
  author={Chomicki, Jan},
  journal={ACM Transactions on Database Systems (TODS)},
  volume={20},
  number={2},
  pages={149--186},
  year={1995},
  publisher={ACM New York, NY, USA}
}

@inproceedings{Hublet2025,
  title={Scaling up proactive enforcement},
  author={Hublet, Fran{\c{c}}ois and Lima, Leonardo and Basin, David and Krsti{\'c}, Sr{\dj}an and Traytel, Dmitriy},
    editor       = {Ruzica Piskac and
                  Zvonimir Rakamaric},
  booktitle={37th International Conference on Computer Aided Verification ({CAV})},
  series = {LNCS},
    volume       = {15933},
  pages={370--392},
  year={2025},
  publisher={Springer}
}

@inproceedings{Hublet2024,
  title={Proactive real-time first-order enforcement},
  author={Hublet, Fran{\c{c}}ois and Lima, Leonardo and Basin, David and Krsti{\'c}, Sr{\dj}an and Traytel, Dmitriy},
    editor       = {Arie Gurfinkel and
                  Vijay Ganesh},
  booktitle={36th International Conference on Computer Aided Verification ({CAV})},
    series = {LNCS},
      volume       = {14682},
  pages={156--181},
  year={2024},
  publisher={Springer}
}

@article{Cutler2024,
  title={Cedar: A new language for expressive, fast, safe, and analyzable authorization},
  author={Joseph W. Cutler and
                  Craig Disselkoen and
                  Aaron Eline and
                  Shaobo He and
                  Kyle Headley and
                  Michael Hicks and
                  Kesha Hietala and
                  Eleftherios Ioannidis and
                  John H. Kastner and
                  Anwar Mamat and
                  Darin McAdams and
                  Matt McCutchen and
                  Neha Rungta and
                  Emina Torlak and
                  Andrew Wells},
  journal={Proc. {ACM} Program. Lang.},
  volume={8},
  number={OOPSLA1},
  pages={670--697},
  year={2024},
  publisher={ACM New York, NY, USA}
}

@inproceedings{Hublet2022,
  title={Real-time policy enforcement with metric first-order temporal logic},
  author={Hublet, Fran{\c{c}}ois and Basin, David and Krsti{\'c}, Sr{\dj}an},
    editor       = {Vijayalakshmi Atluri and
                  Roberto Di Pietro and
                  Christian Damsgaard Jensen and
                  Weizhi Meng},
  booktitle={27th European Symposium on Research in Computer Security ({ESORICS})},
  pages={211--232},
  year={2022},
      series       = {LNCS},
  publisher={Springer}
}

@inproceedings{Havelund2024,
  title={Operational and declarative runtime verification (keynote)},
  author={Havelund, Klaus and Omer, Moran and Peled, Doron},
    editor       = {Davide Ancona and
                  Giorgio Audrito},
    publisher    = {{ACM}},
  booktitle={7th ACM International Workshop on Verification and Monitoring at Runtime Execution ({VORTEX})},
  pages={3--12},
  year={2024}
}

@inproceedings{Nguyen2026,
  title={{Models Trump Code}: An Empirical Analysis of {GDPR} Compliance},
  author={Nguy\~{\^e}n, Hoàng and Krstić, Srđan and Basin, David},
  booktitle={2026 ACM SIGSAC Conference on Computer and Communications Security},
  year={2026},
}

@inproceedings{BasinGKNO23,
  author       = {David Basin and
                  Juan Guarnizo and
                  Sr\dj{}an Krsti\'c and
                  Hoàng Nguyen and
                  Mart{\'{\i}}n Ochoa},
  editor       = {Weizhi Meng and
                  Christian Damsgaard Jensen and
                  Cas Cremers and
                  Engin Kirda},
  title        = {Is Modeling Access Control Worth It?},
  booktitle    = {{ACM} {SIGSAC} Conference on Computer and
                  Communications Security ({CCS})},
  pages        = {2830--2844},
  publisher    = {{ACM}},
  year         = {2023},
  xurl          = {https://doi.org/10.1145/3576915.3623196},
  doi          = {10.1145/3576915.3623196},
}

@inproceedings{Hublet2025b,
  title={Instrumenting runtime enforcement},
  author={Hublet, Fran{\c{c}}ois and Basin, David and Hu, Linda and Krsti{\'c}, Sr{\dj}an and Reese, Lennard},
  booktitle={25th International Conference on Runtime Verification ({RV})},
  pages={160--180},
  year={2025},
  series={LNCS},
  publisher={Springer},
  volume       = {16087},
  doi          = {10.1007/978-3-032-05435-7\_10},
  xurl          = {https://doi.org/10.1007/978-3-032-05435-7\_10},
}

@inproceedings{Basin2011,
  title={{MONPOLY}: Monitoring usage-control policies},
  author={Basin, David and Harvan, Mat{\'u}{\v{s}} and Klaedtke, Felix and Z{\u{a}}linescu, Eugen},
  booktitle={2nd International Conference on Runtime Verification ({RV})},
  pages={360--364},
  year={2011},
  series={LNCS},
  volume       = {7186},
  publisher={Springer},
}

@article{Basin2024,
  title={Proactive enforcement of provisions and obligations},
  author={Basin, David and Debois, S{\o}ren and Hildebrandt, Thomas},
  journal={Journal of Computer Security},
  volume={32},
  number={3},
  pages={247--289},
  year={2024},
  publisher={SAGE Publications Sage UK: London, England}
}

@inproceedings{Pnueli1977,
  title={The temporal logic of programs},
  author={Pnueli, Amir},
  booktitle={18th Annual Symposium on Foundations of Computer Science ({FOCS})},
  pages={46--57},
  year={1977},
  organization={IEEE}
}

@article{Koymans1990,
  title={Specifying real-time properties with metric temporal logic},
  author={Koymans, Ron},
  journal={Real-time systems},
  volume={2},
  number={4},
  pages={255--299},
  year={1990},
  publisher={Springer}
}

@book{Nielsen1993,
  title={Usability engineering},
  author={Nielsen, Jakob},
  year={1993},
  publisher={AP Professional}
}

@inproceedings{FredriksonJJRPSY12,
  author       = {Matthew Fredrikson and
                  Richard Joiner and
                  Somesh Jha and
                  Thomas W. Reps and
                  Phillip A. Porras and
                  Hassen Sa{\"{\i}}di and
                  Vinod Yegneswaran},
  editor       = {P. Madhusudan and
                  Sanjit A. Seshia},
  title        = {Efficient Runtime Policy Enforcement Using Counterexample-Guided Abstraction
                  Refinement},
  booktitle    = {24th International Conference on Computer Aided Verification ({CAV})},
  series       = {LNCS},
  volume       = {7358},
  pages        = {548--563},
  publisher    = {Springer},
  year         = {2012},
}

@article{Ligatti2005,
  title={Edit automata: Enforcement mechanisms for run-time security policies},
  author={Ligatti, Jay and Bauer, Lujo and Walker, David},
  journal={International Journal of Information Security},
  volume={4},
  pages={2--16},
  year={2005},
  publisher={Springer}
}

@inproceedings{NgoMMP15,
  author       = {Minh Ngo and
                  Fabio Massacci and
                  Dimiter Milushev and
                  Frank Piessens},
  editor       = {Sriram K. Rajamani and
                  David Walker},
  title        = {Runtime Enforcement of Security Policies on Black Box Reactive Programs},
  booktitle    = {42nd {ACM} {SIGPLAN-SIGACT} Symposium on
                  Principles of Programming Languages {(POPL)}},
  pages        = {43--54},
  publisher    = {{ACM}},
  year         = {2015},
}

@article{Pinisetty2014,
  title={Runtime enforcement of timed properties revisited},
  author={Pinisetty, Srinivas and Falcone, Ylies and J{\'e}ron, Thierry and Marchand, Herv{\'e} and Rollet, Antoine and Nguena Timo, Omer},
  journal={Formal Methods Syst. Des.},
  volume={45},
  pages={381--422},
  year={2014},
  publisher={Springer}
}

@article{PinisettyPTJFM17,
  author       = {Srinivas Pinisetty and
                  Viorel Preoteasa and
                  Stavros Tripakis and
                  Thierry J{\'{e}}ron and
                  Yli{\`{e}}s Falcone and
                  Herv{\'{e}} Marchand},
  title        = {Predictive runtime enforcement},
  journal      = {Formal Methods Syst. Des.},
  volume       = {51},
  number       = {1},
  pages        = {154--199},
  year         = {2017},
}

@article{Falcone2016,
  title={Runtime enforcement of regular timed properties by suppressing and delaying events},
  author={Falcone, Yli{\`e}s and J{\'e}ron, Thierry and Marchand, Herv{\'e} and Pinisetty, Srinivas},
  journal={Science of Computer Programming},
  volume={123},
  pages={2--41},
  year={2016},
  publisher={Elsevier}
}

@article{basin2013,
  title={Enforceable security policies revisited},
  author={Basin, David and Jug{\'e}, Vincent and Klaedtke, Felix and Z{\u{a}}linescu, Eugen},
  journal={{ACM} Trans. Inf. Syst. Secur. ({TISSEC})},
  volume={16},
  number={1},
  pages={1--26},
  year={2013},
  publisher={ACM New York, NY, USA}
}

@article{Falcone2021,
  author    = {Yli{\`{e}}s Falcone and
               Sr\dj{}an Krsti\'c and
               Giles Reger and
               Dmitriy Traytel},
  title     = {A taxonomy for classifying runtime verification tools},
  journal   = {Int. J. Softw. Tools Technol. Transf.},
  volume    = {23},
  number    = {2},
  pages     = {255--284},
  year      = {2021},
  xurl       = {https://doi.org/10.1007/s10009-021-00609-z},
  xdoi       = {10.1007/s10009-021-00609-z},
}

@inproceedings{FalconeP19,
  author       = {Yli{\`{e}}s Falcone and
                  Srinivas Pinisetty},
  editor       = {Bernd Finkbeiner and
                  Leonardo Mariani},
  title        = {On the Runtime Enforcement of Timed Properties},
  booktitle    = {19th International Conference on Runtime Verification, ({RV})},
  series       = {LNCS},
  volume       = {11757},
  pages        = {48--69},
  publisher    = {Springer},
  year         = {2019},
  xurl          = {https://doi.org/10.1007/978-3-030-32079-9\_4},
  xdoi          = {10.1007/978-3-030-32079-9\_4},
}

@inproceedings{Pinisetty2015,
  title={Ti{PEX}: A tool chain for timed property enforcement during execution},
  author={Pinisetty, Srinivas and Falcone, Yli{\`e}s and J{\'e}ron, Thierry and Marchand, Herv{\'e}},
  booktitle={International Conference on Runtime Verification (RV)},
  pages={306--320},
  year={2015},
  publisher={Springer}
}

@inproceedings{Renard2017,
  author    = {Matthieu Renard and
               Antoine Rollet and
               Yli{\`{e}}s Falcone},
  editor    = {Nina Yevtushenko and
               {Ana Rosa} Cavalli and
               H{\"{u}}sn{\"{u}} Yenig{\"{u}}n},
  title     = {{GREP:} Games for the Runtime Enforcement of Properties},
  booktitle = {International
               Conference on Testing Software and Systems (ICTSS)},
  series    = {LNCS},
  volume    = {10533},
  pages     = {259--275},
  publisher = {Springer},
  year      = {2017},
  xurl       = {https://doi.org/10.1007/978-3-319-67549-7\_16},
  xdoi       = {10.1007/978-3-319-67549-7\_16},
}

@inproceedings{Maler2006,
  title={From {MITL} to timed automata},
  author={Maler, Oded and Ni\v{c}kovi\'c, Dejan and Pnueli, Amir},
  editor       = {Eugene Asarin and
                  Patricia Bouyer},
  booktitle={International conference on formal modeling and analysis of timed systems ({FORMATS})},
  series       = {LNCS},
  volume       = {4202},
  pages={274--289},
  year={2006},
  publisher={Springer}
}

@article{HalleV12,
  author       = {Sylvain Hall{\'{e}} and
                  Roger Villemaire},
  title        = {Runtime Enforcement of Web Service Message Contracts with Data},
  journal      = {{IEEE} Trans. Serv. Comput.},
  volume       = {5},
  number       = {2},
  pages        = {192--206},
  year         = {2012},
}

@article{Aceto2023b,
  title={On first-order runtime enforcement of branching-time properties},
  author={Aceto, Luca and Cassar, Ian and Francalanza, Adrian and Ing{\'o}lfsd{\'o}ttir, Anna},
  journal={Acta Informatica},
  pages={1--67},
  year={2023},
  publisher={Springer}
}

@article{Aceto2023,
  title={Bidirectional Runtime Enforcement of First-Order Branching-Time Properties},
  author={Aceto, Luca and Cassar, Ian and Francalanza, Adrian and Ing{\'o}lfsd{\'o}ttir, Anna},
  journal={Logical Methods in Computer Science},
  volume={19},
  year={2023},
  publisher={Episciences. org}
}

@article{Hildebrandt2013,
  title={Contracts for cross-organizational workflows as timed dynamic condition response graphs},
  author={Hildebrandt, Thomas and Mukkamala, Raghava Rao and Slaats, Tijs and Zanitti, Francesco},
  journal={The Journal of Logic and Algebraic Programming},
  volume={82},
  number={5-7},
  pages={164--185},
  year={2013},
  publisher={Elsevier}
}

@article{KrsticNB24,
  author       = {Sr\dj{}an Krsti\'c and
                  Hoang Nguyen and
                  David Basin},
  title        = {Model-driven Privacy},
  journal      = {Proc. Priv. Enhancing Technol.},
  volume       = {2024},
  number       = {1},
  pages        = {314--329},
  year         = {2024},
  xurl          = {https://doi.org/10.56553/popets-2024-0018},
  doi          = {10.56553/POPETS-2024-0018},
}

@inproceedings{CelikTM19,
  author       = {Z. Berkay Celik and
                  Gang Tan and
                  Patrick D. McDaniel},
  title        = {{IoTGuard}: Dynamic Enforcement of Security and Safety Policy in Commodity
                  IoT},
  booktitle    = {26th Annual Network and Distributed System Security Symposium ({NDSS})},
  publisher    = {The Internet Society},
  year         = {2019},
  xurl          = {https://www.ndss-symposium.org/ndss-paper/iotguard-dynamic-enforcement-of-security-and-safety-policy-in-commodity-iot/},
}

@inproceedings{Wang2026,
  title={{AgentSpec}: Customizable runtime enforcement for safe and reliable LLM agents},
  author={Wang, Haoyu and Poskitt, Christopher M. and Sun, Jun},
  booktitle={Proceedings of the 2026 IEEE/ACM 48th International Conference on Software Engineering},
  pages={2938--2950},
  year={2026}
}

@inproceedings{SadeghiJGBM18,
  author       = {Alireza Sadeghi and
                  Reyhaneh Jabbarvand and
                  Negar Ghorbani and
                  Hamid Bagheri and
                  Sam Malek},
  editor       = {Michel Chaudron and
                  Ivica Crnkovic and
                  Marsha Chechik and
                  Mark Harman},
  title        = {A temporal permission analysis and enforcement framework for {Android}},
  booktitle    = {40th International Conference on Software Engineering,
                  ({ICSE})},
  pages        = {846--857},
  publisher    = {{ACM}},
  year         = {2018},
  xurl          = {https://doi.org/10.1145/3180155.3180172},
  doi          = {10.1145/3180155.3180172},
}

@inproceedings{Krstic2020,
  title={A benchmark generator for online first-order monitoring},
  author={Krsti{\'c}, Sr{\dj}an and Schneider, Joshua},
  booktitle={20th International Conference on Runtime Verification ({RV})},
  pages={482--494},
  year={2020},
  organization={Springer}
}

@inproceedings{Moura2021,
  title={The lean 4 theorem prover and programming language},
  author={Moura, Leonardo de and Ullrich, Sebastian},
  booktitle={International Conference on Automated Deduction},
  pages={625--635},
  year={2021},
  organization={Springer}
}

@misc{dogwood,
  title = {Introducing Dogwood: runtime verification for AI agents},
  author = {Marc Brooker and Joseph Tassarotti and Jean-Baptiste Tristan},
  url = {https://aws.amazon.com/blogs/opensource/introducing-dogwood-runtime-verification-for-ai-agents/}
}

@inproceedings{HiltyBP05,
  author       = {Manuel Hilty and
                  David Basin and
                  Alexander Pretschner},
  editor       = {Sabrina De Capitani di Vimercati and
                  Paul F. Syverson and
                  Dieter Gollmann},
  title        = {On Obligations},
  booktitle    = {10th European Symposium on Research
                  in Computer Security ({ESORICS})},
  series       = {LNCS},
  volume       = {3679},
  pages        = {98--117},
  publisher    = {Springer},
  year         = {2005},
  url          = {https://doi.org/10.1007/11555827\_7},
  doi          = {10.1007/11555827\_7},
}

@inproceedings{Debenedetti2024,
  author    = {Edoardo Debenedetti and Jie Zhang and Mislav Balunovi{\'c} and
               Luca Beurer-Kellner and Marc Fischer and Florian Tram{\`e}r},
  title     = {{AgentDojo}: A Dynamic Environment to Evaluate Prompt Injection Attacks
               and Defenses for {LLM} Agents},
  booktitle = {Advances in Neural Information Processing Systems 37 ({NeurIPS}),
               Datasets and Benchmarks Track},
  year      = {2024}
}

@misc{Xiao2026,
  author        = {Yifeng Xiao and Pierluigi Nuzzo},
  title         = {Symbolic Temporal Supervision of {LLM} Agents Using Contracts},
  year          = {2026},
  eprint        = {2609.18128},
  archivePrefix = {arXiv}
}

@misc{Kamath2025,
  author        = {Adharsh Kamath and Sishen Zhang and Calvin Xu and Shubham Ugare and
                   Gagandeep Singh and Sasa Misailovic},
  title         = {Enforcing Temporal Constraints for {LLM} Agents},
  year          = {2025},
  eprint        = {2512.23738},
  archivePrefix = {arXiv}
}

@misc{Shi2025,
  author        = {Tianneng Shi and Jingxuan He and Zhun Wang and Hongwei Li and Linyu Wu and
                   Wenbo Guo and Dawn Song},
  title         = {Progent: Programmable Privilege Control for {LLM} Agents},
  year          = {2025},
  eprint        = {2504.11703},
  archivePrefix = {arXiv}
}

@inproceedings{Debenedetti2026,
  title={Defeating prompt injections by design},
  author={Debenedetti, Edoardo and Shumailov, Ilia and Fan, Tianqi and Hayes, Jamie and Carlini, Nicholas and Fabian, Daniel and Kern, Christoph and Shi, Chongyang and Terzis, Andreas and Tram{\`e}r, Florian},
  booktitle={2026 IEEE Conference on Secure and Trustworthy Machine Learning (SaTML)},
  pages={587--618},
  year={2026},
  organization={IEEE}
}

@misc{Costa2025,
  author        = {Manuel Costa and Boris K{\"o}pf and Aashish Kolluri and Andrew Paverd and
                   Mark Russinovich and Ahmed Salem and Shruti Tople and Lukas Wutschitz and
                   Santiago Zanella-B{\'e}guelin},
  title         = {Securing {AI} Agents with Information-Flow Control},
  year          = {2025},
  eprint        = {2505.23643},
  archivePrefix = {arXiv}
}

@inproceedings{Tsai2025,
  author    = {Lillian Tsai and Eugene Bagdasarian},
  title     = {Contextual Agent Security: A Policy for Every Purpose},
  booktitle = {Workshop on Hot Topics in Operating Systems ({HotOS})},
  year      = {2025}
}

@misc{Xiang2024,
  title={Guardagent: Safeguard llm agents by a guard agent via knowledge-enabled reasoning},
  author={Xiang, Zhen and Zheng, Linzhi and Li, Yanjie and Hong, Junyuan and Li, Qinbin and Xie, Han and Zhang, Jiawei and Xiong, Zidi and Xie, Chulin and Yang, Carl and others},
  eprint={2406.09187},
  archivePrefix={arXiv},
  year={2024}
}

@misc{Chen2025,
  title={Shieldagent: Shielding agents via verifiable safety policy reasoning},
  author={Chen, Zhaorun and Kang, Mintong and Li, Bo},
  eprint={2503.22738},
  archivePrefix={arXiv},
  year={2025}
}

@misc{Li2026,
  author        = {Ying Li and Yanju Chen and Hongbo Wen and Bosi Zhang and Hanzhi Liu and
                   Peiran Wang and Yu Feng and Yuan Tian},
  title         = {{VIGIL}: Runtime Enforcement of Behavioral Specifications in {AI} Agent Skills},
  year          = {2026},
  eprint        = {2606.26524},
  archivePrefix = {arXiv}
}

@misc{Kahani2026,
  author        = {Nafiseh Kahani and Masoud Barati and Diana Addae},
  title         = {Runtime Compliance Verification for {AI} Agents},
  year          = {2026},
  eprint        = {2606.19242},
  archivePrefix = {arXiv}
}

@misc{Palumbo2026,
  author        = {Nils Palumbo and Sarthak Choudhary and Jihye Choi and Guy Amir and
                   Prasad Chalasani and Somesh Jha},
  title         = {Formal Policy Enforcement for Real-World Agentic Systems},
  year          = {2026},
  eprint        = {2602.16708},
  archivePrefix = {arXiv}
}

@misc{agentcore,
  author = {{Amazon Web Services}},
  title  = {Temporal policies -- {Amazon} {Bedrock} {AgentCore}},
  year   = {2026},
  url    = {https://docs.aws.amazon.com/bedrock-agentcore/latest/devguide/policy-temporal.html}
}

@misc{extended,
  title  = {Practical Runtime Enforcement of First-Order Temporal Requirements (extended version)},
  author = {Anonymous},
  year   = {2026},
  url    = {https://anonymous.4open.science/r/EnfFlash-6400/docs/enfflash-extended.pdf}
}

\ifextended
\appendix

\section{Detailed evaluation results}
\label{app:results}

Tables~\ref{tab:micro}--\ref{tab:app3} give the results summarized in
Figure~\ref{fig:eval}.

\begin{table}[t]
\setlength{\tabcolsep}{2pt}
\centering
\caption{Latency in ms---$\mathit{mean}\ (\mathit{max})$---on the largest logs on the \EnfGuard{}
  benchmark suite at maximal real-time acceleration}
\label{tab:micro}
\small
\begin{tabular}{lrrrrrrrr@{\hspace{6pt}\vrule\hspace{1.5pt}\vrule\hspace{6pt}}rr}
\toprule
 & \multicolumn{8}{c@{\hspace{6pt}\vrule\hspace{1.5pt}\vrule\hspace{6pt}}}{Enforcement} & \multicolumn{2}{c}{Monitoring} \\
Policy & \multicolumn{2}{c}{EnfFlash} & \multicolumn{2}{c}{Enfpoly} & \multicolumn{2}{c}{EnfGuard} & \multicolumn{2}{c@{\hspace{6pt}\vrule\hspace{1.5pt}\vrule\hspace{6pt}}}{Dogwood} & \multicolumn{2}{c}{Monpoly} \\
\midrule
\multicolumn{9}{l@{\hspace{6pt}\vrule\hspace{1.5pt}\vrule\hspace{6pt}}}{\emph{\textsc{gdpr} (timeout = 60 s)}} & \multicolumn{2}{l}{} \\
consent & \textbf{0.08} & (1) & 0.11 & (1) & 1.32 & (6) & 0.19 & (1) & 0.14 & (2) \\
deletion & \textbf{0.08} & (1) & -- &  & 0.32 & (3) & -- &  & 0.09 & (1) \\
gdpr & \textbf{0.09} & (2) & -- &  & 3.97 & (13) & -- &  & 0.11 & (3) \\
information & \textbf{0.08} & (1) & -- &  & 1.20 & (7) & -- &  & 0.09 & (2) \\
lawfulness & \textbf{0.08} & (1) & 0.09 & (2) & 1.46 & (7) & 0.19 & (2) & 0.10 & (1) \\
limitation & \textbf{0.08} & (1) & -- &  & 0.35 & (59) & -- &  & -- &  \\
sharing & \textbf{0.08} & (1) & -- &  & 1.14 & (5) & -- &  & 0.10 & (1) \\
\midrule
\multicolumn{9}{l@{\hspace{6pt}\vrule\hspace{1.5pt}\vrule\hspace{6pt}}}{\emph{\textsc{fun} (timeout = 60 s)}} & \multicolumn{2}{l}{} \\
consent & \textbf{0.08} & (2) & -- &  & 0.83 & (5) & -- &  & -- &  \\
consent\_management & \textbf{0.08} & (1) & -- &  & 0.29 & (2) & -- &  & -- &  \\
deletion & \textbf{0.08} & (1) & -- &  & 0.32 & (3) & -- &  & -- &  \\
gdpr & \textbf{0.09} & (1) & -- &  & 3.36 & (12) & -- &  & -- &  \\
information & \textbf{0.08} & (2) & -- &  & 1.55 & (9) & -- &  & -- &  \\
sharing & \textbf{0.08} & (2) & -- &  & 1.23 & (5) & -- &  & -- &  \\
\midrule
\multicolumn{9}{l@{\hspace{6pt}\vrule\hspace{1.5pt}\vrule\hspace{6pt}}}{\emph{\textsc{cluster} (timeout = 60 s)}} & \multicolumn{2}{l}{} \\
dbscan & \textbf{3.78} & (12) & -- &  & t.o. &  & -- &  & -- &  \\
grubbs & \textbf{2.73} & (7) & -- &  & t.o. &  & -- &  & -- &  \\
\midrule
\multicolumn{9}{l@{\hspace{6pt}\vrule\hspace{1.5pt}\vrule\hspace{6pt}}}{\emph{\textsc{agg} (timeout = 120 s)}} & \multicolumn{2}{l}{} \\
p1 & \textbf{0.52} & (4) & -- &  & t.o. &  & 8.71 & (45) & 0.19 & (3) \\
p2 & \textbf{0.53} & (3) & -- &  & t.o. &  & t.o. &  & 0.32 & (3) \\
p3 & \textbf{0.53} & (2) & -- &  & t.o. &  & t.o. &  & 0.39 & (3) \\
p4 & \textbf{1.35} & (15) & -- &  & t.o. &  & -- &  & 0.52 & (3) \\
p5 & \textbf{0.81} & (3) & -- &  & t.o. &  & -- &  & 0.21 & (2) \\
p6 & \textbf{0.36} & (4) & -- &  & t.o. &  & -- &  & 0.44 & (3) \\
\midrule
\multicolumn{9}{l@{\hspace{6pt}\vrule\hspace{1.5pt}\vrule\hspace{6pt}}}{\emph{\textsc{nokia} (timeout = 120 s)}} & \multicolumn{2}{l}{} \\
del-1-2 & \textbf{0.25} & (3) & -- &  & t.o. &  & -- &  & 0.20 & (19) \\
del-2-3 & \textbf{0.25} & (12) & -- &  & t.o. &  & -- &  & 0.12 & (3) \\
del-3-2 & \textbf{0.25} & (32) & -- &  & t.o. &  & -- &  & 0.13 & (3) \\
delete & 0.25 & (22) & \textbf{0.13} & (3) & 0.79 & (11) & 0.23 & (4) & 0.13 & (4) \\
ins-1-2 & \textbf{0.98} & (34) & -- &  & t.o. &  & -- &  & 1.06 & (112) \\
ins-2-3 & \textbf{0.33} & (3) & -- &  & t.o. &  & -- &  & 0.41 & (9) \\
ins-3-2 & \textbf{0.35} & (22) & -- &  & t.o. &  & -- &  & 0.40 & (7) \\
insert & 0.26 & (14) & \textbf{0.20} & (3) & 1.30 & (25) & t.o. &  & 0.20 & (5) \\
script1 & 0.35 & (23) & \textbf{0.15} & (2) & t.o. &  & -- &  & 0.28 & (3) \\
select & 0.25 & (12) & \textbf{0.12} & (3) & 0.82 & (12) & 0.25 & (4) & 0.13 & (3) \\
update & 0.25 & (13) & \textbf{0.13} & (4) & 0.85 & (15) & 0.28 & (5) & 0.13 & (4) \\
\midrule
\multicolumn{9}{l@{\hspace{6pt}\vrule\hspace{1.5pt}\vrule\hspace{6pt}}}{\emph{\textsc{ic} (timeout = 600 s)}} & \multicolumn{2}{l}{} \\
block\_validation\_latency & 0.21 & (32) & -- &  & \textbf{0.15} & (7) & -- &  & t.o. &  \\
clean\_logs & \textbf{0.16} & (23) & -- &  & t.o. &  & -- &  & 0.12 & (5) \\
finalization\_consistency & 0.17 & (32) & -- &  & t.o. &  & \textbf{0.15} & (7) & 0.08 & (4) \\
finalized\_height & t.o. &  & -- &  & \textbf{0.17} & (5) & -- &  & t.o. &  \\
logging\_behavior & 0.78 & (98) & -- &  & \textbf{0.11} & (25) & -- &  & t.o. &  \\
reboot\_count & \textbf{0.10} & (23) & -- &  & t.o. &  & -- &  & 0.11 & (4) \\
replica\_divergence & \textbf{0.20} & (32) & -- &  & 0.23 & (5) & -- &  & 0.08 & (4) \\
unauthorized\_connections & 0.36 & (95) & -- &  & t.o. &  & \textbf{0.10} & (6) & t.o. &  \\
\bottomrule
\end{tabular}

\end{table}

\begin{table}[t]
\setlength{\tabcolsep}{2pt}
\centering
\caption{GDPRSocial page latency in ms (+overhead to non-enforced baseline) for representative application views}
\label{tab:app1}
\small
\begin{tabular}{lrrrrrrrr}
\toprule
View & \multicolumn{2}{c}{$n=10^{2}$} & \multicolumn{2}{c}{$n=10^{3}$} & \multicolumn{2}{c}{$n=10^{4}$} & \multicolumn{2}{c}{$n=10^{5}$} \\
\midrule
View timeline & 19.98 & +9 & 24.63 & +10 & 25.69 & +11 & 26.32 & +15 \\
Post message & 6.82 & +4 & 8.84 & +5 & 15.61 & +12 & 13.31 & +8 \\
Grant consent & 7.00 & +4 & 7.80 & +4 & 10.44 & +6 & 9.68 & +6 \\
Revoke consent & 6.14 & +3 & 6.87 & +4 & 10.63 & +7 & 9.14 & +6 \\
View privacy notices & 7.56 & +2 & 6.68 & +1 & 6.27 & +2 & 7.84 & +2 \\
Request personal data & 5.83 & +1 & 8.79 & +5 & 6.40 & +2 & 7.19 & +3 \\
\bottomrule
\end{tabular}

\end{table}

\begin{table}[t]
\setlength{\tabcolsep}{2.4pt}
\centering
\caption{\textsc{EventManager} page latency in ms: \EnfFlash{} (E) vs. Dogwood (D) vs. Cedar (C). t.o.: timeout (300\,s per request)}
\label{tab:app2}
\footnotesize
\begin{tabular}{l|rrr|rrr|rrr|rrr|rrr}
\toprule
View & \multicolumn{3}{c|}{$n=1$} & \multicolumn{3}{c|}{$n=10$} & \multicolumn{3}{c|}{$n=100$} & \multicolumn{3}{c|}{$n=1000$} & \multicolumn{3}{c}{$n=10000$} \\
 & \multicolumn{1}{c}{E} & \multicolumn{1}{c}{D} & \multicolumn{1}{c|}{C} & \multicolumn{1}{c}{E} & \multicolumn{1}{c}{D} & \multicolumn{1}{c|}{C} & \multicolumn{1}{c}{E} & \multicolumn{1}{c}{D} & \multicolumn{1}{c|}{C} & \multicolumn{1}{c}{E} & \multicolumn{1}{c}{D} & \multicolumn{1}{c|}{C} & \multicolumn{1}{c}{E} & \multicolumn{1}{c}{D} & \multicolumn{1}{c}{C} \\
\midrule
View events & \textbf{27.79} & 29.44 & 1 s & \textbf{1 s} & 1 s & 11 s & \textbf{13 s} & 15 s & 112 s & \textbf{183 s} & 294 s & t.o. & t.o. & t.o. & t.o. \\
View profile & 33.23 & \textbf{33.15} & 4 s & 36.26 & \textbf{34.60} & 3 s & 43.29 & \textbf{38.06} & 3 s & \textbf{68.04} & 78.70 & t.o. & \textbf{114.60} & 340.63 & t.o. \\
Create event & 17.97 & \textbf{15.83} & 32.48 & 19.23 & \textbf{17.32} & 24.73 & \textbf{20.26} & 32.65 & 97.20 & \textbf{24.19} & 85.68 & t.o. & 32.43 & \textbf{28.41} & t.o. \\
Grant consent & 18.17 & 16.15 & \textbf{14.72} & 19.12 & 21.23 & \textbf{13.04} & \textbf{19.55} & 25.50 & 41.53 & 22.38 & \textbf{18.40} & t.o. & \textbf{29.67} & 30.44 & t.o. \\
Revoke consent & 6.14 & 6.82 & \textbf{5.35} & 9.30 & 6.52 & \textbf{4.96} & 6.46 & 9.17 & \textbf{4.79} & \textbf{7.70} & 8.89 & t.o. & \textbf{17.23} & 17.98 & t.o. \\
Analyze event & \textbf{10.03} & 10.08 & 119.38 & 21.44 & \textbf{19.94} & 112.52 & 23.04 & \textbf{21.07} & 113.15 & \textbf{28.64} & 40.55 & t.o. & \textbf{68.29} & 149.32 & t.o. \\
\bottomrule
\end{tabular}

\end{table}

\begin{table}[t]
\setlength{\tabcolsep}{2pt}
\centering
\caption{LLM banking agent (AgentDojo): attacks prevented among those successful
  without enforcement (Prev.), benign runs with a suppressed call (FP), attacked
  runs whose user task is completed (Util.), and latency in ms---$\mathit{mean}\
  (\mathit{max})$---on the recorded logs. $^\dagger$:~suppression part only}
\label{tab:app3}
\small
\begin{tabular}{ll@{\hspace{6pt}\vrule\hspace{6pt}}rrr@{\hspace{6pt}\vrule\hspace{6pt}}rrrrrr}
\toprule
& Policy & \multicolumn{1}{c}{Prev.} & \multicolumn{1}{c}{FP} & \multicolumn{1}{c@{\hspace{6pt}\vrule\hspace{6pt}}}{Util.} & \multicolumn{2}{c}{EnfFlash} & \multicolumn{2}{c}{EnfGuard} & \multicolumn{2}{c}{Dogwood} \\
\midrule
-- & No enforcement & 0/143 & 0/16 & 126/144 & & & & & & \\
B1 & Payee provenance & 77/143 & 0/16 & 131/144 & \textbf{0.05} & (0.3) & t.o. &  & 0.40 & (5) \\
B2 & No leak in subject & 48/143 & 0/16 & 134/144 & \textbf{0.04} & (0.5) & 222.79 & (1278) & 0.65 & (20) \\
B3 & Daily cap per payee & 32/143 & 0/16 & 128/144 & \textbf{0.04} & (0.2) & 18.93 & (100) & 0.17 & (0.8) \\
B4 & Velocity & 16/143 & 0/16 & 121/144 & \textbf{0.03} & (0.1) & 2.23 & (9) & 0.16 & (0.7) \\
B5 & Standing-order redirect & 15/143 & 0/16 & 128/144 & \textbf{0.04} & (0.2) & 10.42 & (50) & 0.08$^\dagger$ & (1) \\
B6 & Password change & 15/143 & 0/16 & 128/144 & \textbf{0.04} & (0.4) & 10.22 & (52) & 0.07$^\dagger$ & (2) \\
B7 & Cool-down after credential change & 7/143 & 0/16 & 124/144 & \textbf{0.04} & (0.1) & 47.66 & (345) & 0.17 & (2) \\
B8 & Quarantine & 0/143 & 0/16 & 126/144 & \textbf{0.04} & (0.2) & 1.35 & (5) & 0.16$^\dagger$ & (0.6) \\
B9 & Audit & 0/143 & 0/16 & 126/144 & \textbf{0.04} & (0.1) & 0.82 & (4) & -- &  \\
B10 & Invoice consistency & 101/143 & 0/16 & 131/144 & \textbf{0.05} & (0.2) & 77.04 & (546) & 0.37 & (4) \\
\midrule
-- & All (B1--B10) & 142/143 & 0/16 & 55/144 & \textbf{0.08} & (0.3) & t.o. &  & 0.98$^\dagger$ & (28) \\
\bottomrule
\end{tabular}

\end{table}

\section{A type system for the enforceable fragment}
\label{app:types}

\newcommand{\GRD}{\mathsf{GRD}}
\newcommand{\Enum}{\mathbb{G}}
\newcommand{\triggers}[1]{\mathrm{trig}(#1)}

The fragment enforced by \EnfFlash{} is defined implicitly by the compilation of
Section~\ref{sec:mfotl-to-ef}: a formula is supported if let-normalization,
\textsc{TypeLet}, and the enforcement rewriting succeed. We now characterize
this fragment by a type system over formulae in let-normal form, which we call
EF-MFOTL, and show that typability coincides with successful compilation.

\paragraph{Contexts.}
A context $\Gamma$ assigns to each let-bound event $p$ a set of capabilities
$\Gamma(p) \subseteq \{\Enum, \Cau, \Sup\}$: $p$ may be used as a guard ($\Enum$),
caused ($\Cau$), or suppressed ($\Sup$). Base events are always enumerable, and
their causability and suppressability are fixed by the signature. We write
$m_\Gamma$ for the set of enumerable predicates of $\Gamma$.

\paragraph{Guardedness.}
The judgment $\Gamma \vdash \varphi : \GRD(x)^p$, for a polarity $p \in \{+,-\}$,
states that the values of $x$ satisfying $p\varphi$ can be enumerated from the
enumerable predicates (Figure~\ref{fig:app:grd}).
For a set of variables $X$, $\Gamma \vdash \varphi : \Enum_X^p$ holds iff
$\Gamma \vdash \varphi : \GRD(x)^p$ for all $x \in X$.
A trigger $(\pi, \psi)$ \emph{guards} $x$ if every disjunct $\kappa \in \pi$ contains
an atom with argument $x$, or $\Gamma \vdash \psi : \GRD(x)^+$.

\begin{figure}[h]
\footnotesize
\begin{mathpar}
\inferrule*[left=\textsc{Pred}]{p \in m_\Gamma \\ x \in \bar t}{\Gamma \vdash p(\bar t) : \GRD(x)^+}
\and
\inferrule*[left=\textsc{Eq}]{ }{\Gamma \vdash x = c : \GRD(x)^+}
\and
\inferrule*[left=\textsc{Vac}]{ }{\Gamma \vdash \top : \GRD(x)^-}
\and
\inferrule*[left=\textsc{Neg}]{\Gamma \vdash \varphi : \GRD(x)^{-p}}{\Gamma \vdash \neg\varphi : \GRD(x)^{p}}
\and
\inferrule*[left=\textsc{And}$^{+}_\mathrm{L}$]{\Gamma \vdash \varphi : \GRD(x)^+}{\Gamma \vdash \varphi\land\psi : \GRD(x)^+}
\and
\inferrule*[left=\textsc{And}$^{+}_\mathrm{R}$]{\Gamma \vdash \psi : \GRD(x)^+}{\Gamma \vdash \varphi\land\psi : \GRD(x)^+}
\and
\inferrule*[left=\textsc{And}$^-$]{\Gamma \vdash \varphi : \GRD(x)^- \\ \Gamma \vdash \psi : \GRD(x)^-}{\Gamma \vdash \varphi\land\psi : \GRD(x)^-}
\end{mathpar}
\caption{Guardedness ($c$ is a constant)}
\label{fig:app:grd}
\end{figure}

\begin{lemma}\label{lem:app:gx}
  \begin{enumerate}
  \item $m_\Gamma \vdash (\pi, \psi) \rightsquigarrow^p_x (\pi', \psi')$ holds for some $(\pi', \psi')$ iff
    every $\kappa\in\pi$ binds $x$ or $\Gamma \vdash \psi : \GRD(x)^p$. In particular, a guard for $x$
    can be extracted from a trigger iff the trigger guards $x$.
  \item $\guards^{m_\Gamma}_X(\Phi) \neq \bot$ iff $\Gamma \vdash \Phi : \Enum_X^+$.
  \end{enumerate}
  \leanref{Lemma_A_1}
\end{lemma}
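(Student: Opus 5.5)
The plan is to prove (2) first, for a single variable and then for sets, and derive (1) from it.

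The core auxiliary claim is, for a single variable and polarity $p$:
\[
  m_\Gamma \vdash \Phi \rightsquigarrow^p_{\{x\}} (\pi,\varphi)\ \text{for some}\ (\pi,\varphi)
  \quad\text{iff}\quad
  \Gamma \vdash \Phi : \GRD(x)^p .
\]
We prove it by induction on $\Phi$, matching the rules of Figure~\ref{fig:jointguards} (with $X=\{x\}$) against those of Figure~\ref{fig:app:grd}. Rule \textsc{None} cannot fire, since $X \neq \emptyset$. The cases then line up as follows.
\begin{itemize}
\item \textsc{Pred}: the side condition is $p \in m_\Gamma$ and $\{x\}\subseteq \bar t$, which is exactly the premise of the typing rule \textsc{Pred}.
\item \textsc{Eq}: $\{x\}\subseteq\{y\}$ forces $y=x$, matching \textsc{Eq} in the type system.
\item \textsc{Vac}: only the negative polarity applies, in both systems.
\item \textsc{Neg}: flips the polarity in both systems, so the induction hypothesis applies directly.
\item \textsc{And}$^-$: both conjuncts must derive at $\{x\}$, mirroring the typing rule \textsc{And}$^-$.
\end{itemize}

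The delicate case is \textsc{And}$^+$. There $X \subseteq X_1 \cup X_2$ with $X=\{x\}$, so $x$ lies in some $X_i$. The sub-derivation at $X_i$ restricts to $\{x\}$, which gives \textsc{And}$^+_{\mathrm{L}}$ or \textsc{And}$^+_{\mathrm{R}}$. For the converse, we take $X_i=\{x\}$ for the guarded conjunct and $X_j=\emptyset$ for the other, closing the latter by \textsc{None}.

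That converse direction needs a \emph{monotonicity lemma}: if $m\vdash\Phi\rightsquigarrow^p_X(\pi,\varphi)$ and $Y\subseteq X$, then $m\vdash\Phi\rightsquigarrow^p_Y(\pi',\varphi')$ for some $(\pi',\varphi')$. We prove it by induction on the derivation. Every side condition is of the form ``$X\subseteq\ldots$'' and so is preserved when $X$ shrinks. When $Y=\emptyset$, \textsc{None} applies directly.

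For (2) with a set $X$, we need the \emph{splitting} direction: derivations for each $\{x\}$, $x\in X$, combine into one for $X$. This is the main obstacle, because \textsc{And}$^-$ requires both conjuncts at the same $X$, and \textsc{Neg} must commute with the polarity. We prove it by induction on $\Phi$, taking as hypothesis that $\Phi$ has a derivation at every singleton $\{x\}$ for $x\in X$, with $X\neq\emptyset$. The cases are:
\begin{itemize}
\item In the atomic cases, the singleton derivations already give the side conditions for all of $X$. For equalities this forces $X=\{x\}$.
\item Under \textsc{And}$^-$, each conjunct has derivations at all singletons, so the induction hypothesis applies to each.
\item Under \textsc{And}$^+$, we partition $X$ into the variables whose singleton derivation goes left, $X_1$, and those that go right, $X_2$. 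We apply the induction hypothesis to each part, using \textsc{None} if a part is empty.
\end{itemize}
The reverse direction of (2) follows from monotonicity applied to each singleton. We then unfold $\Enum^+_X$.

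For (1), we unfold the definition of $\rightsquigarrow^p_x$ on triggers. Its first disjunct, ``every $\kappa\in\pi$ binds $x$'', appears verbatim on both sides. Its second disjunct is ``$\psi\rightsquigarrow^p_{\{x\}}$ derivable'', which by the singleton claim is $\Gamma\vdash\psi:\GRD(x)^p$. The ``in particular'' statement is the case $p=+$ together with the definition of a trigger guarding $x$.

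Since the statement is mechanized, we would mirror these inductions in Lean. The same inductions also need an inversion principle on the two derivation relations.
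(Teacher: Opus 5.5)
Your argument is correct. The paper gives no prose proof of this lemma (it only refers to the Lean development), and your route is the natural one: characterize single-variable extraction by $\mathsf{GRD}(x)^p$ by induction on $\Phi$, prove monotonicity of extraction in $X$, merge singleton derivations by partitioning $X$ at \textsc{And}$^+$, and unfold the trigger-level definition for (1). One small slip: monotonicity is needed in the extraction-to-typing direction of the \textsc{And}$^+$ case (to restrict the sub-derivation at $X_i$ to $\{x\}$, which is also why induction on $\Phi$ rather than on derivations is the right choice) and in the merging step, whereas the typing-to-extraction direction only needs \textsc{None} for the other conjunct.
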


\paragraph{Causation and suppression.}
The judgment $\Gamma \vdash \varphi : \alpha \triangleright \Delta$ with
$\alpha \in \{\Cau, \Sup\}$ states that the clauses $\Delta$ cause (resp.\ suppress)
$\varphi$ (Figure~\ref{fig:app:cs}). We write $\Delta^\psi$ for $\{(\pi, \varphi \land \psi) \Rightarrow \varepsilon \mid (\pi, \varphi) \Rightarrow \varepsilon \in \Delta\}$,
$\Delta[c/x]$ for the substitution of a constant $c$
for $x$ in every clause (where defined; we use a canonical constant $0$), and $\Delta{\downarrow}_x$ for
$\{(\pi', \psi') \Rightarrow \varepsilon \mid (\pi, \psi) \Rightarrow \varepsilon \in \Delta,\ m_\Gamma \vdash (\pi, \psi) \rightsquigarrow^+_x (\pi', \psi')\}$. A clause set is \emph{unconditional}
if all its clauses have the form $(\top,\top) \Rightarrow e(\bar t)$; then
$\Eventually_{[b,b]}\Delta$ and $\Next^n\Delta$ defer all effects.

\begin{figure*}[t]
\footnotesize
\begin{mathpar}
\inferrule*[left=$\top^\Cau$]{ }{\Gamma \vdash \top : \Cau \triangleright \emptyset}
\and
\inferrule*[left=\textsc{Ev}$^\Cau$]{e \in \Cau}{\Gamma \vdash e(\bar t) : \Cau \triangleright \{(\top,\top) \Rightarrow e(\bar t)\}}
\and
\inferrule*[left=\textsc{Ev}$^\Sup$]{e \in \Sup}{\Gamma \vdash e(\bar t) : \Sup \triangleright \{(e(\bar t),\top) \Rightarrow \neg e(\bar t)\}}
\and
\inferrule*[left=\textsc{Let}$^\Cau$]{\Cau \in \Gamma(e)}{\Gamma \vdash e(\bar t) : \Cau \triangleright \{(\top,\top) \Rightarrow \mathtt{Cau}\_e(\bar t)\}}
\and
\inferrule*[left=\textsc{Let}$^\Sup$]{\Sup \in \Gamma(e)}{\Gamma \vdash e(\bar t) : \Sup \triangleright \{(e(\bar t),\top) \Rightarrow \mathtt{Sup}\_e(\bar t)\}}
\and
\inferrule*[left=\textsc{Neg}$^\Cau$]{\Gamma \vdash \varphi : \Sup \triangleright \Delta}{\Gamma \vdash \neg\varphi : \Cau \triangleright \Delta}
\and
\inferrule*[left=\textsc{Neg}$^\Sup$]{\Gamma \vdash \varphi : \Cau \triangleright \Delta}{\Gamma \vdash \neg\varphi : \Sup \triangleright \Delta}
\and
\inferrule*[left=\textsc{And}$^\Cau$]{\Gamma \vdash \varphi : \Cau \triangleright \Delta_1 \\ \Gamma \vdash \psi : \Cau \triangleright \Delta_2}
  {\Gamma \vdash \varphi\land\psi : \Cau \triangleright \Delta_1 \cup \Delta_2}
\and
\inferrule*[left=\textsc{And}$^\Sup$]{\Gamma \vdash \varphi_j : \Sup \triangleright \Delta \\ \varphi_i~\text{present for}~i \neq j}
  {\Gamma \vdash \textstyle\bigwedge_i\varphi_i : \Sup \triangleright \Delta^{\bigwedge_{i \neq j}\varphi_i}}
\and
\inferrule*[left=\textsc{Ex}$^\Cau$]{\Gamma \vdash \varphi : \Cau \triangleright \Delta}{\Gamma \vdash \exists x.\,\varphi : \Cau \triangleright \Delta[0/x]}
\and
\inferrule*[left=\textsc{Ex}$^\Sup$]{\Gamma \vdash \varphi : \Sup \triangleright \Delta \\ \text{every trigger of $\Delta$ guards $x$}}
  {\Gamma \vdash \exists x.\,\varphi : \Sup \triangleright \Delta{\downarrow}_x}
\and
\inferrule*[left=\textsc{Fut}$_{\Eventually}^{\Cau}$]{\Gamma \vdash \varphi : \Cau \triangleright \Delta \\ \Delta \neq \emptyset~\text{unconditional} \\ a \le b \\ 1 \le b < \infty}
  {\Gamma \vdash \Eventually_{[a,b]}\varphi : \Cau \triangleright \Eventually_{[b,b]}\Delta}
\and
\inferrule*[left=\textsc{Fut}$_{\Next}^{\Cau}$]{\Gamma \vdash \varphi : \Cau \triangleright \Delta \\ \Delta \neq \emptyset~\text{unconditional} \\ 1 \le b}
  {\Gamma \vdash \Next_{[0,b]}\varphi : \Cau \triangleright \Next\Delta}
\and
\inferrule*[left=\textsc{Fut}$_{\Next^n}^\Cau$]{\Gamma \vdash \varphi : \Cau \triangleright \Delta \\ \Delta~\text{unconditional} \\ 1 \le n}
  {\Gamma \vdash \Next^n\varphi : \Cau \triangleright \Next^n\Delta}
\end{mathpar}
\caption{Causation and suppression}
\label{fig:app:cs}
\end{figure*}

\paragraph{Let bindings.}
Let $\LLet{e_1(\bar x_1)}{\varphi_1}{}\ldots\LLet{e_k(\bar x_k)}{\varphi_k}{}\,\Always\chi_1 \land \dots \land \Always\chi_n$
be in let-normal form, and $\chi = \chi_1 \land \dots \land \chi_n$. For each let body $\varphi_i$, let $\hat\varphi_i$ be the
formula producing its tuples ($\psi$ for $\exists\bar y.\,\psi$, where $\psi$ has no
leading existential; $\varphi_r$ for $\varphi_l \Since_I \varphi_r$; $\varphi$ for $\Prev_I\varphi$ and for
an aggregation $\bar o \gets \omega(\bar t;\bar g)\,\varphi$, whose results $\bar o$ are then excluded from $\bar x_i$).
We write $\Gamma_{<i}$ for $\Gamma$ restricted to $e_1, \ldots, e_{i-1}$, and require that
\begin{itemize}
\item $\Enum \in \Gamma(e_i)$ iff $\Gamma_{<i} \vdash \hat\varphi_i : \Enum^+_{\bar x_i\bar y}$ and, if
  $\varphi_i = \varphi_l \Since_I \varphi_r$, also $\Gamma_{<i} \vdash \varphi_l : \Enum^-_{\bar x_i}$
  (the \kw{remove} clause of the table), where $\bar x_i\bar y$ stands for $\fv(\hat\varphi_i)$ for an aggregation;
\item if $\varphi_i$ is temporal or an aggregation, then $\Enum \in \Gamma(e_i)$, and the terms $\bar t$ of an
  aggregation only use variables of $\hat\varphi_i$;
\item $\Cau \in \Gamma(e_i)$ (resp.\ $\Sup \in \Gamma(e_i)$) iff $\Enum \in \Gamma(e_i)$ and
  $\Gamma_{<i} \vdash \mathrm{cau}(\varphi_i) : \Cau \triangleright \Delta_i$ (resp.\
  $\Gamma_{<i} \vdash \mathrm{sup}(\varphi_i) : \Sup \triangleright \Delta_i$) for some $\Delta_i$,
\end{itemize}
where $\mathrm{cau}(\psi) = \mathrm{sup}(\psi) = \psi$ for present bodies,
$\mathrm{cau}(\varphi_l \Since_I \varphi_r) = \varphi_r$ if $0 \in I$ (and undefined otherwise),
$\mathrm{sup}(\varphi_l \Since_I \varphi_r) = \varphi_l \lor \varphi_r$ if $0 \in I$ and $\varphi_l$ otherwise,
and neither is defined for $\Prev_I$ or aggregations. Lets without $\Enum$ are \emph{filter-only}: they
can be used in filters, but not as guards.

\begin{definition}[EF-MFOTL]
  A formula $\Always\varphi$ is in EF-MFOTL with clause set $\Delta$ if its let-normal
  form, with lets typed by some context $\Gamma$ as above, satisfies
  $\Gamma \vdash \chi : \Cau \triangleright \Delta$.
\end{definition}

\begin{theorem}\label{thm:app:equiv}
  $\Always\varphi$ is in EF-MFOTL with clause set $\Delta$ iff the compilation of
  Section~\ref{sec:mfotl-to-ef} succeeds on $\Always\varphi$ with candidate clause set $\Delta$,
  i.e., \textsc{TypeLet} accepts all lets, and
  $\Gamma \vdash \chi \hookrightarrow^\Cau \mathcal C$ for the resulting $\Gamma$ and some $\mathcal C \ni \Delta$.\leanref{Theorem_A_2}
\end{theorem}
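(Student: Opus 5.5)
The plan is to prove both directions by structural induction, split into three layers that mirror the definitions: guardedness, clause derivation, and let typing. For the guardedness layer, I invoke Lemma~\ref{lem:app:gx}. Part (2) identifies $\guards^{m_\Gamma}_X(\Phi) \neq \bot$ with $\Gamma \vdash \Phi : \Enum^+_X$. Part (1) identifies extractability of a guard for $x$ from a trigger $(\pi,\psi)$ with the trigger guarding $x$. These two facts are exactly what is needed to match the guard tests inside \textsc{TypeLet} and the side condition of \textsc{Ex}$^\Sup$ against their type-system counterparts. For the \textsc{Since} case I also need the negative version, $\guards^{m_\Gamma}_X(\neg\varphi_l)\neq\bot$ iff $\Gamma\vdash\varphi_l:\Enum^-_X$. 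This follows from part (2) together with rule \textsc{Neg} of both systems, since $\GRD(x)^+$ of $\neg\varphi_l$ is $\GRD(x)^-$ of $\varphi_l$.

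The core step is a lemma relating the two clause-producing judgments. For a fixed context $\Gamma$, I claim
\[\Gamma\vdash\varphi:\alpha\triangleright\Delta \iff \exists\mathcal C.\ \Gamma\vdash\varphi\hookrightarrow^\alpha\mathcal C \wedge \Delta\in\mathcal C.\]
I prove this by induction on $\varphi$, or equivalently on the derivations, comparing Figure~\ref{fig:rewrite} with Figure~\ref{fig:app:cs} rule by rule. The atomic rules (\textsc{Ev}, \textsc{Let}, $\top^\Cau$) and \textsc{Neg} are immediate. For \textsc{And}$^\Cau$, an element of $\bigotimes_i\mathcal C_i$ is a union of one choice per conjunct, and this corresponds to repeated binary \textsc{And}$^\Cau$ after reassociating $\bigwedge_i$. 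For \textsc{And}$^\Sup$, the map $[\pi,\psi,\varepsilon\mapsto(\pi,\psi\wedge\bigwedge_{i\ne j}\varphi_i)\Rightarrow\varepsilon]$ applied to a member $\Delta$ of $\mathcal C_j$ is literally $\Delta^{\bigwedge_{i\ne j}\varphi_i}$. \textsc{Ex}$^\Cau$ is likewise the substitution $\Delta[0/x]$. For \textsc{Ex}$^\Sup$, the filtering to $\mathcal C'$ keeps precisely the $\Delta$ all of whose triggers admit $\rightsquigarrow^+_x$, which by Lemma~\ref{lem:app:gx}(1) means every trigger guards $x$, and the resulting set is $\Delta{\downarrow}_x$. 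The future rules match once "unconditional" is read as the $\mathcal C'$ filter. There is a small discrepancy to reconcile in \textsc{Fut}$_\Eventually$: the type rule has the extra side condition $a\le b$. I would argue that it holds automatically because intervals are non-empty, so $[a,b]\in\Intervals$ forces $a\le b$.

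Finally, I lift the lemma to whole formulas. I show by induction on the let sequence $e_1,\dots,e_k$ that \textsc{TypeLet} accepts every let and produces $\Gamma_{\mathrm{alg}}(e_i)=(g,c,s)$ iff the type system admits a context $\Gamma$ with $\Enum\in\Gamma(e_i)\Leftrightarrow g$, $\Cau\in\Gamma(e_i)\Leftrightarrow c$, and $\Sup\in\Gamma(e_i)\Leftrightarrow s$. This proceeds case by case on the outermost operator of the let body:
\begin{itemize}
\item for $\Once$ and $\Since$, acceptance of the let corresponds to the guard requirements, and the causation and suppression families correspond to $\mathrm{cau}/\mathrm{sup}$;
\item in the $\Since$ case with $0\in I$, suppression is derived through the product $\otimes$ of the $\varphi_l$ and $\varphi_r$ suppression families; since $\varphi_l\lor\varphi_r=\neg(\neg\varphi_l\wedge\neg\varphi_r)$, this matches \textsc{Neg}$^\Sup$ followed by \textsc{And}$^\Cau$;
\item for $\Prev$ and aggregation, neither causation nor suppression is defined in either system;
\item a present body that fails the guard test becomes filter-only in both systems.
\end{itemize}
Nonemptiness of $\mathcal C^\Cau(p)$ coincides with existence of some $\Delta_i$ by the core lemma. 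Gating with $\mathtt{Cau}\_p$ or $\mathtt{Sup}\_p$ does not affect existence, so it can be ignored for this equivalence. Applying the core lemma to $\chi$ at polarity $\Cau$ then gives the theorem.

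I expect the main obstacle to be the $\Once$ case, together with the exact form of the guardedness conditions. Algorithm~\ref{alg:gen} guards the $\Once$ body with $\fv(\varphi)$, whereas the type system uses $\bar x_i\bar y$ and, for $\Since$, guards with $\bar x\cup\fv(\chi)$. I would reconcile these by showing that the variable sets coincide in let-normal form, because the existentials stripped by \textsc{StripExists} are exactly the $\bar y$. I would also handle the nondeterministic choice in $\guards$ by noting that only nonemptiness matters there.
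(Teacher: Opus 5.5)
Your proposal is correct and follows the same route as the paper's proof sketch. The paper proves by induction on derivations that $\Gamma\vdash\varphi:\alpha\triangleright\Delta$ iff $\Delta\in\mathcal C$ for some derivable $\Gamma\vdash\varphi\hookrightarrow^\alpha\mathcal C$, uses Lemma~\ref{lem:app:gx}(1) for the side condition of \textsc{Ex}$^\Sup$ and Lemma~\ref{lem:app:gx}(2) for the enumerability conditions on let bodies, and matches the let capabilities against what \textsc{TypeLet} computes, just as you do. Your extra reconciliations simply spell out details the paper leaves implicit: the negative guard for the remove clause of $\Since$, the $\otimes$ reading of $\mathrm{sup}(\varphi_l\Since_I\varphi_r)$, the automatic $a\le b$, and the alignment of the guarded variable sets.
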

\begin{proof}[Proof sketch]
  The rules of Figure~\ref{fig:app:cs} mirror those of Figure~\ref{fig:rewrite}, producing a single
  clause set instead of a set of alternatives: by induction on derivations,
  $\Gamma \vdash \varphi : \alpha \triangleright \Delta$ iff
  $\Gamma \vdash \varphi \hookrightarrow^\alpha \mathcal C$ for some $\mathcal C \ni \Delta$. The side condition of
  \textsc{Ex}$^\Sup$ is equivalent to that of the rewrite rule by Lemma~\ref{lem:app:gx}(1), and the
  enumerability conditions on let bodies are equivalent to the success of $\guards$ by
  Lemma~\ref{lem:app:gx}(2). Given a typing, the guards and realizations are obtained from
  the typing derivations of the let bodies; conversely, capabilities are read off the
  guards and realizations computed by \textsc{TypeLet}.
\end{proof}

Together with Theorem~\ref{thm:compile}, every formula of EF-MFOTL whose clause set passes
the checks of Section~\ref{sec:edg} is soundly enforced by the compiled EF program.
Lemma~\ref{lem:app:gx} and Theorem~\ref{thm:app:equiv} are formalized in Lean~4, together with the
soundness of compilation and of the enforcement loop.

\begin{example}
  Consider $\varphi_\mathsf{del}$ from Example~\ref{ex:formulae}, with let-normal form
  $\Always \neg\exists d, u.~\varphi_0$ for
  $\varphi_0 = \deleteReq(d,u) \land \neg\Eventually_{[0,30]}\delete(d,u)$ and no lets. By
  \textsc{Ev}$^\Cau$ and \textsc{Fut}$_{\Eventually}^{\Cau}$,
  $\Gamma \vdash \Eventually_{[0,30]}\delete(d,u) : \Cau \triangleright
  \{(\top,\top) \Rightarrow \Eventually_{[30,30]}\delete(d,u)\}$; by \textsc{Neg}$^\Sup$ and
  \textsc{And}$^\Sup$,
  $\Gamma \vdash \varphi_0 : \Sup \triangleright
  \{(\top, \deleteReq(d,u)) \Rightarrow \Eventually_{[30,30]}\delete(d,u)\}$.
  The trigger guards $u$ and $d$ (rule \textsc{Pred}, as base events are enumerable), so two
  applications of \textsc{Ex}$^\Sup$ and \textsc{Neg}$^\Cau$ yield
  $\Gamma \vdash \neg\exists d,u.~\varphi_0 : \Cau \triangleright
  \{(\deleteReq(d,u), \top) \Rightarrow \Eventually_{[30,30]}\delete(d,u)\}$.
\end{example}
\fi

\end{document}